\def \h#1{\widehat{#1}}
\def \t#1{\widetilde{#1}}
\newtheorem{lemma}{Lemma}
\newtheorem{theorem}{Theorem}
\newtheorem{proposition}{Proposition}
\numberwithin{equation}{section}
\numberwithin{lemma}{section}
\title{Solutions to the modified Korteweg-de Vries equation}
\author{Da-jun Zhang,\footnote{Corresponding author. E-mail: djzhang@staff.shu.edu.cn}
~~Song-lin Zhao,~~Ying-ying Sun,~~Jing Zhou\\
{\small\it Department of Mathematics,
 Shanghai University, Shanghai 200444,  P.R. China}}
\begin{document}
\maketitle
\date{}


%
%
%
%
%
%
%
%

\begin{abstract}
This is a continuation of Ref.\cite{ZDJ-arxiv}(arXiv:nlin.SI/0603008). In the present paper
we review solutions to the modified Korteweg-de Vries equation in
terms of Wronskians. The Wronskian entry vector needs to satisfy
a matrix differential equation set which contains complex operation.
This is different from the case of the Korteweg-de Vries equation.
We introduce an auxiliary matrix to deal with the complex operation
and then we are able to give complete solution expressions for the matrix differential
equation set. The obtained solutions to the modified Korteweg-de
Vries equation  can simply be categorized by  two types: solitons and
breathers, together with their limit cases. Besides, we give
rational solutions to the modified Korteweg-de Vries equation in
Wromskian form. This is derived with the help of the Galilean
transformed modified Korteweg-de Vries equation. Finally, typical
dynamics of the obtained solutions is analyzed and illustrated.
We list out the obtained solutions and their corresponding basic Wronskian vectors
in the conclusion part.

\vspace{0.5cm}
\noindent \textbf{Keywords}: {The modified Korteweg-de Vries equation, Wronskian,  breathers, rational solutions, dynamics}

\noindent \textbf{MSC 2010}: {37K40, 35Q35, 35Q35}

\noindent \textbf{PACS}: 02.30.Ik, 02.30.Jr, 05.45.Yv
\end{abstract}

\tableofcontents

\newpage

\section{Introduction}	

It is well known that the modified Kordeweg-de Vries (mKdV)
equation,
\begin{equation}
v_t+6\varepsilon v^2v_x + v_{xxx}=0,~~ \varepsilon=\pm 1,  \label{mKdV-pm}
\end{equation}
played an important role in constructing infinitely many conservation
laws\cite{MGK-1968} and Lax pair for the Korteweg-de Vries (KdV) equation.
The Lax pair led to the breakthrough of the Inverse Scattering Transform(IST)\cite{IST-1967}
and then the Soliton Theory.
The mKdV equation is also famous for its special soliton behavior, breathers.
Actually, the mKdV equation, or its Galilean transformed version,
\begin{equation}
{V_t}+ 12\varepsilon {v_0}V{V_X} + 6\varepsilon {V^2}{V_X} + {V_{XXX}} = 0,~~\varepsilon=\pm 1, \label{kdv-mkdv-pm}
\end{equation}
which is usually referred to as the mixed KdV-mKdV equation,
arose in many physics contexts, such as
anharmonic lattices\cite{
Soliton fission-72}, Alf\'ven waves\cite{Kono-Alven}, ion acoustic
solitons\cite{ion acoustic-40,Ion acoustic-69,Ion acoustic-79},
traffic jam\cite{jamming-74,jamming-70}, Schottky barrier transmission lines\cite{Schottky barrier-82},
thin ocean jets\cite{thin jets-73a,thin jets-73b}, internal waves\cite{internal solitary-36,Internal solitary-45},
heat pulses in solids\cite{heat pulses-63}, and so on.

With regard to exact solutions, many classical solving methods,
such as  Hirota's bilinear method\cite{Hirota-mKdV}, the IST\cite{Wadati-mKdV-IST,Tanaka-mKdV-IST},
commutation methods\cite{Gesztesy-Tams-1991}
and Wronskian technique\cite{Satsuma,Nimmo-Freeman-JPA,Gesztesy-Rmp-1991}
has been used to solve the mKdV equation. For more references
on  solutions of the mKdV equation, one can refer to Ref.\cite{Gesztesy-Tams-1991} and the references therein.

In general, for a soliton equation with bilinear form, its solutions
can be expressed through a Wronskian by imposing certain conditions on
its Wronskian entry
vector\cite{Freeman-Nimmo-KP,Sirianunpiboon-1988,Ma-You-KdV,ZDJ-arxiv}.
For convenience in the following we refer to such conditions as
condition equation set(CES). Usually for an (1+1)-dimensional soliton equation
the crucial  part in its CES is a coefficient matrix and the matrix
and its any similar form leads to same solutions for the corresponding
soliton equation. Thus it is possible to give a complete
classification (or structure) for the solutions of the
soliton equation by considering the canonical form of the
coefficient matrix\cite{Ma-You-KdV,ZDJ-arxiv}. It has been
understood that the solutions generated from a Jordan form
coefficient matrix are related to the solutions generated from a
diagonal form coefficient matrix via some limiting
proceure\cite{ZDJ-arxiv}. Therefore the latter solutions can be
referred to as limit solutions. Actually, from the viewpoint of the
IST, $N$ solitons are identified by $N$ distinct eigenvalues of the
corresponding spectral problem, or in other words, $N$ distinct
simple poles $\{k_j\}$ of transparent coefficient $\frac{1}{a(k)}$.
When $\{k_j\}$ are multiple-poles, the related multiple-pole
solution can be obtained through a limiting procedure like
$k_2\rightarrow k_1$ from simple-pole solution. This limiting
procedure is easily realized for solutions in Wronskian
form\cite{ZDJ-arxiv}. Such a procedure is also helpful to understand
the dynamics of limit solutions\cite{ZZZ-2009-PLA,ZDJ-TMP,KZS-sg}.

In Ref.\cite{ZDJ-arxiv} we mentioned four topics related to
solutions in Wronskian form: to find the CES, to solve the CES, to
describe relations between different kinds of solutions, and to
discuss dynamics of the solutions. In the present paper, following this line,
we will try to review the Wronskian solutions of the mKdV($\varepsilon=1$) equation
\begin{equation}
v_t+6v^2v_x + v_{xxx}=0, \label{mKdV}
\end{equation}
together with its Galilean transformed version,
\begin{equation}
{V_t}+ 12{v_0}V{V_X} + 6{V^2}{V_X} + {V_{XXX}} = 0.\label{kdv-mkdv}
\end{equation}

The main results of the paper are the following.
\begin{itemize}
\item{The CES for the mKdV equation can be given by
\begin{subequations}
\label{mKdV-CES}
\begin{align}
\varphi_{x}=&\mathbb{B}\bar{\varphi}, \label{mKdV-CES-b}\\
\varphi_{t}=&-4\varphi_{xxx},\label{mKdV-CES-c}
\end{align}
\end{subequations}
where $\varphi$ is the $N$-th order Wronskian entry vector, bar stands for complex conjugate
and $\mathbb{B}$ is a nontrivial $N\times N$ constant complex matrix.
We note that there is a complex conjugate involved in the CES
and this makes difficulties when solving  the CES.
}
\item{There are only two kinds of solutions led by the CES \eqref{mKdV-CES}:
Solitons (together with their limit case) and breathers (together with their limit case).
No rational solutions arise from \eqref{mKdV-CES} due to $|\mathbb{B}|\neq 0$.
}
\item{Rational solutions to the mKdV equation \eqref{mKdV}}
can be derived by means of the KdV-mKdV equation \eqref{kdv-mkdv}.
\item{Dynamics of obtained solutions is analyzed and illustrated.}
\end{itemize}

The paper is organized as follows. In Sec.2, we give a general CES
of the mKdV equation and simplify the CES by introducing  an auxiliary equation.
Then in Sec.3 we solve the CES and classify the solutions as solitons and breathers.
In Sec.4 we derive rational solutions to the mKdV equation. This is done with the help of
the KdV-mKdV equation \eqref{kdv-mkdv}.
Sec.5 consists of dynamic analysis and illustrations.
Finally, in the conclusion section we list out the obtained solutions and their corresponding basic Wronskian vectors.

\section{Wronskian solutions of the mKdV equation}

\subsection{Preliminary}
\label{sec:2.1}

An $N\times N$ Wronskian is defined as
\begin{equation}
W(\phi _{1},\phi _{2},\cdots ,\phi _{N})
=|\phi,\phi^{(1)},\cdots,\phi^{(N-1)}|=\left|
\begin{array}{cccc}
\phi _{1}^{(0)} & \phi _{1}^{(1)} & \cdots  & \phi _{1}^{(N-1)} \\
\phi _{2}^{(0)} & \phi _{2}^{(1)} & \cdots  & \phi _{2}^{(N-1)} \\
\vdots  & \vdots  &  \vdots & \vdots \\
\phi _{N}^{(0)} & \phi _{N}^{(1)} & \cdots  & \phi _{N}^{(N-1)}%
\end{array}
\right|,\label{Wro-def}
\end{equation}
where $\phi_{j}^{(l)}=\partial^l \phi_j/{\partial x}^l$ and
$\phi=(\phi_1, \phi_2,\cdots,\phi_N)^T$ is called the entry vector
of the Wronskian. Usually we use the compact
form\cite{Freeman-Nimmo-KP}
\begin{equation}
W(\phi)=|\phi,
\phi^{(1)},\cdots,\phi^{(N-1)}|=|0,1,\cdots,N-1|=|\widehat{N-1}|,
\label{wronskian}
\end{equation}
where $\widehat{N-j}$ indicates the set of consecutive
columns $0,1,\cdots,N-j$.
In the paper we also employ the notation $\t{N-j}$ to indicate the set of consecutive
columns $1,2,\cdots,N-j$.

A Wronskian can provide simple forms for its derivatives
and this advantage admits direct verification of solutions that are expressed in terms of Wronskians.
The following matrix properties are usually necessary in Wronskian verification.
\begin{proposition}\cite{Zhang-Hietarinta,ZDJ-arxiv}
\label{Prop 2.1}
Suppose that $\Xi$ is an
  $N\times N$ matrix with column vector set $\{\Xi_j\}$;  $\Omega$ is an
  $N\times N$ operator matrix with column vector set $\{\Omega_j\}$ and
  each entry $\Omega_{j,s}$ being an operator. Then
  we have
\begin{equation}
\sum^N_{j=1} |\Omega_j * \Xi|
=\sum^N_{j=1}|(\Omega^T)_{j} * \Xi^T|,
\label{eq-P-2.1}
\end{equation}
 where for any $N$-order column vectors $A_j$ and $B_j$ we define
\begin{equation}
A_j \circ B_j=(A_{1,j}B_{1,j},~A_{2,j}B_{2,j},\cdots, A_{N,j}B_{N,j})^T
\label{AjBj}
\end{equation}
 and
\begin{equation}
|A_j * \Xi|=|\Xi_1,\cdots,\Xi_{j-1},~A_j \circ\Xi_j,~\Xi_{j+1},\cdots, \Xi_{N}|.
\label{AjXi}
\end{equation}
\end{proposition}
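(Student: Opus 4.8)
The plan is to reduce the identity to the classical fact that a determinant equals the determinant of its transpose, refined so as to keep track of the single entry in each term that carries an operator. I would write $\Xi=(\xi_{is})$ and $\Omega=(\omega_{is})$ with $i$ the row and $s$ the column index, so that the $j$-th column of $\Xi$ is $\Xi_j=(\xi_{1j},\dots,\xi_{Nj})^T$ and, by the definition \eqref{AjBj}, the component-wise product $\Omega_j\circ\Xi_j$ has $i$-th entry $\omega_{ij}\xi_{ij}$ (the operator $\omega_{ij}$ applied to the scalar $\xi_{ij}$). Expanding the left-hand determinant $|\Omega_j * \Xi|$ of \eqref{AjXi} by the Leibniz (permutation) formula, only the $j$-th column is altered, so
\begin{equation}
\sum_{j=1}^N |\Omega_j * \Xi|=\sum_{\sigma\in S_N}\mathrm{sgn}(\sigma)\sum_{j=1}^N \bigl(\omega_{\sigma(j),j}\,\xi_{\sigma(j),j}\bigr)\prod_{s\neq j}\xi_{\sigma(s),s}.
\end{equation}

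For the right-hand side I would observe that $(\Omega^T)_j$ is the $j$-th row of $\Omega$ and $(\Xi^T)_j$ the $j$-th row of $\Xi$, so the same expansion yields
\begin{equation}
\sum_{j=1}^N |(\Omega^T)_j * \Xi^T|=\sum_{\tau\in S_N}\mathrm{sgn}(\tau)\sum_{j=1}^N \bigl(\omega_{j,\tau(j)}\,\xi_{j,\tau(j)}\bigr)\prod_{s\neq j}\xi_{s,\tau(s)}.
\end{equation}
The crucial step is then the substitution $\tau=\sigma^{-1}$, which is a bijection of $S_N$ with $\mathrm{sgn}(\sigma^{-1})=\mathrm{sgn}(\sigma)$, together with the reindexing $j=\sigma(k)$ in the inner sum. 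Under $s=\sigma(r)$ one has $\prod_s\xi_{s,\sigma^{-1}(s)}=\prod_r\xi_{\sigma(r),r}$, the distinguished factor $\omega_{j,\sigma^{-1}(j)}\xi_{j,\sigma^{-1}(j)}$ becomes $\omega_{\sigma(k),k}\xi_{\sigma(k),k}$, and $\prod_{s\neq j}$ becomes $\prod_{r\neq k}$. I would check that each right-hand summand thereby matches, term for term, the corresponding left-hand summand, so that the two expansions coincide.

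The computation is entirely formal once one notes that every term of the expansion carries exactly one operator, applied to a single scalar entry; since the output of that operator and all remaining factors are ordinary functions, there is no non-commutativity to track and the operators may be manipulated as scalars for the bookkeeping. The only delicate point, and the step most prone to error, is the index juggling in $\tau=\sigma^{-1}$, $j=\sigma(k)$; I would sanity-check it on $N=2$ and on the special case $\omega_{ij}=\partial_x$, where the identity collapses to the familiar statement that $\partial_x\det\Xi$ may be computed equally well by differentiating columns or rows. The conceptual content is simply that this operator-weighted expansion respects transposition, exactly as $\det\Xi=\det\Xi^T$ does.
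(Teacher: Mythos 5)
Your proof is correct. Note that the paper itself offers no proof of Proposition \ref{Prop 2.1} --- it is simply quoted from \cite{Zhang-Hietarinta,ZDJ-arxiv} --- so there is nothing internal to compare against; your Leibniz-expansion argument, with the substitution $\tau=\sigma^{-1}$, $j=\sigma(k)$ and the observation that each summand carries exactly one operator acting on a single scalar entry, is the standard and complete way to establish the identity, and your remark that the case $\Omega_{ij}=\partial_x$ recovers the row-versus-column differentiation of a determinant is exactly the sanity check one wants.
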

\begin{proposition}\cite{Freeman-Nimmo-KP}
\label{Prop 2.2}
Suppose that $D$ is an $N\times(N-2)$ matrix and $a,~ b,~ c,~ d$ are
$N$th-order column vectors, then
\begin{equation}
|D,~a, ~b||D,~ c, ~d|-|D,~ a,~ c||D,~ b,~ d|+|D,~ a, ~d||D,~ b,~ c|=0.
\label{D-abcd}
\end{equation}
\end{proposition}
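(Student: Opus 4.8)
The plan is to collapse every order-$N$ bracket into one fixed antisymmetric bilinear form and then exploit the low rank that is forced by the shape of $D$. First I would Laplace-expand the $N\times N$ determinant $|D,x,y|$ along its last two columns $x,y$. Only the $2\times 2$ minors $x_iy_j-x_jy_i$ taken from two rows $i<j$ contribute, each multiplied by the complementary $(N-2)\times(N-2)$ minor $M_{ij}$ of $D$ (rows $i,j$ deleted) and a Laplace sign $\varepsilon_{ij}$. Collecting these into the antisymmetric array $C=(c_{ij})$ with $c_{ij}=\varepsilon_{ij}M_{ij}$ for $i<j$, $c_{ji}=-c_{ij}$, $c_{ii}=0$, gives
\begin{equation}
|D,x,y|=\sum_{i<j}\varepsilon_{ij}\,(x_iy_j-x_jy_i)\,M_{ij}=x^{T}Cy ,\qquad C^{T}=-C,
\end{equation}
for all column vectors $x,y$, with $C$ independent of $x,y$.

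The key structural point is that $C$ has rank at most two. The entries $M_{ij}$ are, up to sign, the coordinates of the Hodge dual of the decomposable $(N-2)$-form $D_1\wedge\cdots\wedge D_{N-2}$ built from the columns of $D$; the Hodge dual of a decomposable form is again decomposable, so $C$ corresponds to a $2$-form $u\wedge v$, i.e.\ $C=uv^{T}-vu^{T}$ for suitable fixed vectors $u,v$ (one checks this directly by extending $D_1,\dots,D_{N-2}$ to a basis, in which $C$ reduces to a single $2\times 2$ block). Writing $\alpha_x=u^{T}x$ and $\beta_x=v^{T}x$, this turns each bracket into a $2\times 2$ determinant,
\begin{equation}
|D,x,y|=x^{T}Cy=\alpha_x\beta_y-\beta_x\alpha_y=\left|\begin{array}{cc}\alpha_x & \beta_x\\ \alpha_y & \beta_y\end{array}\right| .
\end{equation}

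Substituting this into \eqref{D-abcd}, the three products become $2\times 2$ minors of the $4\times 2$ array with rows $(\alpha_a,\beta_a),\dots,(\alpha_d,\beta_d)$, and the whole left-hand side is exactly the classical three-term Grassmann--Pl\"ucker relation for such minors, a polynomial identity in the eight scalars $\alpha_a,\beta_a,\dots,\alpha_d,\beta_d$ that is verified by direct expansion. The main obstacle is the rank-two claim: it must be argued independently of the relation being proved to avoid circularity, which is why I would establish it through the basis-change normalization of $D$ rather than by manipulating the minors $M_{ij}$ themselves. An alternative that sidesteps this step entirely is to write down a single $2N\times 2N$ determinant containing two copies of $D$ together with $a,b,c,d$ that is manifestly singular, and to read off \eqref{D-abcd} from its generalized Laplace expansion along the first $N$ rows; I would keep this as a fallback in case the rank argument is judged less transparent.
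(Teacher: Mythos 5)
Your argument is correct. Note first that the paper does not actually prove Proposition \ref{Prop 2.2}: it is quoted from Freeman and Nimmo, and the classical proof in that reference is precisely your stated fallback --- the generalized Laplace expansion, along the first $N$ rows, of a manifestly vanishing $2N\times 2N$ determinant built from two copies of $D$ and the doubled columns $a,b,c,d$, whose expansion yields twice the left-hand side of \eqref{D-abcd}. Your primary route is genuinely different and self-contained: collapsing each bracket $|D,x,y|$ to an antisymmetric bilinear form $x^{T}Cy$ of rank at most two, and then invoking the three-term Grassmann--Pl\"ucker relation for $2\times 2$ minors of a $4\times 2$ array, which is an explicit polynomial identity in the eight scalars $\alpha_a,\beta_a,\dots,\beta_d$ and checks out by direct expansion. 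The rank-two step is sound and non-circular as you present it: writing $x,y$ in a basis extending the columns of $D$, the bracket $|D,x,y|$ depends only on the two residual coordinates of each vector, which gives $C=uv^{T}-vu^{T}$ without touching the identity being proved. Two small points to tidy: (i) that normalization assumes the columns of $D$ are linearly independent; when $\mathrm{rank}(D)<N-2$ every minor $M_{ij}$ vanishes, so $C=0$ and \eqref{D-abcd} reads $0=0$ --- worth one sentence; (ii) the Hodge-dual phrasing implicitly fixes an inner product (or the pairing $\Lambda^{N-2}\mathbb{C}^{N}\otimes\Lambda^{2}\mathbb{C}^{N}\to\Lambda^{N}\mathbb{C}^{N}$), which is harmless but makes the basis-change version the cleaner one over $\mathbb{C}$. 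What your route buys is transparency about the mechanism --- the rank-two factorization explains why exactly a three-term relation survives --- whereas the $2N\times 2N$ determinant argument of the cited source is shorter to state but hides this structure.
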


Besides, to write solutions of CES in  simple forms
one may make use of lower triangular Toeplitz matrices.
An $N$th-order lower triangular Toeplitz matrix means a matrix  in the following form
\begin{equation}
\mathcal{A}=\left(\begin{array}{cccccc}
a_0 & 0    & 0   & \cdots & 0   & 0 \\
a_1 & a_0  & 0   & \cdots & 0   & 0 \\
a_2 & a_1  & a_0 & \cdots & 0   & 0 \\
\cdots &\cdots &\cdots &\cdots &\cdots &\cdots \\
a_{N-1} & a_{N-2} & a_{N-3}  & \cdots &  a_1   & a_0
\end{array}\right)_{N\times N},~~~ a_j\in \mathbb{C}.
\label{A}
\end{equation}
All such matrices form a commutative semigroup $\widetilde{G}_N(\mathbb{C})$ with identity
with respect to matrix multiplication and inverse,
and the set $G_N(\mathbb{C})=\big \{\mathcal{A} \big |~\big. \mathcal{A}\in \widetilde{G}_N(\mathbb{C}),~|\mathcal{A}|\neq 0 \big\}$
makes an Abelian group.
Besides \eqref{A}, we will also need the following
block lower triangular Toeplitz matrix,
\begin{equation}
\mathcal{A}^B=\left(\begin{array}{cccccc}
A_0 & 0    & 0   & \cdots & 0   & 0 \\
A_1 & A_0  & 0   & \cdots & 0   & 0 \\
A_2 & A_1  & A_0 & \cdots & 0   & 0 \\
\cdots &\cdots &\cdots &\cdots &\cdots &\cdots \\
A_{N-1} & A_{N-2} & A_{N-3}  & \cdots &  A_1   & A_0
\end{array}\right)_{2N\times 2N},
\label{block-A-block}
\end{equation}
where
$A_j=\Big(\begin{array}{cc}
a_{j1} & 0   \\
0 & a_{j2}
\end{array}\Big)$
and $\{a_{js}\}$ are arbitrary complex numbers. All such block
matrices also compose a commutative semigroup with identity which we denote by
$\widetilde{G}^B_{2N}(\mathbb{C})$, and the set $G^B_{2N}(\mathbb{C})=\big
\{\mathcal{A}^B \big |~\big. \mathcal{A}^B\in
\widetilde{G}^B_N(\mathbb{C}),~|\mathcal{A}^B|\neq 0 \big\}$ makes
an Abelian group, too. If all the elements are real, then we
correspondingly denote the above mentioned matrix sets by
$\widetilde{G}_{N}(\mathbb{R})$, $G_{N}(\mathbb{R})$,
$\widetilde{G}^B_{2N}(\mathbb{R})$ and $G^B_{2N}(\mathbb{R})$. For
more properties of such matrices please refer to
Ref.\cite{ZDJ-arxiv}.

\subsection{CES of the mKdV equation}

By the transformation
\begin{equation}
v=i\Big(\ln{\frac{\bar{f}}{f}}\Big)_x=i \ \frac{ \bar{f}_{x}f
-\bar{f} f_x }{\bar{f}f}, \label{trans-mKdV}
\end{equation}
the mKdV equation \eqref{mKdV} can be bilinearized
as\cite{Hirota-mKdV}
\begin{subequations}
\begin{eqnarray}
&& (D_{t}+D_{x}^{3}) \bar{f} \cdot f=0,
\label{blinear-mKdV1}\\
&& D_{x}^{2}\bar{f} \cdot f=0, \label{blinear-mKdV2}
\end{eqnarray}
\label{blinear-mKdV}
\end{subequations}
where $i$ is the imaginary unit, $\bar{f}$ is the complex conjugate of $f$,
and $D$ is the well-known Hirota's bilinear operator defined by\cite{Hirota-1971,Hirota-book}
\begin{equation*}
D^m_{t}D^n_{x}a(t,x)\cdot b(t,x)
=\frac{\partial^m}{{\partial s}^m}\frac{\partial^n}{{\partial y}^n}
a(t+s,x+y)b(t-s,x-y)|_{s=0,y=0},~~m,n=0,1,\cdots.
\label{Hirota-ope}
\end{equation*}

The bilinear mKdV equation \eqref{blinear-mKdV} admits a solution $f$ in Wronskian form.
\begin{theorem}
\label{Th 2.1}
A Wronskian solution to the bilinear mKdV equation \eqref{blinear-mKdV} is given as
\begin{equation}
f=W(\phi)=|\widehat{N-1}|,
\label{wrons-mKdV}
\end{equation}
provided that its entry vector $\phi$ satisfies
\begin{subequations}
\begin{align}
\phi_{x}=& B(t)\bar{\phi},
\label{cond-a}\\
\phi_{t}=&-4\phi_{xxx}+C(t)\phi,
\label{cond-b}
\end{align}
\label{cond-mKdV}
\end{subequations}
where $B(t)=(B_{ij}(t))_{N\times N}$ and $C(t)=(C_{ij}(t))_{N\times
N}$ are two  $N\times N$  matrices of $t$ but independent of $x$,
and satisfy
\begin{subequations}
\begin{eqnarray}
&&|\bar{B}(t)|_t= 0,~~\mathrm{tr}{(C(t))}\in \mathbb{R}(t),\label{B-qiudao}\\
&&B_t(t)+B(t) \bar{C}(t)=C(t)B(t) \label{W_T}.
\end{eqnarray}
\label{compt}
\end{subequations}
\end{theorem}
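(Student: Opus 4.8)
The plan is to verify the two bilinear equations \eqref{blinear-mKdV1} and \eqref{blinear-mKdV2} directly by the Wronskian technique, using Propositions \ref{Prop 2.1} and \ref{Prop 2.2}. The entire difficulty is the complex conjugation: since $\bar f=W(\bar\phi)$ is the Wronskian of the entry vector $\bar\phi$, its columns are not the $x$-derivatives of the columns of $f$, so the products $\bar f\cdot f$ appearing in the Hirota operators cannot be attacked by the Pl\"ucker/Jacobi identity \eqref{D-abcd} directly, because that identity needs the two determinants to share a common column block. Hence the central preliminary step is to re-express $\bar f$ over the same column family as $f$.

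First I would read off the derivative structure from \eqref{cond-a}. Since $B$ is $x$-free and conjugation commutes with $\partial_x$, one has $\phi^{(j+1)}=\partial_x^j(B\bar\phi)=B\,\overline{\phi^{(j)}}$ for every $j$, whence $\phi^{(2k)}=(B\bar B)^k\phi$, $\phi^{(2k+1)}=(B\bar B)^kB\bar\phi$, and in particular $\phi_{xx}=B\bar B\,\phi=:M\phi$, i.e. $\phi^{(j+2)}=M\phi^{(j)}$. The relation $\phi^{(j+1)}=B\,\overline{\phi^{(j)}}$ is the crux: left-multiplying the columns of $\bar f$ by $B$ raises every index by one, so
\begin{equation*}
|B|\,\bar f=|B\,\overline{\phi^{(0)}},\dots,B\,\overline{\phi^{(N-1)}}|=|\phi^{(1)},\dots,\phi^{(N)}|=|\t{N}|=:g.
\end{equation*}
Thus $g:=|B|\bar f$ is a determinant built from exactly the columns of $f=|\h{N-1}|$. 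As $B=B(t)$, the nonzero scalar $|B|$ is free of $x$, and since $|\bar B|=\overline{|B|}$ the hypothesis $|\bar B|_t=0$ in \eqref{B-qiudao} makes it free of $t$ as well; therefore multiplying \eqref{blinear-mKdV2} and \eqref{blinear-mKdV1} by $|B|$ turns them into the equivalent identities $D_x^2\,g\cdot f=0$ and $(D_t+D_x^3)\,g\cdot f=0$, which now involve only Wronskians of one common column family.

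Next I would express the required derivatives as Wronskians. The $x$-derivatives follow from the column rule, e.g. $f_x=|\h{N-2},N|$, $g_x=|\t{N-1},N+1|$, $f_{xx}=|\h{N-3},N-1,N|+|\h{N-2},N+1|$ and $g_{xx}=|\t{N-2},N,N+1|+|\t{N-1},N+2|$. For the $t$-derivatives, \eqref{cond-b} gives $\partial_t\phi^{(j)}=-4\phi^{(j+3)}+C\phi^{(j)}$ (as $C$ is $x$-free); the $-4\phi^{(j+3)}$ parts reproduce the usual KdV-type combinations, while the $C\phi^{(j)}$ parts sum, by Proposition \ref{Prop 2.1} (equivalently $\det(e^{sC}V)=e^{s\,\mathrm{tr}(C)}\det V$), to $\mathrm{tr}(C)$ times the Wronskian. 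In $(D_t+D_x^3)\bar f\cdot f$ the surviving non-derivative term is $(\overline{\mathrm{tr}(C)}-\mathrm{tr}(C))\bar f f$, which vanishes precisely because $\mathrm{tr}(C)\in\mathbb{R}$. I would also record that \eqref{W_T} is exactly the compatibility $\phi_{xt}=\phi_{tx}$ of \eqref{cond-a}--\eqref{cond-b}, and that together with $|\bar B|_t=0$ and $\mathrm{tr}(C)\in\mathbb{R}$ it yields $\partial_t\ln|B|=\mathrm{tr}(C)-\overline{\mathrm{tr}(C)}=0$, so that the identity $g=|B|\bar f$ is preserved in time.

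Finally, substituting these expressions into $D_x^2\,g\cdot f$ and $(D_t+D_x^3)\,g\cdot f$ and cancelling the $\mathrm{tr}(C)$ contributions, what remains is a combination of products of $N\times N$ determinants of the $\phi$-family, which I expect to collapse to zero by Proposition \ref{Prop 2.2} applied on a common $(N-2)$-column block, after the highest-index columns have been rewritten through $\phi^{(j+2)}=M\phi^{(j)}$. For $N=2$ this is already transparent: reducing $\phi^{(4)}$ via $\phi^{(4)}=M\phi^{(2)}$ turns $D_x^2\,g\cdot f$ into a multiple of $|0,1|\,|2,3|-|0,2|\,|1,3|+|0,3|\,|1,2|$, which is zero by \eqref{D-abcd} with empty $D$. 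I expect the main technical obstacle to be this last bookkeeping, namely arranging the differing index sets so that a single block $D$ is common to all six determinants and tracking the signs of the rearrangements; the conceptual obstacle, the conjugation, will by then have been removed once and for all by the identity $|B|\bar f=|\t{N}|$.
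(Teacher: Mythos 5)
Your proposal follows essentially the same route as the paper's own proof in Appendix A: the key step in both is to rewrite $\bar f$ as a determinant over the $\phi$-column family using $\phi_x=B(t)\bar\phi$ (the paper shifts the indices down to get $\bar f=|\bar B(t)||-1,\widehat{N-2}|$, you shift up to get $|B|\bar f=|\t{N}|$, which is an equivalent device), then compute the $x$- and $t$-derivatives, absorb the $C(t)$ contribution into $\mathrm{tr}(C)$, use $\phi_{xx}=B\bar B\phi$ together with Proposition \ref{Prop 2.1} to produce the trace identities that align the column ranges, and finish with the Pl\"ucker identity of Proposition \ref{Prop 2.2}. Your placement of the hypotheses — \eqref{W_T} as the cross-differentiation compatibility, and $|\bar B(t)|_t=0$ with $\mathrm{tr}(C(t))\in\mathbb{R}(t)$ ensuring that the conjugate of $f_t$ agrees with $\bar f_t$ — also matches the paper, so the proposal is correct in approach and differs only in bookkeeping conventions.
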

The proof is given in \ref{A:sec-1}.

\subsection{Simplification of the CES \eqref{cond-mKdV}}
\label{sec:2.3}

To solve the CES \eqref{cond-mKdV} with arbitrary $B(t)$ and $C(t)$ which satisfy \eqref{compt}
we first introduce a non-singular $N\times N$  complex
matrix $H(t)\in \mathbb{C}_{N\times N}(t)$ such that (\cite{ODE-book}, also see \cite{ZDJ-arxiv})
\begin{equation}
H_t(t)=-H(t)C(t).
\end{equation}
By $H(t)$ we then introduce a new Wronskian entry vector
\begin{equation}
\psi=H(t)\phi,
\end{equation}
which transfers the CES \eqref{cond-mKdV} to the following,
\begin{subequations}
\begin{align}
\psi_{x}=& \widetilde{B}\bar{\psi},
\label{cond-a-s}\\
\psi_{t}=&-4\psi_{xxx},
\label{cond-b-s}
\end{align}
\label{cond-mKdV-s}
\end{subequations}
where $\widetilde{B}=H(t)B(t)\bar H^{-1}(t)$ has to be a constant matrix independent of both $x$ and $t$ due to
the compatibility condition \eqref{W_T} (noting that now $C(t)=0$).

We note that the Wronskians composed by $\phi$ and $\psi$, which we respectively denote by $f(\phi)$ and $f(\psi)$,
yield same solutions to the mKdV equation through the transformation \eqref{trans-mKdV}
due to $f(\psi)=|H(t)|f(\phi)$.
That means in the following one only needs to focus on the CES \eqref{cond-mKdV-s}.
However, since there exists a complex operation in \eqref{cond-mKdV-s},
solutions can not be classified in terms of the canonical form of $\widetilde{B}$, as done in \cite{Ma-You-KdV,ZDJ-arxiv}.
To overcome the difficulty we introduce an auxiliary  equation
\begin{equation}
\psi_{xx}= \widetilde{A}\psi,
\label{cond-a-s0}
\end{equation}
where $\widetilde{A}=\widetilde{B}\bar{\widetilde{B}}$.
Our plan now is   to first solve the equation set composed by \eqref{cond-a-s0} and \eqref{cond-b-s},
which is nothing but the CES of the KdV equation and have been well studied in Refs.\cite{Ma-You-KdV} and \cite{ZDJ-arxiv}.
Then in the second step we impose condition \eqref{cond-a-s} on the obtained solution $\psi$ and finally get solutions for \eqref{cond-mKdV-s}.

In addition, noting that  $\widetilde{A}$ and its any similar form $\mathbb{A}=P^{-1}\widetilde{A}P$
always generates same solutions to the mKdV equation, in the following let us focus on  the CES
\begin{subequations}
\label{mKdV-condition}
\begin{align}
\varphi_{xx}=&\mathbb{A}\varphi, \label{mKdV-condition-a}\\
\varphi_{x}=&\mathbb{B}\bar{\varphi}, \label{mKdV-condition-b}\\
\varphi_{t}=&-4\varphi_{xxx},\label{mKdV-condition-c}
\end{align}
\end{subequations}
where $\varphi=P^{-1}\psi$, $\mathbb{B}=P^{-1}\widetilde{B}\bar{P}$ and
\begin{equation}
\mathbb{A}=\mathbb{B}\bar{\mathbb{B}}. \label{cc}\\
\end{equation}
In the next section we will see that solutions to the mKdV equation can be classified in terms of the canonical form of $\mathbb{A}$
(rather than the canonical form of $\mathbb{B}$).

\section{Solutions of the mKdV equation}
In this section, we list several possible choices of $\mathbb{A}$
and derive the related Wronskian entry vectors,
one of which is for breathers.
We will also discuss the limiting relationship of some solutions.

First, for the matrix $\mathbb{A}$ defined by \eqref{cc}, we have the following result on its eigenvalues.
\begin{proposition}\label{Prop 3.1}
The eigenvalues of $\mathbb{A}$ defined by \eqref{cc} are either real or,
if there are some complex ones, appear as conjugate pairs.
\end{proposition}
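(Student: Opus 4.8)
The plan is to exploit the algebraic structure $\mathbb{A}=\mathbb{B}\bar{\mathbb{B}}$ from \eqref{cc} together with the elementary fact that, for square matrices, the two products $PQ$ and $QP$ share the same characteristic polynomial. First I would take the entrywise complex conjugate of \eqref{cc}, which gives $\bar{\mathbb{A}}=\overline{\mathbb{B}\bar{\mathbb{B}}}=\bar{\mathbb{B}}\mathbb{B}$. Thus $\mathbb{A}=\mathbb{B}\bar{\mathbb{B}}$ and $\bar{\mathbb{A}}=\bar{\mathbb{B}}\mathbb{B}$ are exactly the two orderings of the same pair of factors $\mathbb{B}$ and $\bar{\mathbb{B}}$, so they have identical characteristic polynomials and hence the same eigenvalues counted with multiplicity.

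The second step is to relate the characteristic polynomial of $\bar{\mathbb{A}}$ to that of $\mathbb{A}$. Writing $p(\lambda)=\det(\lambda I-\mathbb{A})$ and using the identity $\overline{\det M}=\det\bar M$ valid for any complex matrix $M$, one obtains $\det(\lambda I-\bar{\mathbb{A}})=\overline{\det(\bar\lambda I-\mathbb{A})}=\overline{p(\bar\lambda)}$. Combining this with the first step yields $p(\lambda)=\det(\lambda I-\bar{\mathbb{A}})=\overline{p(\bar\lambda)}$ for every $\lambda$. Expanding $p(\lambda)=\sum_{j}c_j\lambda^j$, the relation $p(\lambda)=\overline{p(\bar\lambda)}=\sum_j \bar c_j\lambda^j$ forces $c_j=\bar c_j$ for all $j$, so the characteristic polynomial of $\mathbb{A}$ has real coefficients. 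Since a real polynomial has roots that are either real or occur in complex-conjugate pairs, the conclusion follows at once.

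I expect no serious obstacle here; the result is essentially a consequence of the conjugation symmetry built into \eqref{cc}. The only points requiring care are to invoke the $PQ$--$QP$ characteristic-polynomial lemma in the form that holds even when $\mathbb{B}$ is singular (so that the statement about eigenvalues includes the eigenvalue $0$ and its multiplicity), and to phrase the argument through the characteristic polynomial rather than merely through eigenvectors, so that algebraic multiplicities are matched correctly. A purely eigenvector-based shortcut, observing that $\mathbb{A}v=\lambda v$ implies $\bar{\mathbb{A}}\bar v=\bar\lambda\bar v$ and that $\mathbb{A}$, $\bar{\mathbb{A}}$ are cospectral, proves the set-level statement quickly, but I would still present the characteristic-polynomial version to secure the claim with multiplicities.
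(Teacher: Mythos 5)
Your proposal is correct and follows essentially the same route as the paper: the appendix proves the same two facts, namely that $\det(\lambda I-AB)=\det(\lambda I-BA)$ for arbitrary (possibly singular) square complex matrices via a rank decomposition, and then that $\det(\lambda I-\bar{\mathbb{B}}\mathbb{B})=\det(\lambda I-\mathbb{B}\bar{\mathbb{B}})=\overline{\det(\bar\lambda I-\bar{\mathbb{B}}\mathbb{B})}$ forces the characteristic polynomial to have real coefficients. Your attention to the singular case and to multiplicities matches the paper's treatment.
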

We leave the proof in \ref{A:sec-2}.

According to the eigenvalues of $\mathbb{A}$, we can categorize solutions to the mKdV equation as solitons and breathers,
which correspond to real eigenvalues and complex eigenvalues of conjugate pairs, respectively.

\subsection{Solitons}\label{sec:3.1}

{\bf  Case I. Solitons:} ~When $\mathbb{A}$ has $N$ distinct real positive eigenvalues $\{\lambda^2_j\}$ its canonical form reads
\begin{equation}
\mathbb{A}={\rm Diag}( \lambda_{1}^2,  \lambda_{2}^2, ~ \cdots, ~
\lambda_{N}^2), \label{mathbb-A-11}
\end{equation}
where, for convenient to discuss,
we let  $\lambda_j=\varepsilon_j ||k_j||\neq 0$, in which $\varepsilon_j=\pm 1$ and $k_j$ can either real or complex numbers
with distinct absolute values.
We consider two subcases.

\noindent
(1).  $k_{j}\in \mathbb{R}$, i.e.
\begin{equation}
\mathbb{A}={\rm Diag}( k_{1}^2,  k_{2}^2, ~ \cdots, ~ k_{N}^2).
\end{equation}
Following the relation \eqref{cc}, we can take
\begin{equation}
\mathbb{B}={\rm Diag}( \pm k_{1},  \pm k_{2}, ~ \cdots, ~ \pm
k_{N}). \label{mathbb-B-11}
\end{equation}
 We neglect the sign $\pm$ because this can be compensated by the arbitrariness of $k_j$.
 So next we take
\begin{equation}
\mathbb{B}={\rm Diag}(  k_{1},   k_{2}, ~ \cdots, ~ k_{N}),~~ k_j\in \mathbb{R}.
\end{equation}
For the above matrix $\mathbb{B}$, the solution to the CES \eqref{mKdV-condition} can be
\begin{subequations}\label{gen-solitons}
\begin{equation}
\varphi=(\varphi_{1},  \varphi_{2}, \cdots,  \varphi_{N})^{T},
\label{gen-sol}
\end{equation}
where
\begin{equation}
\varphi^{}_{j}= a_{j}^+  e^{\xi_{j}}+ i a_{j}^-
 e^{-\xi_{j}}, ~\xi_{j}=k_{j}x-4k_{j}^{3}t+\xi_{j}^{(0)},~~
 a_{j}^+ , a_{j}^-, k_j, \xi_{j}^{(0)} \in \mathbb{R}.
\label{mkdv-c11}
\end{equation}
\end{subequations}
When $a^+_j=(-1)^{j-1},~a^-_j=1$, \eqref{mkdv-c11} reads
\begin{equation}
\varphi_j=(-1)^{j-1}e^{\xi_j}+i e^{-{\xi_j}}. \label{WH}
\end{equation}
In this case, the corresponding Wronskian solution \eqref{trans-mKdV} with  $f(\varphi)$
can be written as\cite{ZDJ-mkdvscs}
\begin{equation*}
f= \bigg (\prod^{N}_{j=1}e^{\xi_j}\bigg)
        \bigg(\prod_{1\leq j<l\leq N}(k_{j}-k_{l})\bigg)
    \sum_{\mu=0, 1}\exp \bigg\{\sum_{j=1}^{N}\mu_{j}(2\eta_{j}+\frac{\pi}{2}i)+\sum_{1\leq j<l\leq N}
   \mu_{j}\mu_{l}A_{jl} \bigg\},
\end{equation*}
where the sum over $\mu=0,1$ refers to each of $\mu_j=0,1$ for $j=1,2,\cdots, N$, and
\begin{equation*}
\eta_{j}=-\xi_{j}-\frac{1}{4}\sum_{l=2,l\neq j}^{N}A_{jl},
~~e^{A_{jl}}=\bigg(\frac{k_l-k_j}{k_l+k_j}\bigg)^2.
\end{equation*}
This coincides with the $N$-soliton solution in Hirota's exponential polynomial form\cite{Ablowitz-book}.

\vskip 5pt
\noindent
(2).  $k_j=k_{j1}+ik_{j 2} \in \mathbb{C}$, i.e.
\begin{equation}
\mathbb{A}={\rm Diag}( \lambda_{1}^2,  \lambda_{2}^2, ~ \cdots, ~
\lambda_{N}^2),  ~~\lambda_{j}^2=k_{j1}^2+k_{j2}^2. \label{mathbb-A-12}
\end{equation}
In this case we have
\begin{equation}
\mathbb{B}={\rm Diag}(   k_{1},    k_{2}, ~ \cdots, ~
  k_{N}),~~k_j \in \mathbb{C}, \label{mathbb-B-12}
\end{equation}
and the  solution to the CES \eqref{mKdV-condition} can be given by \eqref{gen-sol} with
\begin{subequations}
\begin{equation}
 \varphi_j= \gamma_j(a_{j}^+  e^{\xi_{j}}+ i a_{j}^-
e^{-\xi_{j}}), ~\xi_{j}=\lambda_{j}x-4\lambda_{j}^{3}t+\xi_{j}^{(0)}, ~~
a_{j}^+ , a_{j}^-,~ \xi_{j}^{(0)} \in \mathbb{R},
\label{mkdv-c12}
\end{equation}
where
\begin{eqnarray}
\gamma_j=1+\frac{i(\lambda_j-k_{j1})}{k_{j2}},~~
\lambda_j=\varepsilon_j ||k_j||.
\end{eqnarray}
\end{subequations}
Now, if we compare \eqref{mkdv-c11} and \eqref{mkdv-c12}, we can find that
both of them provide same solution to the mKdV equation
through the transformation \eqref{trans-mKdV}.
In particular, when $k_{j1}=0$, i.e.,
\begin{equation}
\mathbb{B}={\rm Diag}(  i k_{12},   i k_{22}, ~ \cdots, ~ i k_{N2}),
\label{B-con-im}
\end{equation}
and we take $\lambda_j=k_{j2}$,
the entry function \eqref{mkdv-c12} reduces to
\begin{equation}
\varphi_{j}=(1+ i) (a_{j}^+  e^{\xi_{j}}+ i a_{j}^-
e^{-\xi_{j}}),~\xi_{j}=k_{j}x-4k_{j}^{3}t+\xi_{j}^{(0)},~ a_{j}^+ , a_{j}^-, \xi_{j}^{(0)} \in \mathbb{R}.
\label{mkdv-c13}
\end{equation}
This was first given by Nimmo and Freeman\cite{Nimmo-Freeman-JPA} as Wronskian entries for soliton solutions.

Let us remark this case as follows.

\vskip 5pt
\noindent
\textbf{Remarks}:
\begin{itemize}
\item{
\textit{
When $\mathbb{B}$ in the CES \eqref{mKdV-condition-b} and \eqref{mKdV-condition-c}
is a diagonal matrix \eqref{mathbb-B-12}
of which the diagonal elements have different absolute values,
no matter these diagonal elements are real or complex, the related Wronskian $f(\varphi)$
generates $N$-soliton solutions to the mKdV equation}.}
\item{\textit{Besides, we specify the non-degenerate condition\footnote{
 The above non-degenerate relation can also be described as follows.
Define the equivalent relation $\sim$ on the complex plane $\mathbb{C}$:
\[ k_i\sim k_j~~\mathrm{iff}~~||k_i||= ||k_j||.\]
The quotient space $\mathbb{C}/\sim$ denotes the positive half real axis.
Then to get a non-degenerate soliton solution one needs
$k_i\nsim k_j,~(i\neq j)$.}
\begin{equation}
||k_i||\neq ||k_j||,~~(i\neq j),
\end{equation}
i.e., $k_i$ and $k_j$ $(i \neq j)$ can not appear on the same circle
with the original point as the center of the circle.
Otherwise, the solution degenerates.
}
}
\end{itemize}

\vskip 10pt
\noindent
{\bf  Case II. Limit solutions of solitons:} Corresponding to Case I, we discuss two subcases.

\noindent
(1). Let
\begin{equation}
\mathbb{A}=\left(\begin{array}{ccccccc}
k^2_1 & 0    & 0   & \cdots & 0   & 0 & 0 \\
2k_1   & k^2_1  & 0   & \cdots & 0   & 0 & 0\\
1   & 2k_1    & k^2_1 & \cdots & 0   & 0 & 0\\
\cdots &\cdots &\cdots &\cdots &\cdots &\cdots &\cdots \\
0   & 0    & 0   & \cdots & 1 & 2k_1  & k^2_1
\end{array}\right)_N, ~~ k_1\in \mathbb{R}.
\label{mathbb-A-21}
\end{equation}
In this subcase, general solution to the equation set  \eqref{mKdV-condition-a} and \eqref{mKdV-condition-c} is\cite{ZDJ-arxiv}
\begin{equation}
\hat{\phi}=\mathcal{A}\mathcal{Q}^{+}_{0}+ \mathcal{B} \mathcal{Q}^{-}_{0},~~
\mathcal{A}, \mathcal{B} \in \widetilde{G}_N(\mathbb{C}), \\
\label{gen-sol-L-1}
\end{equation}
where
\begin{equation}
\label{mathbb-Q}
 \mathcal{Q}^{\pm}_0=(\mathcal{Q}^{\pm}_{0, 0},
\mathcal{Q}^{\pm}_{0,1}, \cdots, \mathcal{Q}^{\pm}_{0, N-1})^T,~~
\mathcal{Q}^{\pm}_{0, s}=\frac{1}{s!}\partial^{s}_{k_1}e^{\pm
\xi_1},
\end{equation}
$\xi_1$ is defined in \eqref{mkdv-c11},
$\widetilde{G}_N(\mathbb{C})$ is the commutative set of all the
$N$th-order lower triangular Toeplitz matrices, see
Sec.\ref{sec:2.1}.

For the matrix $\mathbb{B}$ satisfying \eqref{cc} we take
\begin{equation}
\mathbb{B}=\left(\begin{array}{cccccc}
k_1 & 0    & 0   & \cdots & 0   & 0 \\
1   & k_1  & 0   & \cdots & 0   & 0 \\
\cdots &\cdots &\cdots &\cdots &\cdots &\cdots \\
0   & 0    & 0   & \cdots & 1   & k_1
\end{array}\right)_N.
\label{mathbb-B-21}
\end{equation}
Then, substituting  into \eqref{mKdV-condition-b} yields
\begin{equation}
\mathcal{A} \mathcal{Q}^{+}_{0, x}+\mathcal{B} \mathcal{Q}^{-}_{0,x}=\mathbb{B}(\bar{\mathcal{A}} \mathcal{Q}_0^{+}
+\bar{\mathcal{B}}\mathcal{Q}_0^{-}).
\label{gen-sol-relate}
\end{equation}
Meanwhile, it can be verified that
\begin{equation}
\mathcal{Q}^{+}_{0, x}=\mathbb{B} \mathcal{Q}_{0}^{+},~~
 \mathcal{Q}^{-}_{0, x}=-\mathbb{B}\mathcal{Q}_{0}^{-}.
\label{gen-sol-relate-1}
\end{equation}
Noting that $\mathbb{B}\in \widetilde{G}_N(\mathbb{R})$,
substituting \eqref{gen-sol-relate-1} into \eqref{gen-sol-relate},
and making use of the commutative property of
$\widetilde{G}_N(\mathbb{C})$, we have
\begin{equation}
\mathbb{B}(\mathcal{A} \mathcal{Q}^{+}_{0}-\mathcal{B}
\mathcal{Q}^{-}_{0})=\mathbb{B}(\bar{\mathcal{A}} \mathcal{Q}_0^{+}
+\bar{\mathcal{B}}\mathcal{Q}_0^{-}). \label{gen-sol-relate-new}
\end{equation}
Then, compared with \eqref{gen-sol-relate} we immediately get
\begin{eqnarray}
 \mathcal{A}=\bar {\mathcal{A}},~~
\mathcal{B}=-\bar {\mathcal{B}},
\label{gen-sol-related-b}
\end{eqnarray}
which means $\mathcal{A}$ is real and $\mathbb{B}$ pure imaginary.
In the end, the solution to the CES \eqref{mKdV-condition} can be described as
\begin{equation}
{\varphi}=\mathcal{A^+} \mathcal{Q}_{0}^{+}+ i \mathcal{A^-}
\mathcal{Q}_{0}^{-},~~ \mathcal{A^\pm}\in
\widetilde{G}_N(\mathbb{R}). \label{gen-sol-21}
\end{equation}

We note that solutions generated from \eqref{gen-sol-21} can also be derived from
the solution given in Case I.(1) by a limiting procedure(cf.\cite{ZDJ-arxiv}).
Let us explain this procedure by starting from the following Wronskian
\begin{eqnarray}
f(\varphi)=\frac{W(\varphi_1,\varphi_2,\cdots,\varphi_{N})}{\prod_{j=2}^{N}(k_j-k_1)^{j-1}}
\label{solution-1.1-limit}
\end{eqnarray}
with $\varphi_1=\varphi_1(k_1,x,t)=a_1^+e^{\xi_1}+ia_1^-e^{-\xi_1}$ as defined in
\eqref{mkdv-c11} and $\varphi_j=\varphi_1(k_j,x,t)$.
\eqref{solution-1.1-limit} gives a Wronskian solution to the
bilinear mKdV equation \eqref{blinear-mKdV}. Taking the limit $k_j
\rightarrow k_1$ successively for $j=2,3,\cdots,N$ and using
L'Hospital rule, the Wronskian \eqref{solution-1.1-limit} goes to a
Wronskian $W(\varphi)$ with entry vector \eqref{gen-sol-21}, where
the arbitrary coefficient matrices $\mathcal{A}, \mathcal{B}$ can
come from $a_1^{\pm}$ by considering $a_1^{\pm}$ to be some
polynomials of $k_1$ (cf.\cite{ZDJ-arxiv}).

\vskip 5pt
\noindent
(2). Corresponding to Case I.(2), let us consider
\begin{equation}
\mathbb{A}=\left(\begin{array}{ccccccc}
k_{11}^2+k_{12}^2 & 0    & 0   & \cdots & 0   & 0 & 0 \\
2k_{12}   & k_{11}^2+k_{12}^2  & 0   & \cdots & 0   & 0 & 0\\
1   & 2k_{12}    & k_{11}^2+k_{12}^2& \cdots & 0   & 0 & 0\\
\cdots &\cdots &\cdots &\cdots &\cdots &\cdots &\cdots \\
0   & 0    & 0   & \cdots & 1 & 2k_{12}  & k_{11}^2+k_{12}^2
\label{A-II-2}
\end{array}\right)_N,
\end{equation}
where $k_{11}, k_{12} \in \mathbb{R}$ and $k_{12}\neq 0$.
It then follows from \eqref{cc} that one can take
\begin{equation}
\mathbb{B}=\left(\begin{array}{cccccc}
-ik_1 & 0    & 0   & \cdots & 0   & 0 \\
1   & -ik_1  & 0   & \cdots & 0   & 0 \\
\cdots &\cdots &\cdots &\cdots &\cdots &\cdots \\
0   & 0    & 0   & \cdots & 1   & -ik_1
\end{array}\right)_N,~~
k_1=k_{11}+ik_{12}.
\label{mathbb-B-22}
\end{equation}

For the matrix $\mathbb{A}$ defined  by \eqref{A-II-2}, the general  solution
to the equation set \eqref{mKdV-condition-a} and \eqref{mKdV-condition-c} can be written as
\begin{equation}
{\varphi}=\mathcal{A}\mathcal{P}^{+}_{0}+ \mathcal{B}
\mathcal{P}^{-}_{0},~~
 \mathcal{A}, \mathcal{B} \in \widetilde{G}_N(\mathbb{C}),
\label{gen-sol-22}
\end{equation}
where\footnote{Here we define $\mathcal{P}^{\pm}_{0, s}$ by
taking derivative with respect to $k_{12}$ rather than $k_{11}$ because in this case we always have $k_{12}\neq 0$.}
\begin{subequations}
\begin{eqnarray}
\mathcal{P}^{\pm}_0&& =(\mathcal{P}^{\pm}_{0, 0},
\mathcal{P}^{\pm}_{0, 1}, \cdots, \mathcal{P}^{\pm}_{0, N-1})^T, ~
 \mathcal{P}^{\pm}_{0, s}=\frac{1}{s!}\partial^{s}_{k_{12}}e^{\pm \xi_1}, \\
 \xi_1&&=\lambda_1 x-4\lambda^3_1 t+\xi_1^{(0)}, ~
\lambda_1=\sqrt{k_{11}^2+k_{12}^2}.
\end{eqnarray}
\label{gen-sol-22-xi}
\end{subequations}
Next, we substitute \eqref{gen-sol-22} together with \eqref{gen-sol-22-xi} into
\eqref{mKdV-condition-b} so that we identify $\mathcal{A}, \mathcal{B}$ for \eqref{mKdV-condition-b}.
This substitution yields
\begin{equation}
\mathcal{A} \mathcal{P}^{+}_{0, x}+\mathcal{B} \mathcal{P}^{-}_{0, x}\\
=\mathbb{B}(\bar{\mathcal{A}} \mathcal{P}_0^{+}
+\bar{\mathcal{B}}\mathcal{P}_0^{-})\\
=\bar{\mathcal{A}}\mathbb{B}\mathcal{P}^{+}_0+\bar{\mathcal{B}}\mathbb{B}\mathcal{P}^{-}_0,
\label{gen-sol-relate4q"}
\end{equation}
where we have made use of the commutative property of
$\widetilde{G}_N(\mathbb{C})$. Then, noting that
\begin{equation}
\mathcal{P}_{0, x}^{+} =W \mathcal{P}_{0}^{+},~\mathcal{P}_{0,
x}^{-} = - W \mathcal{P}_{0}^{-}, \label{W-rela}
\end{equation}
where
\begin{equation}
W =\left(\begin{array}{ccccccc}
w_0 & 0    & 0   & \cdots    & 0 & 0 \\
w_1  & w_0  & 0   & \cdots    & 0 & 0 \\
w_2 & w_1    & w_0 & \cdots    & 0 & 0\\
\cdots &\cdots &\cdots &\cdots &\cdots &\cdots  \\
w_{N-1} &w_{N-2} &w_{N-3}& \cdots & w_1 & w_0
\end{array} \right)_N ,~w_{j}=\frac{1}{j!}\partial^{j}_{k_{12}} \lambda_1,
\label{W}
\end{equation}
and substituting \eqref{W-rela} into \eqref{gen-sol-relate4q"}, we get the relation
\begin{equation}
 \mathcal{A} W = \bar {\mathcal{A}}\mathbb{B},~
-\mathcal{B}W =\bar {\mathcal{B}} \mathbb{B}. \label{AB-W}
\end{equation}
To have a clearer result, we write
\begin{equation}
\mathcal{A}=\mathcal{A}_1+i \mathcal{A}_2, ~
\mathcal{B}=\mathcal{B}_1+i\mathcal{B}_2,~
\mathbb{B}=\mathbb{B}_{1}+i\mathbb{B}_{2},
\end{equation}
where $\mathcal{A}_1,\mathcal{A}_2,\mathcal{B}_1,\mathcal{B}_2$ are
in $\widetilde{G}_N(\mathbb{R})$ and
\begin{equation}
\mathbb{B}_{1} =\left(\begin{array}{cccc} k_{12}&~~& ~~ & 0
\\1&k_{12}&~~ & ~~\\~~&\ddots&\ddots &~~\\0&~~&1&k_{12} \end{array}\right), ~~
\mathbb{B}_{2} =\left(\begin{array}{cccc} -k_{11}&~~& ~~ & 0
\\0& -k_{11}&~~ & ~~\\~~&\ddots&\ddots &~~\\0&~~&0& -k_{11} \end{array}\right).
\end{equation}
Then \eqref{AB-W} yields
\begin{equation}
\label{tri-relation-A} \mathcal {A}_1(W-\mathbb{B}_1 )-\mathcal
{A}_2\mathbb{B}_2 =0,~
       \mathcal {A}_1 \mathbb{B}_2 -\mathcal {A}_2 (\mathbb{B}_1 +W)=0,
\end{equation}
\begin{equation}
\label{tri-relation-B} \mathcal {B}_1(W+\mathbb{B}_1 )+\mathcal
{B}_2 \mathbb{B}_2 =0,   ~
       \mathcal {B}_1 \mathbb{B}_2 -\mathcal {B}_2 (\mathbb{B}_1 -W)=0,
\end{equation}
which provides the equation sets for $\mathcal{A}_1, \mathcal{A}_2,
\mathcal{B}_1, \mathcal{B}_2$. We note that since all the elements
in \eqref{tri-relation-A} and \eqref{tri-relation-B} are in
$\widetilde{G}_N(\mathbb{R})$, we can treat \eqref{tri-relation-A}
and \eqref{tri-relation-B} as ordinary linear equation sets. Let us
look at \eqref{tri-relation-A}. It first indicates
\begin{equation}
\mathcal {A}_2=\mathbb{B}^{-1}_{2}(W-\mathbb{B}_{1}) \mathcal {A}_1,
\end{equation}
and further, by eliminating $\mathcal {A}_2$,
\begin{equation}
(\mathbb{B}^2_{1}+\mathbb{B}^{2}_{2}-W^2)\mathcal{A}_1=0.
\label{abw}
\end{equation}
We note that
\begin{equation}
\mathbb{B}^2_{1}+\mathbb{B}^{2}_{2}-W^2=0.
\label{BBW=0}
\end{equation}
In fact, by calculation we find
\[(W^2)_{i,j}=\left\{
\begin{array}{ll}
0,& i<j,\\
\frac{1}{(i-j)!}\partial^{i-j}_{k_{12}}(k_{11}^2+k_{12}^2), ~& i\geq j,
\end{array}\right.
\]
from which it is easy to verify \eqref{BBW=0}. Thus, it turns out that
$\mathcal{A}_1$ can be an arbitrary element in
$\widetilde{G}_N(\mathbb{R})$ and then
\begin{equation}
\mathcal {A}=\mathcal {A}_1\mathcal {M}, ~
\mathcal{M}=I_N+i\mathbb{B}^{-1}_{2}(W-\mathbb{B}_{1}),
\label{M}
\end{equation}
where $I_N$ is the $N$th-order unit matrix.
Similarly, we can find
\[
\mathcal {B}=i\mathcal{B}_2\mathcal {M},~\mathcal{B}_2\in
\widetilde{G}_N(\mathbb{R}).
\]

Thus we can conclude that the solution to the CES \eqref{cond-mKdV}
can be given by
\begin{equation}
\varphi= \mathcal {M}(\mathcal{A}^+\mathcal{P}^{+}_{0}+i~\mathcal
{A}^{-}\mathcal{P}^{-}_{0}),~\mathcal{A}^{\pm}\in
\widetilde{G}_N(\mathbb{R}),
\end{equation}
where $\mathcal{M}$ is defined in \eqref{M}. Obviously, the matrix
$\mathcal{M}$ contributes nothing through the transformation
\eqref{trans-mKdV} to the mKdV equation. Therefore in practice we
may remove $\mathcal{M}$ and use the effective part
\begin{equation}
\varphi= \mathcal{A}^+\mathcal{P}^{+}_{0}+i~\mathcal
{A}^{-}\mathcal{P}^{-}_{0},~\mathcal{A}^{\pm}\in
\widetilde{G}_N(\mathbb{R}),
\end{equation}
which is as same as \eqref{gen-sol-21}.

\subsection{Breathers}
{\bf  Case III. Breathers}: When $\mathbb{A}$ has distinct complex (conjugate-pair) eigenvalues, we may have breather solutions.
Let us consider a $2N$-th order matrix,
\begin{equation}
\mathbb{A}={\rm Diag}( k_{1}^2,  \bar{k}_{1}^2, ~ \cdots, ~
k_{N}^2, \bar{k}_{N}^2)_{2N},~k_{j}\neq 0,
\label{mKdV-matrix-C4}
\end{equation}
where $k_j\in \mathbb{C},~ j=1,2,\cdots,N$. The matrix $\mathbb{B}$ which generates breathers is
\begin{equation}
\mathbb{B}={\rm Diag}(\Theta_{1},  \Theta_{2}, \cdots,
\Theta_{N})_{2N}, ~~ \Theta_{j}=\left(\begin{array}{cc}
0 & k_{j}  \\
\bar{k}_{j} & 0  \\
\end{array}\right).
\end{equation}
For this case, the Wronskian entry vector, i.e., the solution to the CES \eqref{mKdV-condition}, can be taken as
\begin{subequations}
\label{breather}
\begin{equation}
{\varphi}=(\varphi_{11}, \varphi_{12}, \varphi_{21}, \varphi_{22},
\cdots, \varphi_{N1}, \varphi_{N2})^{T},
\end{equation}
where
\begin{eqnarray}
\varphi_{j1}&&= a_{j}  e^{\xi_j}+b_{j} e^{-{\xi}_j}, ~
\varphi_{j2}=\bar{a}_{j}  e^{\bar{\xi}_j} -\bar{b}_{j}
e^{-{\bar{\xi}_j}}, \nonumber\\
\xi_j&& =k_{j}x- 4 k_j^3 t+ \xi_{j}^{(0)},~a_{j}, b_{j},\xi_{j}^{(0)}  \in \mathbb{C}.
\label{xi-breather}
\end{eqnarray}
\end{subequations}
With the above $\varphi$ as basic entry vector, the Wronskian $f(\varphi)$ will provide breather solutions for the mKdV equation.
For non-trivial solutions we need $k_{j1}\neq 0$, (see Sec.\ref{sec:5.3}).

\vskip 10pt
\noindent
{\bf  Case IV. Limit solutions of breathers}:
In this case let us consider the following block matrix,
\begin{subequations}
 \begin{equation}
\mathbb{A}=\left(
\begin{array}{ccccccc}
  \mathcal {K} & 0 & 0 & \ldots & 0 & 0 & 0\\
 \widetilde{\mathcal {K}} &\mathcal {K} & 0 & \ldots & 0 & 0 & 0\\
 I_2& \widetilde{\mathcal {K}} &\mathcal {K}  & \ldots & 0 & 0 & 0\\
   \ldots &  \ldots &  \ldots &  \ldots &  \ldots &  \ldots&  \ldots\\
 0 & 0 & 0 & \ldots & I_2& \widetilde{\mathcal {K}} &\mathcal {K}
\end{array}\right)_{2N},
\end{equation}
where
\begin{equation}
  \mathcal {K} = \left(
\begin{array}{cc}
k^2_1&  0 \\
 0 & \bar{k}_1^2  \\
\end{array}\right), ~~
\widetilde{\mathcal {K}} = \left(
\begin{array}{cc}
2k_1 &  0  \\
 0 & 2\bar{k}_1  \\
\end{array}\right),~~I_2=\left(
\begin{array}{cc}
1 &  0  \\
   0 & 1  \\
\end{array}\right).
\end{equation}
\end{subequations}
The matrix $\mathbb{B}$ satisfying \eqref{cc} can
be taken as
\begin{subequations}
\begin{equation}
 \mathbb{B}=\left(
\begin{array}{ccccc}
  \widetilde{B} & 0 & \ldots & 0 & 0 \\
  \widetilde{I}_2 & \widetilde{B} & \ldots & 0 & 0 \\
   \ldots &  \ldots &  \ldots &  \ldots &  \ldots\\
  0 &  0 & \ldots &  \widetilde{I}_{2} & \widetilde{B}
\end{array}\right)_{2N},
\end{equation}
where
\begin{equation}
\widetilde{I}_{2} = \left(
\begin{array}{cc}
0 &  1  \\
 1 & 0  \\
\end{array}\right), ~~\widetilde{B}=\left(
\begin{array}{cc}
0 &  k_{1}  \\
  \bar{ k}_{1} & 0  \\
\end{array}\right).
\end{equation}
\end{subequations}
The Wronskian vector of this case is also in the form of
\begin{subequations}
\label{phi-breather}
\begin{equation}
\varphi=(\varphi_{1, 1}, \varphi_{1, 2}, \varphi_{2, 1}, \varphi_{2,
2}, \cdots, \varphi_{N, 1}, \varphi_{N, 2})^{T},
\end{equation}
and for convenience we set
\begin{equation}
\varphi^{+}=(\varphi_{1, 1}, \varphi_{2, 1}, \cdots, \varphi_{N,
1})^{T}, ~~ \varphi^{-}=(\varphi_{1, 2}, \varphi_{2, 2}, \cdots,
\varphi_{N, 2})^{T}. \label{entry-lim-br-ab }
\end{equation}
\end{subequations}
Substituting \eqref{phi-breather} into \eqref{mKdV-condition} with the above $\mathbb{A}$ and $\mathbb{B}$ we get
\begin{subequations}
\label{entry-condition}
\begin{equation}
\label{entry-condition-A}
 \varphi _{xx}^ +  = \mathbb{A}_N{\varphi ^ + }, ~
\varphi _{xx}^ - =\bar{\mathbb{A}}_N  {\varphi ^ - },
 \end{equation}
\begin{equation}
\label{entry-condition-B} \varphi _x^ +  =
\mathbb{B}_{N}\bar{\varphi} ^ - , ~
 \varphi _x^ -  = \bar{\mathbb{B}}_N \bar{\varphi }^ + ,
\end{equation}
\begin{equation}
\label{entry-condition-t}
\varphi^{\pm}_{t}=-4\varphi^{\pm}_{xxx},
\end{equation}
\end{subequations}
 where
\begin{subequations}
\label{mathbb-A-B}
\begin{equation}
\mathbb{A}_N=\left(\begin{array}{ccccccc}
k^2_1 & 0    & 0   & \cdots & 0   & 0 & 0 \\
2k_1   & k^2_1  & 0   & \cdots & 0   & 0 & 0\\
1   & 2k_1    & k^2_1 & \cdots & 0   & 0 & 0\\
\cdots &\cdots &\cdots &\cdots &\cdots &\cdots &\cdots \\
0   & 0    & 0   & \cdots & 1 & 2k_1  & k^2_1
\end{array}\right)_N,
\label{mathbb-A-31}
\end{equation}
\begin{equation}
\mathbb{B}_N=\left(\begin{array}{cccccc}
k_1 & 0    & 0   & \cdots & 0   & 0 \\
1   & k_1  & 0   & \cdots & 0   & 0 \\
\cdots &\cdots &\cdots &\cdots &\cdots &\cdots \\
0   & 0    & 0   & \cdots & 1   & k_1
\end{array}\right)_N.
\label{mathbb-B-31}
\end{equation}
\end{subequations}
From \eqref{entry-condition-A}, \eqref{entry-condition-t} and
\eqref{mathbb-A-31}, one first has
\begin{equation}
\label{gen-sol-mkdv-31} \varphi^+ =  \mathcal {A}\mathcal
{Q}^{+}_{0}+\mathcal{B}\mathcal {Q}^{-}_{0},~ \varphi^-= \mathcal
{C}\bar{\mathcal {Q}}^{+}_{0}+\mathcal{D}\bar{\mathcal
{Q}}^{-}_{0},~~ \mathcal{A}, \mathcal{B},\mathcal{C},\mathcal{D} \in
\widetilde{G}_N(\mathbb{C}),
\end{equation}
where $\mathcal{Q}^{\pm}_{0}$ are defined by \eqref{mathbb-Q}
together with $\xi_1$ defined in \eqref{xi-breather}.
Then substituting \eqref{gen-sol-mkdv-31} into
\eqref{entry-condition-B} and making using of
\eqref{gen-sol-relate-1}, one finds
\begin{equation}
\mathcal{C} =\bar{\mathcal{A}},~~\mathcal{D} =-\bar{\mathcal{B}}.
\label{gen-sol-LIMIT-breather}
\end{equation}
Thus,  \eqref{gen-sol-mkdv-31} reads
\begin{equation}
\label{gen-sol-mkdv-31-new} \varphi^+ =  \mathcal {A}\mathcal
{Q}^{+}_{0}+\mathcal{B}\mathcal {Q}^{-}_{0},~ \varphi^-=
\bar{\mathcal{A}}\bar{\mathcal
{Q}}^{+}_{0}-\bar{\mathcal{B}}\bar{\mathcal
{Q}}^{-}_{0},~~ \mathcal{A}, \mathcal{B} \in
\widetilde{G}_N(\mathbb{C}).
\end{equation}

The Wronskian $f(\varphi)$ with $\varphi$ defined by
\eqref{phi-breather} with \eqref{gen-sol-mkdv-31-new} will provide a limit solution of breathers for
the mKdV equation. In fact, quite similar to the procedure for the limit solutions of solitons we
described in Sec.\ref{sec:3.1},
here the Wronskian $f(\varphi)$ with  \eqref{phi-breather} is
related to the $N$-breather solution by taking the limit
$k_j\rightarrow k_1$ successively for $j=2,3,\cdots,N$.

\section{Rational solutions}

\subsection{Backgrounds}

Following solution structures of the KdV equation\cite{ZDJ-arxiv},
rational solutions should be led from zero eigenvalues of the coefficient matrix $\mathbb{A}$.
For the mKdV equation this requires $|\mathbb{B}|=0$.
However, this is not allowed in the Wronskian verification (see Appendix \ref{A-1})
because a trivial $\mathbb{B}$ will lead to a zero Wronskian in the light of \eqref{eqA1}.
Thus, it is clear that to get non-trivial rational solutions to the mKdV equation,
we need a non-trivial matrix $\mathbb{B}$.

Let us go back to the KdV-mKdV equation \eqref{kdv-mkdv}, i.e.,
\begin{equation}
{V_t}+ 12{v_0}V{V_X} + 6{V^2}{V_X} + {V_{XXX}} = 0,
\label{kdv-mkdv-1}
\end{equation}
which is related to the mKdV equation \eqref{mKdV} through a Galilean transformation
\begin{equation}
v(x,t)=v_{0}+V(X,t), ~~x=X+6{v_0}^{2}t, \label{GT}
\end{equation}
where $v_0$ is a real parameter.
We note that the equation \eqref{kdv-mkdv-1} admits non-trivial and non-singular rational solutions when $v_0\neq 0$.
Then using the transformation \eqref{GT}  rational solutions to the mKdV equation can be obtained.\
This fact has been realized via  B\"{a}cklund transformation(BT)\cite{Ono-JPSJ-1976}
and Hirota method with a limiting procedure\cite{Abl-Satsuma-1978,Ablowitz-book},
but the presentation for high order rational solutions is complicated.
In the following we will derive rational solutions in terms of  Wronskian,
which provides not only explicit but also impact forms for high order rational solutions.

\subsection{Rational solutions}

Still employing the same transformation as \eqref{trans-mKdV}, i.e.,
\begin{equation}
V=i\,\Big( \ln\frac{\bar{f}}{f}\Big)_X,
\label{trans-kdv-mkdv}
\end{equation}
 the KdV-mKdV equation \eqref{kdv-mkdv-1}
can be written into the bilinear form\cite{Wadati-JPSJ-1975,HS-PTP-1976}
\begin{subequations}
\begin{align}
(D_t+D_{X}^{3})\bar{f}\cdot f & =0,\label{eq4a}
\\
(D_{X}^{2}-2iv_{0}D_X)\bar{f}\cdot f & =0.\label{eq4b}
\end{align}
\label{bilinear-kdv-mkdv}
\end{subequations}

For the solutions to \eqref{bilinear-kdv-mkdv} in Wronskian form, we have
\begin{theorem}\label{Th 4.1}
The bilinear equation \eqref{bilinear-kdv-mkdv} admits Wronskian solution
\begin{equation}
f=W(\phi)=|\h{N-1}|,
\label{f-N}
\end{equation}
where the entry vector $\phi$ satisfies
\begin{subequations}
\label{cond}
\begin{align}
i\phi _{X}& =v_{0}\phi +B(t)\bar{\phi}, \label{eq99}
\\
\phi _{t}& =-4\phi _{XXX}+C(t)\phi, \label{eq98}
\end{align}
\end{subequations}
in which  $B(t)$ and $C(t)$ are two $N\times N$ matrices of $t$ but independent of $x$, and
satisfy
\begin{subequations}
\begin{eqnarray}
&&|B(t)|\neq 0,~~ \\
&&\mathrm{tr}{(C(t))}\in \mathbb{R}(t),\label{tr-Ct}\\
&&{B_t}(t) + B(t)\bar{C}(t)-C(t)B(t)=0.
\end{eqnarray}
\end{subequations}
\end{theorem}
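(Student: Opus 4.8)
The plan is to verify the two bilinear equations \eqref{eq4a} and \eqref{eq4b} directly by the Wronskian technique, in close analogy with the proof of Theorem \ref{Th 2.1}, the only genuinely new feature being the terms carrying the constant $v_0$. First I would extract from the hypotheses \eqref{cond} the algebraic relations that drive the whole computation. Rewriting \eqref{eq99} as $\phi_X=-iv_0\phi-iB(t)\bar{\phi}$ and conjugating gives $\bar{\phi}_X=iv_0\bar{\phi}+i\bar{B}(t)\phi$; differentiating once more and eliminating the conjugate shows
\begin{equation}
\phi_{XX}=\big(B(t)\bar{B}(t)-v_0^2 I\big)\phi,
\end{equation}
an $X$-independent coefficient relation of exactly the KdV type. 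This is what allows the standard Wronskian column-shift identities for $f_X,f_{XX},f_{XXX}$ (and likewise for $\bar{f}=W(\bar{\phi})$, using that $X,t$ are real) to be used, and it reduces the time part to the familiar situation handled in \cite{Ma-You-KdV,ZDJ-arxiv}.

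Next I would treat the constraint equation \eqref{eq4b}, where the $v_0$ terms must be accounted for. Differentiating $\phi_X=-iv_0\phi-iB\bar{\phi}$ a total of $N-1$ times (using that $B$ is $X$-independent) yields $\phi^{(N)}=-iv_0\phi^{(N-1)}-iB\bar{\phi}^{(N-1)}$, so that
\begin{equation}
f_X=|\h{N-2},\phi^{(N)}|=-iv_0 f-i\,|\h{N-2},B\bar{\phi}^{(N-1)}|,
\end{equation}
and an analogous expression holds for $\bar{f}_X$. Thus $f_X+iv_0 f$ (respectively $\bar{f}_X-iv_0\bar{f}$) is expressed purely through a determinant that couples the columns of $\phi$ to a $B$-weighted column of $\bar{\phi}$, and Proposition \ref{Prop 2.1} is exactly the tool for handling the matrix $B$ acting on a single column. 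Substituting these together with the corresponding second-derivative expressions into $D_X^2\bar{f}\cdot f-2iv_0 D_X\bar{f}\cdot f$, the role of the $-2iv_0 D_X$ term is precisely to absorb the $v_0$-shift, after which the surviving determinant identity collapses to a Pl\"ucker relation of the form \eqref{D-abcd}.

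For the evolution equation \eqref{eq4a} I would proceed as in the mKdV case: using $\phi_t=-4\phi_{XXX}+C(t)\phi$ to compute $f_t$, the part $C\phi$ contributes $\mathrm{tr}(C)\,f$ by Proposition \ref{Prop 2.1}, while $-4\phi_{XXX}$ produces the usual KdV time-derivative Wronskians. Computing $D_t\bar{f}\cdot f=\bar{f}_t f-\bar{f} f_t$, the $C$-contributions combine into $(\mathrm{tr}(\bar{C})-\mathrm{tr}(C))\bar{f}f$, which vanishes exactly because of the hypothesis $\mathrm{tr}(C)\in\mathbb{R}(t)$; the remaining terms, together with those from $D_X^3\bar{f}\cdot f$, reduce once more to a Pl\"ucker identity. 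The compatibility condition $B_t+B\bar{C}-CB=0$ is the integrability condition guaranteeing that a simultaneous solution $\phi$ of \eqref{eq99}--\eqref{eq98} exists (it is the content of $\phi_{Xt}=\phi_{tX}$), while $|B(t)|\neq 0$ is what keeps the Wronskian \eqref{f-N} from degenerating to zero.

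The main obstacle I anticipate is the bookkeeping of the complex conjugation: because $f=W(\phi)$ and $\bar{f}=W(\bar{\phi})$ are genuinely different Wronskians, every derivative must be tracked together with the $B$-coupling that links the $\phi$- and $\bar{\phi}$-columns, and one must verify that the $v_0$-dependent pieces assemble exactly into the Hirota $D_X$ operator in \eqref{eq4b} rather than leaving an uncompensated remainder. Organizing these coupled determinants so that Propositions \ref{Prop 2.1} and \ref{Prop 2.2} apply cleanly --- rather than re-checking the two bilinear equations from scratch --- is the part that requires care.
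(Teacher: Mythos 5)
Your overall skeleton matches the paper's: direct Wronskian verification, the derived relation $\phi_{XX}=(B(t)\bar{B}(t)-v_0^2I_N)\phi$ feeding Proposition \ref{Prop 2.1}, reduction to the Pl\"ucker identity of Proposition \ref{Prop 2.2}, and a correct identification of where $\mathrm{tr}(C(t))\in\mathbb{R}(t)$ and the compatibility condition $B_t+B\bar C-CB=0$ enter. But the step you yourself flag as ``the part that requires care'' is precisely the step that constitutes the proof, and your proposed organization does not supply it. The Pl\"ucker relation \eqref{D-abcd} requires all the determinants occurring in each product $\bar f_{(\cdot)}f_{(\cdot)}$ to share a common $N\times(N-2)$ block of columns, so $f$ and $\bar f$ must first be expressed over a \emph{single} column family. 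The paper achieves this by using $|B(t)|\neq 0$ to invert \eqref{eq99}, giving $\bar\phi^{(j)}=B^{-1}(t)\big(i\phi^{(j+1)}-v_0\phi^{(j)}\big)$, and then expanding multilinearly to obtain
\begin{equation*}
\bar f=|B^{-1}(t)|\sum_{j=0}^{N}(-v_0)^j\,i^{\,N-j}\,\big|\phi^{(0)},\dots,\phi^{(j-1)},\phi^{(j+1)},\dots,\phi^{(N)}\big|,
\end{equation*}
together with analogous (lengthy) $v_0$-weighted expansions for $\bar f_X,\bar f_{XX},\bar f_{XXX},\bar f_t$; only after this do the terms of \eqref{eq4a} and \eqref{eq4b} recombine into Pl\"ucker triples. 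This is also the real role of the hypothesis $|B(t)|\neq0$ --- it licenses the inversion of \eqref{eq99} --- not merely keeping $f$ from degenerating.

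Your alternative step goes in the opposite direction: writing $f_X=-iv_0f-i\,|\phi,\dots,\phi^{(N-2)},B\bar\phi^{(N-1)}|$ injects a $\bar\phi$-column into an $f$-determinant instead of eliminating $\bar\phi$ from the $\bar f$-determinants, and Proposition \ref{Prop 2.1} cannot then be invoked as you suggest: it is a sum-over-all-columns (trace-type) identity and gives nothing for a matrix $B$ acting on a single column of a determinant. Unless you replace this with the $B^{-1}$-expansion above (or an equivalent device placing $f$ and $\bar f$ over common columns), the claimed collapse of $D_X^2\bar f\cdot f-2iv_0D_X\bar f\cdot f$ and of $(D_t+D_X^3)\bar f\cdot f$ to Pl\"ucker identities cannot be carried out, so the proposal as written has a genuine gap at its central step.
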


The proof is similar to the one for Theorem \ref{Th 2.1}, but \eqref{eq99} results in  complicated expression for $\bar{f}$ and its derivatives.
We leave the proof in \ref{A:sec-4}.

As in Sec.\ref{sec:2.3}, with the help of the auxiliary matrix $\mathbb{A}$ and auxiliary  equation
\begin{equation}\label{A-aux}
\varphi_{XX}=\mathbb{A}\varphi,
\end{equation}
one can simplify the CES \eqref{cond} to
\begin{subequations}
\label{cond1}
\begin{align}
i\varphi _{X}& =v_{0}\varphi +\mathbb{B}\bar{\varphi}, \label{eq97}
\\
\varphi _{t}& =-4\varphi_{XXX},\label{eq96}
\end{align}
\end{subequations}
where both $\mathbb{A}$ and $\mathbb{B}$ are $N\times N$ complex constant matrices and  related by
\begin{equation}\label{A-BB}
    \mathbb{A}=\mathbb{B}\bar{\mathbb{B}}-v_0^{2}I_N,
\end{equation}
in which $I_N$ is the $N$th-order unit matrix.
It might be possible that here we discuss all possible solutions according to the eigenvalues of $\mathbb{A}$,
as we have done  in the previous section.
However, since we have had a clear description for solitons and breathers of the mKdV equation in the previous section,
and the parameter $v_0$ will bring more complexity,
in the following we can neglect the discussion of $\mathbb{A}$ and let us only focus on rational solutions.
In fact, rational solutions correspond to the zero eigenvalues of $\mathbb{A}$.

We derive rational solutions as the limit solutions of solitons.
The $N$-soliton solution corresponds to
\begin{equation}
\mathbb{B}=\mathrm{diag}\Big(-\sqrt{v_{0}^{2}+k_{1}^{2}},-\sqrt{v_{0}^{2}+k_{2}^{2}},\cdots,
-\sqrt{v_{0}^{2}+k_{N}^{2}}\ \Big),
\end{equation}
where we take $k_j$ to be $N$ distinct real positive numbers.
In this case, a solution to the CES \eqref{cond1} is
\begin{equation}
\varphi=(\varphi_{1},  \varphi_{2}, \cdots,  \varphi_{N})^{T},
\label{gen-s-1}
\end{equation}
with
\begin{equation}
\varphi_{j}=\sqrt{2v_{0}+2ik_{j}}\,e^{\eta_j}+\sqrt{2v_{0}-2ik_{j}}\,
e^{-\eta_j}, ~\eta_j=k_{j}X-4k_{j}^{3}t, \label{phi-j}
\end{equation}
This provides an $N$-soliton solution to the KdV-mKdV equation \eqref{kdv-mkdv-1}
through the transformation \eqref{trans-kdv-mkdv} with $f(\phi)$.

Now in the CES \eqref{eq97} we take $\mathbb{B}$ to be a lower triangular Toeplitz matrix,
\begin{equation}
\mathbb{B}=\left(
\begin{array}{cccc}
\alpha _{0} &  &  &  \\
\alpha _{1} & \alpha _{0} &  &  \\
\vdots  & \ddots  & \ddots  &  \\
\alpha _{N-1} & \cdots  & \alpha _{1} & \alpha _{0}
\end{array}
\right),
\end{equation}
with
\begin{equation}
\alpha_{j}=-\frac{1}{(2j)!}\frac{\partial ^{2j} }{{\partial
k_{1}}^{2j}}\sqrt{v_{0}^{2}+k_{1}^{2}}\,\Big|_{k_{1}=0},~~(j=0,1,\cdots,N-1).
\end{equation}
In this case the CES \eqref{cond1} admits a solution
\begin{subequations}
\begin{equation}
\psi =(\psi_1,\psi_2,\cdots,\psi_N)^T,
\end{equation}
with
\begin{equation}
\psi_{j+1}= \frac{1}{(2j)!}\frac{\partial ^{2j} }{{\partial
k_1}^{2j}}\varphi_{1}\,\biggr|_{k_1=0},~~(j=0,1,\cdots,N-1),
\end{equation}
\label{rat-psi}
\end{subequations}
where $\varphi_1$ is defined by \eqref{phi-j}. Then, with such a
$\psi$ as basic column vector, the Wronskian
\begin{equation}
f=W(\psi)=|\h{N-1}| \label{f-N}
\end{equation}
provides non-singular rational solutions to the KdV-mKdV equation \eqref{kdv-mkdv-1}.
A simplified form of these solutions is
\begin{equation}
V(X,t)=\frac{-2(F_{1,X}F_2-F_1F_{2,X})}{F_1^2+F_2^2},~~~F_1=\mathrm{Re}[f],~ F_2=\mathrm{Im}[f].
\end{equation}
We list the first three non-trivial $f$ (for $N=2,3,4$, respectively) in the following,
\begin{subequations}
\begin{align}
f =&4(2v_{0}X+i),\label{eq1012}
\\
f =&\frac{16v_{0}\sqrt{2v_{0}}}{3}
\Big[X^{3}+12t-\frac{3X}{4v_{0}^{2}}+\frac{3i}{2v_{0}}(X^{2}+\frac{1}{4v_{0}^{2}})\Big],\label{eq1013}
\\
f =&
-\frac{1}{{v_0^4}} - \frac{{4{X^2}}}{{v_0^2}} - \frac{{16}}{3}(12tX +{X^4}) - \frac{{64}}{{45}}v_0^2(720{t^2} - 60t{X^3}  - {X^6})\nonumber\\
&+ 4i{v_0}\Big( - \frac{{24t}}{{v_0^2}} +
\frac{X}{{v_0^4}} + 32t{X^2} +\frac{{16}}{{15}}{X^5}\Big).\label{f4}
\end{align}
\end{subequations}

For the rational solutions to the mKdV equation \eqref{mKdV}, we have
\begin{proposition}
\label{P:rat-mkdv}
The non-trivial rational solutions to the mKdV equation is given by
\begin{equation}
v(x,t)=v_{0}-\frac{2(F_{1,X}F_2-F_1F_{2,X})}{F_1^2+F_2^2},~~ X=x-6v_{0}^{2}t, \label{rat}
\end{equation}
where $f$ is the Wronskian \eqref{f-N} composed with \eqref{rat-psi}, $F_1=\mathrm{Re}[f],~ F_2=\mathrm{Im}[f]$.
\end{proposition}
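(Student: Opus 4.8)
The plan is to read off the rational solutions of the mKdV equation \eqref{mKdV} from the rational solutions of the KdV-mKdV equation \eqref{kdv-mkdv-1} constructed immediately above, by pulling them back through the Galilean transformation \eqref{GT}. Almost all of the analytic content is already in place: the preceding construction, resting on Theorem \ref{Th 4.1}, shows that the Wronskian $f=W(\psi)$ built from \eqref{rat-psi} yields, via \eqref{trans-kdv-mkdv}, a bona fide rational solution
\[
V(X,t)=\frac{-2(F_{1,X}F_2-F_1F_{2,X})}{F_1^2+F_2^2},\qquad F_1=\mathrm{Re}[f],\ \ F_2=\mathrm{Im}[f],
\]
of \eqref{kdv-mkdv-1}. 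Thus the proposition is essentially a corollary, and I would present it that way.

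First I would record the transformation rule explicitly. Writing $v(x,t)=v_0+V(X,t)$ with $X=x-6v_0^2t$, the inverse of \eqref{GT}, the chain rule gives $\partial_x=\partial_X$ and $\partial_t|_x=\partial_t|_X-6v_0^2\,\partial_X$, so that $v_x=V_X$, $v_{xxx}=V_{XXX}$ and $v_t=V_t-6v_0^2V_X$. Substituting these into the left-hand side of \eqref{mKdV} and expanding $6v^2v_x=6(v_0+V)^2V_X=6v_0^2V_X+12v_0VV_X+6V^2V_X$, the two $6v_0^2V_X$ terms cancel and what remains is exactly
\[
V_t+12v_0VV_X+6V^2V_X+V_{XXX},
\]
i.e. the left-hand side of \eqref{kdv-mkdv-1}. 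Hence $v$ solves \eqref{mKdV} if and only if $V$ solves \eqref{kdv-mkdv-1}.

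Combining the two steps is now immediate: substituting the explicit $V$ above into $v=v_0+V$ and writing the minus sign out front produces precisely \eqref{rat}. Non-triviality is clear because the chosen $\mathbb{B}$ is nonsingular with leading entry $\alpha_0=-\sqrt{v_0^2}=-v_0\neq0$ for $v_0\neq0$, so $V$ is not constant; the low-order cases \eqref{eq1012}--\eqref{f4} display this concretely.

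The only point that is not purely formal, and which I expect to be the main obstacle, is non-singularity: one must verify that the denominator $F_1^2+F_2^2$ never vanishes on the real $(X,t)$-plane, equivalently that the polynomial $f$ has no real zeros. Since $F_1$ and $F_2$ are real polynomials, $F_1^2+F_2^2=0$ would force a common real zero of $\mathrm{Re}[f]$ and $\mathrm{Im}[f]$; I would rule this out by tracing the construction back through the relation $\mathbb{A}=\mathbb{B}\bar{\mathbb{B}}-v_0^2I_N$ in \eqref{A-BB} to the nonsingular rational solutions of the KdV equation, whose Wronskians are known to have no real zeros. This is precisely why $v_0\neq0$ is required: it keeps the zeros of $f$ off the real $(X,t)$-plane, guaranteeing a smooth, globally defined rational solution of the mKdV equation.
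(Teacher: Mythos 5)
Your proposal is correct and follows essentially the same route as the paper: the paper offers no separate proof of Proposition \ref{P:rat-mkdv}, treating it as an immediate corollary of Theorem \ref{Th 4.1}, the Wronskian \eqref{rat-psi}, and the Galilean transformation \eqref{GT}, which is exactly the chain-rule computation you spell out. One caveat on your closing remark: rational KdV solutions in Wronskian form are in fact singular (e.g.\ $-2/x^2$), so tracing non-singularity back to the KdV equation would not work; here it rests instead on checking that $F_1$ and $F_2$ have no common real zero for the specific polynomials \eqref{eq1012}--\eqref{f4} (for instance $F_2>0$ identically when $N=3$), a point the paper likewise asserts without proof.
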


The first non-trivial rational solution to the mKdV equation reads
\begin{equation}
v =v_{0}-\frac{4v_{0}}{4v_{0}^{2}X^{2}+1},~~X=x-6v_{0}^{2}t,
\label{1rs}
\end{equation}
and the second one
\begin{equation}
v=v_{0}-\frac{12v_{0}(X^{4}+\frac{3}{2v_{0}^{2}}X^{2}-\frac{3}{16v_{0}^{4}}
-24Xt)}{4v_{0}^{2}(X^{3}+12t-\frac{3X}{4v_{0}^{2}})^{2}+9(X^{2}+
\frac{1}{4v_{0}^{2}})^{2}},~~~X=x-6v_{0}^{2}t. \label{2rs}
\end{equation}

We note that there is a bilinear BT\cite{Nimmo-Freeman-JPA} related to \eqref{bilinear-kdv-mkdv}
and the BT admits multi-soliton solutions\cite{Nimmo-Freeman-JPA} and rational solutions\cite{ZH-S-rational}
in Wronskian form.
Besides, rational solutions of the mKdV equation (for both $\varepsilon=\pm 1$)
may also be derived (through the Galilean transformed version)
by using the long-wave-limit approach described in Ref.\cite{rational-1} (also see \cite{Gesztesy-Rmp-1991,Zhang-Toda rational}).

In next section we will discuss dynamics of solutions including these rational solutions.

\section{Dynamics analysis}

In this section we investigate dynamics of two-soliton solutions,
limit solutions, breathers and rational solutions. To describe the
relationship between two-soliton solution and the simplest limit solution,
let us start from the asymptotic behaviors of two-soliton
interactions.

\subsection{Dynamics of solitons}

Soliton solutions to the mKdV equation \eqref{mKdV} can be described by
\begin{equation}
v(x,t)=-\frac{2(F_{1,x}F_2-F_1F_{2,x})}{F_1^2+F_2^2},~~~F_1=\mathrm{Re}[f],~ F_2=\mathrm{Im}[f],
\label{solu-mkdv}
\end{equation}
where $f=f(\varphi)=|\widehat{N-1}|$ is the Wronskian composed by
the basic column vector $\varphi$ which is defined by
\eqref{gen-solitons}, or equivalently, by either \eqref{mkdv-c12} or
\eqref{mkdv-c13} in Case I.
We note that from the transformation \eqref{trans-mKdV} the solution to the mKdV equation \eqref{mKdV} can also be written as
\begin{equation}
v=-2\Big(\arctan \frac{F_1}{F_2}\Big)_x=2\Big(\arctan \frac{F_2}{F_1}\Big)_x,~~
F_1=\mathrm{Re}[f],~ F_2=\mathrm{Im}[f],
\label{sol-arctan}
\end{equation}
while \eqref{solu-mkdv} gives a more explicit form.

In the following we investigate 1- and
2-soliton solutions, which are, respectively, corresponding to
\begin{subequations}
\begin{align}
f=f_1=& f(\varphi_1)=\varphi_1,\label{f-1ss}\\
f=f_2=& f((\varphi_1,\varphi_2)^T)=
\Big|\begin{array}{cc}
\varphi_1&\varphi_{1,x}\\
\varphi_2&\varphi_{2,x}
\end{array}\Big|,\label{f-2ss}
\end{align}
\end{subequations}
with   $\varphi_j$ defined by \eqref{mkdv-c11}, i.e.,
\begin{equation}
\varphi^{}_{j}= a_{j}^+  e^{\xi_{j}}+ i a_{j}^-
 e^{-\xi_{j}}, ~\xi_{j}=k_{j}x-4k_{j}^{3}t+\xi_{j}^{(0)},~~
 a_{j}^\pm, k_j, \xi_{j}^{(0)} \in \mathbb{R}.
\label{mkdv-c11-sec5}
\end{equation}

Then, one-soliton solution to the  mKdV equation \eqref{mKdV} reads
\begin{equation}
\label{1-soliton} v=-2\cdot
\mathrm{sgn}\Big[\frac{a_1^-}{a_1^+}\Big]\cdot k_1\cdot
\mbox{sech}\Big(2k_1x-8k_1^3t-\ln\Big|\frac{a_1^-}{a_1^+}\Big|\Big),
\end{equation}
as depicted in Fig.\ref{Fig-1},
where for convenience, we call (a) soliton and (b)
anti-soliton due to the signs of their amplitudes.

\begin{figure}[!h]
\setlength{\unitlength}{1mm}
\begin{center}
\begin{picture}(00,40)
\put(-65,-5){\resizebox{!}{3.7cm}{\includegraphics{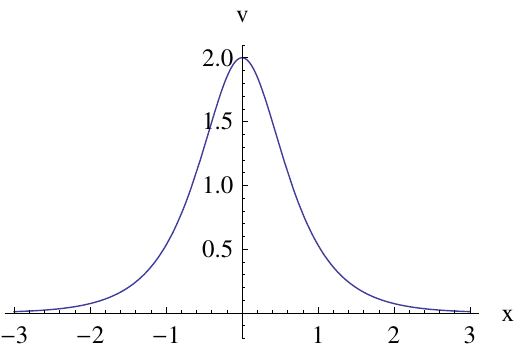}}}
\put(5,-5){\resizebox{!}{3.7cm}{\includegraphics{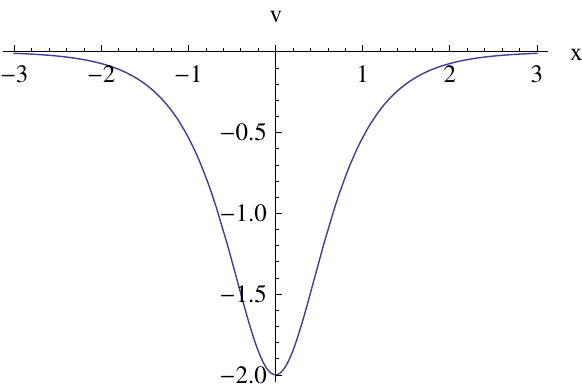}}}
\end{picture}
\end{center}
\vskip 1pt
\begin{minipage}{15cm}{\footnotesize
~~~~~~~~~~~~~~~~~~~~~~~~~~~~~~~~~~~~(a)~~~~~~~~~~~~~~~~~~~~~~~~~~~~~~~~~~~~
~~~~~~~~~~~~~~~~~~~~~~~~(b)}
\end{minipage}
\caption{ The shape of  one-soliton solution given by
\eqref{1-soliton} at $t=0$. (a). Soliton for
$a_1^+=a_1^-=1$, $k_1=-1$ and $\xi_{1}^{(0)}=0$. (b). Anti-soliton for $a_1^+=a_1^-=1$,
$k_1=1$  and $\xi_{1}^{(0)}=0$. \label{Fig-1}}
\end{figure}

Obviously, this soliton is identified by the amplitude
\[\mathrm{Amp}= -2\cdot \mathrm{sgn}\big[\frac{a_1^-}{a_1^+}\big]\cdot k_1\]
and top trace (trajectory)
\begin{equation}
x(t)=4k_1^2t+\frac{1}{2k_1}\ln \big|\frac{a_1^-}{a_1^+}\big|,
\label{top}
\end{equation}
or velocity $4k_1^2$.
Obviously, solitons of the mKdV equation are single-direction waves.

Next, let us look at two-soliton solution.
The two-soliton solution of the mKdV (\ref{mKdV}) can be expressed by \eqref{solu-mkdv}
where from \eqref{f-2ss}
\begin{subequations}
\begin{align}
F_1&=(k_2-k_1)\big(a_1^-a_2^-e^{4(k_1^3+k_2^3)t-(k_1+k_2)x}+a_1^+a_2^+e^{-4(k_1^3+k_2^3)t+(k_1+k_2)x}\big),\\
F_2&=(k_2+k_1)\big(a_1^-a_2^+e^{4(k_1^3-k_2^3)t-(k_1-k_2)x}-a_1^+a_2^-e^{4(-k_1^3+k_2^3)t+(k_1-k_2)x}\big).
\end{align}
\end{subequations}
We assume   that $a_j^\pm \neq 0$ and set
$\mathrm{sgn}\big[\frac{a_1^{-}}{a_1^{+}}\big]=\mathrm{sgn}\big[\frac{a^{-}_2}{a^+_2}\big]$,
so that  $F_1,F_2$ are not zero at same time.
Without loss of generality, we also take  $0<|k_2|<|k_1|$.
For the analysis of asymptotic behaviors, it is convenient to use
the following expression
\begin{equation}
v=2\Big(\arctan\,\frac{F_2}{F_1}\Big)_x.
\label{2-two-soliton}
\end{equation}

There are two types of 2-soliton interactions, soliton-soliton (or
anti-soliton-anti-soliton)
 interaction and soliton-anti-soliton interaction, as shown in Fig.\ref{Fig-2}.

\begin{figure}[!h]
\setlength{\unitlength}{1mm}
\begin{center}
\begin{picture}(00,40)
\put(-65,-5){\resizebox{!}{4cm}{\includegraphics{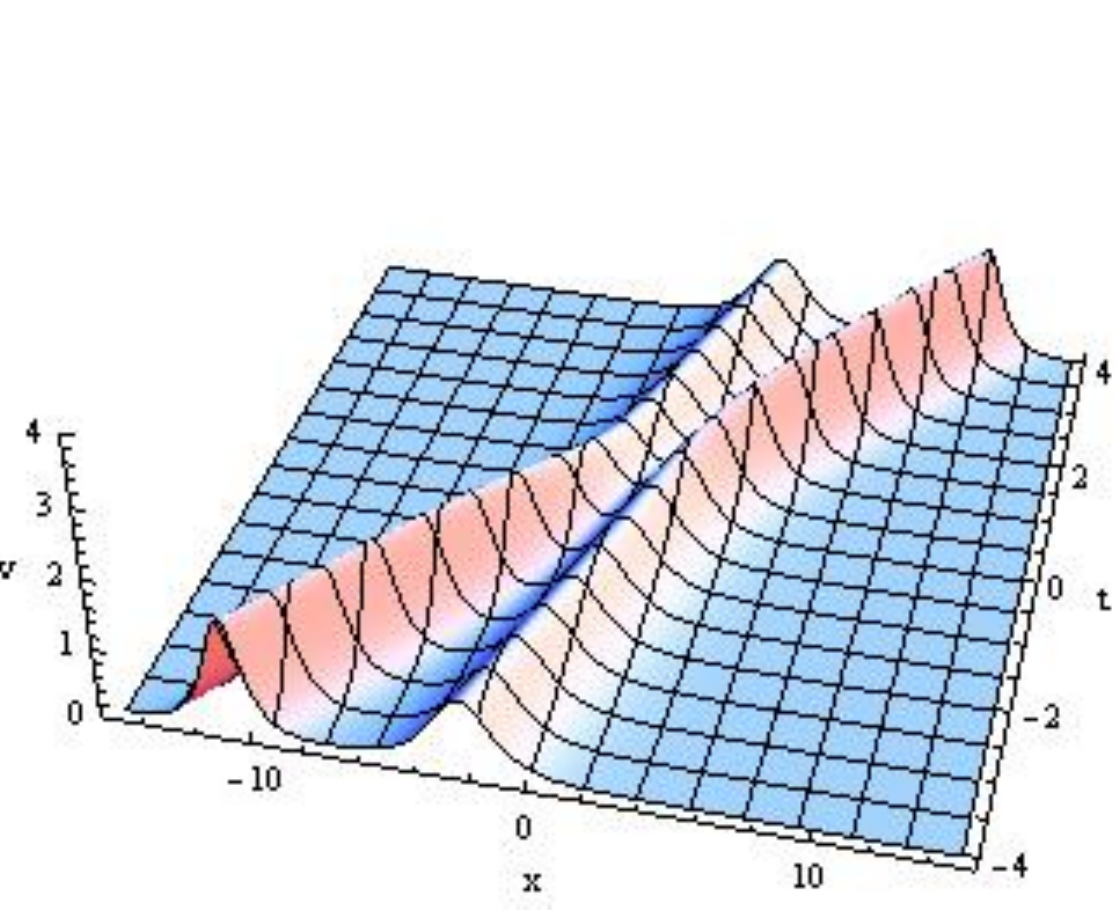}}}
\put(5,-5){\resizebox{!}{4cm}{\includegraphics{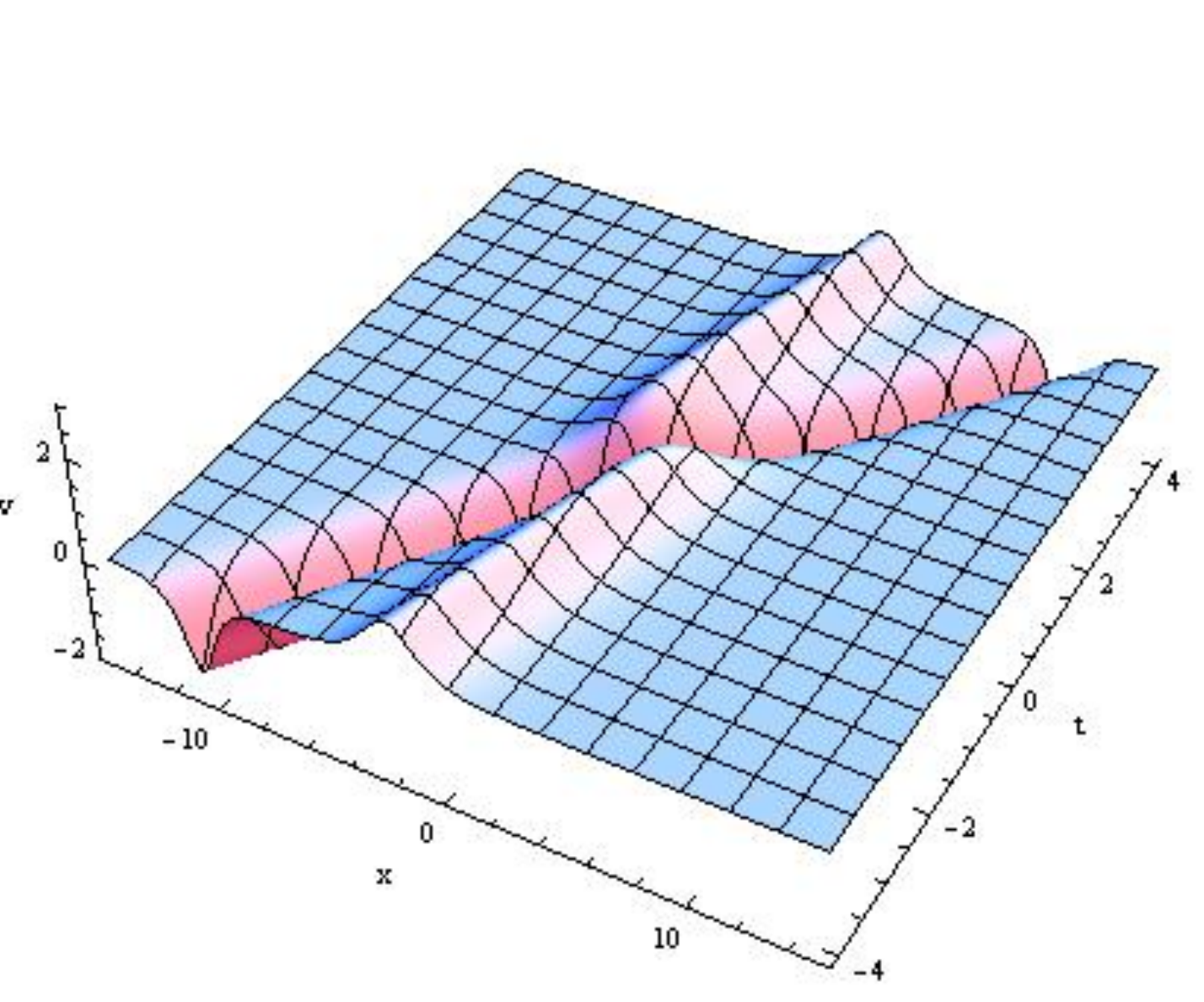}}}
\end{picture}
\end{center}
\vskip 1pt
\begin{minipage}{15cm}{\footnotesize
~~~~~~~~~~~~~~~~~~~~~~~~~~~~~~~~~~~~(a)~~~~~~~~~~~~~~~~~~~~~~~~~~~~~~~~~~~~
~~~~~~~~~~~~~~~~~~~~~~~~(b)}
\end{minipage}
\caption{Two-soliton interactions. (a). Soliton-soliton for
$a_1^+=a_1^-=a_2=1,~b_2=-1$, $k_1=0.8,~k_2=0.5$  and $\xi_{1}^{(0)}=\xi_{2}^{(0)}=0$. (b).
Soliton-anti-soliton for $a_1^+=a_1^-=a_2=1,~b_2=-1$,
$k_1=-0.8,~k_2=0.5$   and $\xi_{1}^{(0)}=\xi_{2}^{(0)}=0$. \label{Fig-2}}
\end{figure}

To investigate asymptotic behaviors of the two solitons involved in
interaction, we first name them $k_1$-soliton and $k_2$-soliton,
respectively. Then we rewrite the two-soliton solution
\eqref{2-two-soliton} in terms of the following coordinates,
\begin{equation}
\big(X_1=x-4k_1^2t, t \big), \label{k1-frame}
\end{equation}
which then gives
\begin{equation}
\label{2s-X}
v=2\Big(\arctan\,\frac{(a_1^+a_2^-e^{2k_1X_1}-a_2^+a_1^-e^{8k_2(k_1^2-k_2^2)t+2k_2X_1})(k_1+k_2)}
{(a_1^-a_2^-+a_1^+a_2^+e^{8k_2(k_1^2-k_2^2)t+2(k_1+k_2)X_1})(k_1-k_2)}
\Big)_{X_1}.
\end{equation}
Noting that for any $0<|k_2|<|k_1|$ it is always valid that
$k_1^2-k_2^2>0$ and $\frac{k_1+k_2}{k_1-k_2}>0$, we can keep $X_1$
to be constant and let $t$ go to infinity. Then we can find there is
only $k_1$-soliton left along the line $X_1=const.$ and also find
how the $k_1$-soliton is asymptotically identified by its top trace
and amplitude, for both $t\to \pm\infty$.

As for details, when $k_2>0,~t \to -\infty$ or $k_2<0,~t \to +\infty$, i.e.
$\mathrm{sgn}[k_2]\cdot t \to -\infty$, the solution \eqref{2s-X} becomes
\begin{align}
\label{2s-X-1}
v=&2\Big(\arctan
\frac{a_1^+(k_1+k_2)e^{2k_1X_1}}{a_1^-(k_1-k_2)}\Big)_{X_1}\nonumber\\
=&2\cdot\mathrm{sgn}\Big[\frac{a_1^-}{a_1^+}\Big]\cdot
k_1 \cdot
\mbox{sech}\Big(2k_1X_1-\ln\Big|\frac{a_1^-}{a_1^+}\Big|+\ln\frac{k_1+k_2}{k_1-k_2}\Big);
\end{align}
and when $k_2>0,~t \to +\infty$ or $k_2<0,~t \to -\infty$, i.e.
$\mathrm{sgn}[k_2]\cdot t \to +\infty$,  \eqref{2s-X} becomes
\begin{align}
\label{2s-X-2}
v=&2\Big(\arctan \frac{-a_1^-(k_1+k_2)}{a_1^+e^{2k_1X}(k_1-k_2)}
\Big)_{X_1} \nonumber\\
=&2\cdot\mathrm{sgn}\Big[\frac{a_1^-}{a_1^+}\Big]\cdot k_1 \cdot
\mbox{sech}\Big(2k_1X_1-\ln\Big|\frac{a_1^-}{a_1^+}\Big|-\ln\frac{k_1+k_2}{k_1-k_2}\Big).
\end{align}

We can also rewrite the two-soliton solution \eqref{2-two-soliton} in
terms of the coordinates
\begin{equation}
\big(X_2=x-4k_2^2t, t \big), \label{k2-frame}
\end{equation}
and do a similar asymptotic analysis for the $k_2$-soliton.
Finally, we reach to
\begin{theorem}\label{T:2s}
Suppose that
$\mathrm{sgn}\big[\frac{a_1^-}{a_1^+}\big]=\mathrm{sgn}\big[\frac{a_2^-}{a_2^+}\big]$,
$a_j^\pm \neq 0$ and $0<|k_2|<|k_1|$ in \eqref{2s-X}. Then,
when $\mathrm{sgn}[k_2]\cdot t \to \pm \infty$, the $k_1$-soliton
asymptotically follows
\begin{subequations}
\begin{align}
\mathrm{top~ trace:}~~& x(t)=4k_1^2t+\frac{1}{2k_1}\ln\Big|\frac{a_1^-}{a_1^+}\Big|\pm \frac{1}{2k_1}\ln\frac{k_1+k_2}{k_1-k_2},\\
\mathrm{amplitude:}~~& 2 \cdot
\mathrm{sgn}\Big[\frac{a_1^-}{a_1^+}\Big]\cdot k_1,
\end{align}
\end{subequations}
and when $\mathrm{sgn}[k_1]\cdot t \to \pm \infty$, the $k_2$-soliton asymptotically
follows
\begin{subequations}
\begin{align}
\mathrm{top~ trace:}~~& x(t)=4k_2^2t+\frac{1}{2k_2}\ln\Big|\frac{a_2^-}{a_2^+}\Big|\pm \frac{1}{2k_2}\ln\frac{k_1+k_2}{k_1-k_2},\\
\mathrm{amplitude:}~~& -2 \cdot
\mathrm{sgn}\Big[\frac{a_2^-}{a_2^+}\Big]\cdot k_2.
\end{align}
\end{subequations}
The phase shift for the $k_j$-soliton after interactions is
$\frac{1}{k_j}\ln\frac{k_2+k_1}{k_1-k_2}$.
\end{theorem}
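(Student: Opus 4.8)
The plan is to read off the large-time asymptotics in two co-moving frames and then compare the limits to extract the phase shifts. For the $k_1$-soliton almost all the work is already done: equations \eqref{2s-X-1} and \eqref{2s-X-2} present the solution, in the frame $(X_1=x-4k_1^2t,\,t)$, as a single $\mathrm{sech}$-profile in the two regimes $\mathrm{sgn}[k_2]\cdot t\to-\infty$ and $\mathrm{sgn}[k_2]\cdot t\to+\infty$. I would locate the crest of each profile by setting the argument of $\mathrm{sech}$ to zero and solving for $X_1$, then substitute $X_1=x-4k_1^2t$ to obtain the top trace; the coefficient multiplying $\mathrm{sech}$ is the signed amplitude $2\,\mathrm{sgn}[a_1^-/a_1^+]\,k_1$. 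Matching the two regimes to the two signs produces the stated $\pm\frac{1}{2k_1}\ln\frac{k_1+k_2}{k_1-k_2}$ offset.

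For the $k_2$-soliton I would carry out the mirror-image computation in the frame $(X_2=x-4k_2^2t,\,t)$ from \eqref{k2-frame}. After substituting $x=X_2+4k_2^2t$ into $F_1,F_2$, every surviving time-dependent exponent collapses to a multiple of $4k_1(k_1^2-k_2^2)t$, so the relevant large-time balance is now governed by $\mathrm{sgn}[k_1]\cdot t\to\pm\infty$ (using $k_1^2-k_2^2>0$). Keeping only the dominant exponential in the numerator and denominator of $F_2/F_1$, the common factor $e^{4k_1(k_1^2-k_2^2)t}$ cancels, leaving $\arctan$ of a single exponential $C\,e^{\pm 2k_2X_2}$; differentiating in $X_2$ via $v=2(\arctan(F_2/F_1))_{X_2}$ recovers a lone $\mathrm{sech}$-soliton whose crest and amplitude I read off exactly as for the $k_1$ case. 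The phase shift in each case then follows by subtracting the constant offsets of the top traces at $t\to+\infty$ and $t\to-\infty$, giving $\frac{1}{k_j}\ln\frac{k_1+k_2}{k_1-k_2}$ for the $k_j$-soliton.

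The step I expect to demand the most care is the sign bookkeeping in the $k_2$ analysis. One must combine $0<|k_2|<|k_1|$ with the fact (noted just before the theorem) that $\frac{k_1+k_2}{k_1-k_2}>0$ while $k_2-k_1$ may be negative, so that when the constant $C$ is written as $\mathrm{sgn}[C]\,e^{\ln|C|}$ the extra factor $\mathrm{sgn}[(k_1+k_2)/(k_2-k_1)]=-1$ is what flips the amplitude to $-2\,\mathrm{sgn}[a_2^-/a_2^+]\,k_2$ relative to the $k_1$-soliton, and the absolute values in the offset come out as $\ln\frac{k_1+k_2}{k_1-k_2}$ rather than its reciprocal. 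I would also record, to justify that a single soliton genuinely survives in each frame, that holding $X_j$ fixed forces the competing soliton's phase variable to run off to $\pm\infty$ precisely because $k_1^2-k_2^2\neq 0$; the one-soliton results \eqref{1-soliton} and \eqref{top} furnish a useful consistency check on the extracted amplitudes and velocities.
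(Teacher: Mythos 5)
Your proposal is correct and follows essentially the same route as the paper: pass to the co-moving frames $(X_1,t)$ and $(X_2,t)$, keep the dominant exponential in $F_2/F_1$ as $\mathrm{sgn}[k_2]\cdot t\to\pm\infty$ (resp.\ $\mathrm{sgn}[k_1]\cdot t\to\pm\infty$), and read the crest and signed amplitude off the resulting single $\mathrm{sech}$ profile, exactly as in \eqref{2s-X-1}--\eqref{2s-X-2}. In fact you supply more detail than the paper does for the $k_2$-soliton (the paper only says ``do a similar asymptotic analysis''), and your sign bookkeeping via $\mathrm{sgn}[(k_1+k_2)/(k_2-k_1)]=-1$ is precisely what produces the amplitude $-2\,\mathrm{sgn}[a_2^-/a_2^+]\,k_2$.
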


Now it is completely clear how the
two-soliton interactions are related to the parameters $\{k_j,a_j,b_j\}$.
This will  be helpful to understand the asymptotic behavior of limit solutions.

\subsection{Asymptotic behavior of limit solutions}

The simplest limit solution in Case II is
\begin{equation}
\label{2-limit} v=2\Big(\arctan
\frac{a_1^+a_1^-(-48k_1^3t+4k_1x)}{{a_1^-}^2e^{8k_1^3t-2k_1x}+{a_1^+}^2e^{-8k_1^3t+2k_1x}}
\Big)_x.
\end{equation}
This is derived from \eqref{sol-arctan} with
\begin{subequations}
\begin{equation}
f=
\Big|\begin{array}{ll}
\varphi_1&~~\varphi_{1,x}\\
\partial_{k_1}\varphi_1&~~(\partial_{k_1}\varphi_{1})_{x}
\end{array}\Big|,\label{f-lim}
\end{equation}
where $\varphi_1$ is defined in \eqref{mkdv-c11}, i.e.,
\begin{equation}
\varphi^{}_{1}= a_{1}^+  e^{\xi_{1}}+ i a_{1}^-
 e^{-\xi_{1}}, ~\xi_{1}=k_{1}x-4k_{1}^{3}t+\xi_{1}^{(0)},~~
 a_{1}^{\pm}, k_1, \xi_{1}^{(0)} \in \mathbb{R}.
\end{equation}
\end{subequations}

\begin{figure}[!h]
\setlength{\unitlength}{1mm}
\begin{center}
\begin{picture}(00,40)
\put(-65,-5){\resizebox{!}{4.0cm}{\includegraphics{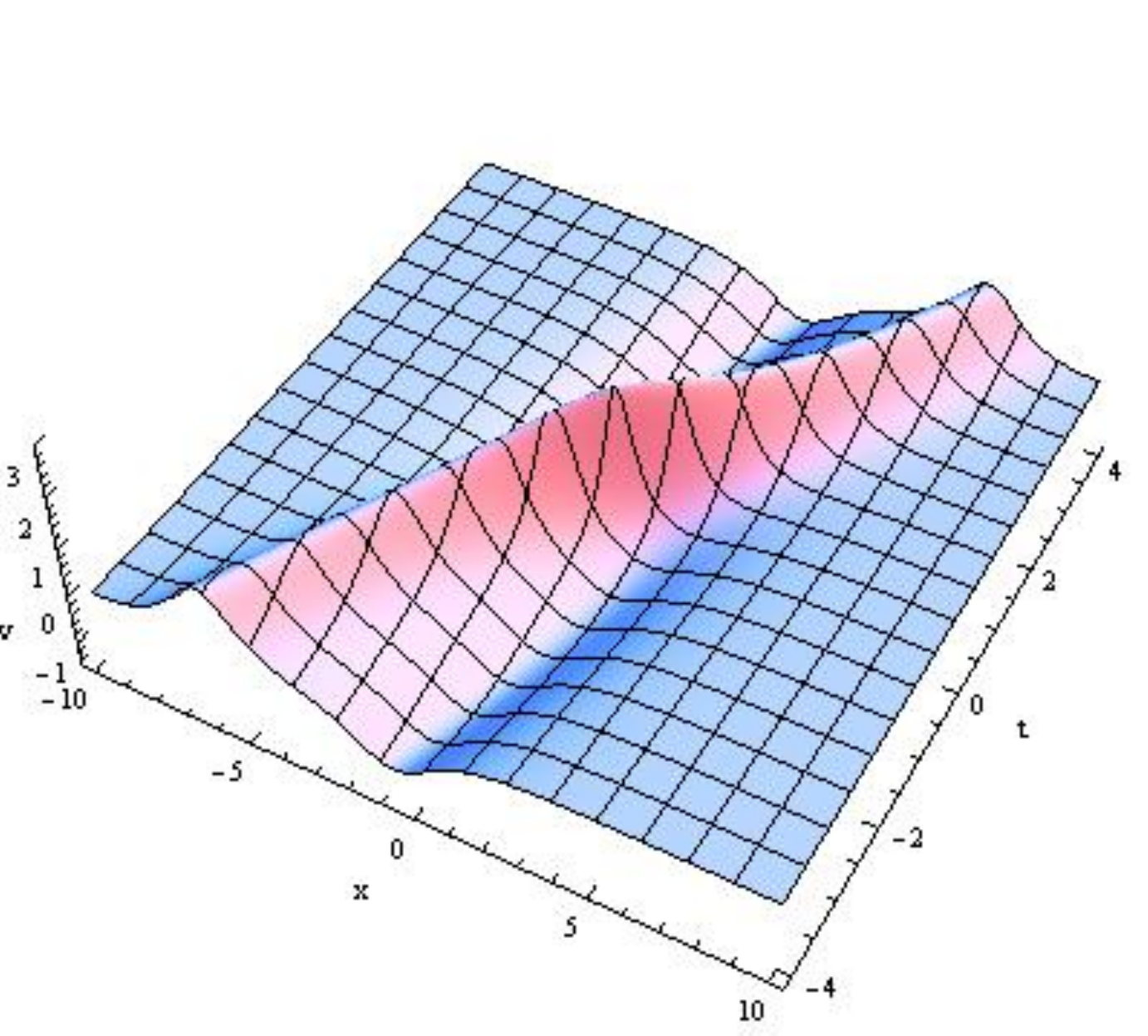}}}
\put(5,-5){\resizebox{!}{3.7cm}{\includegraphics{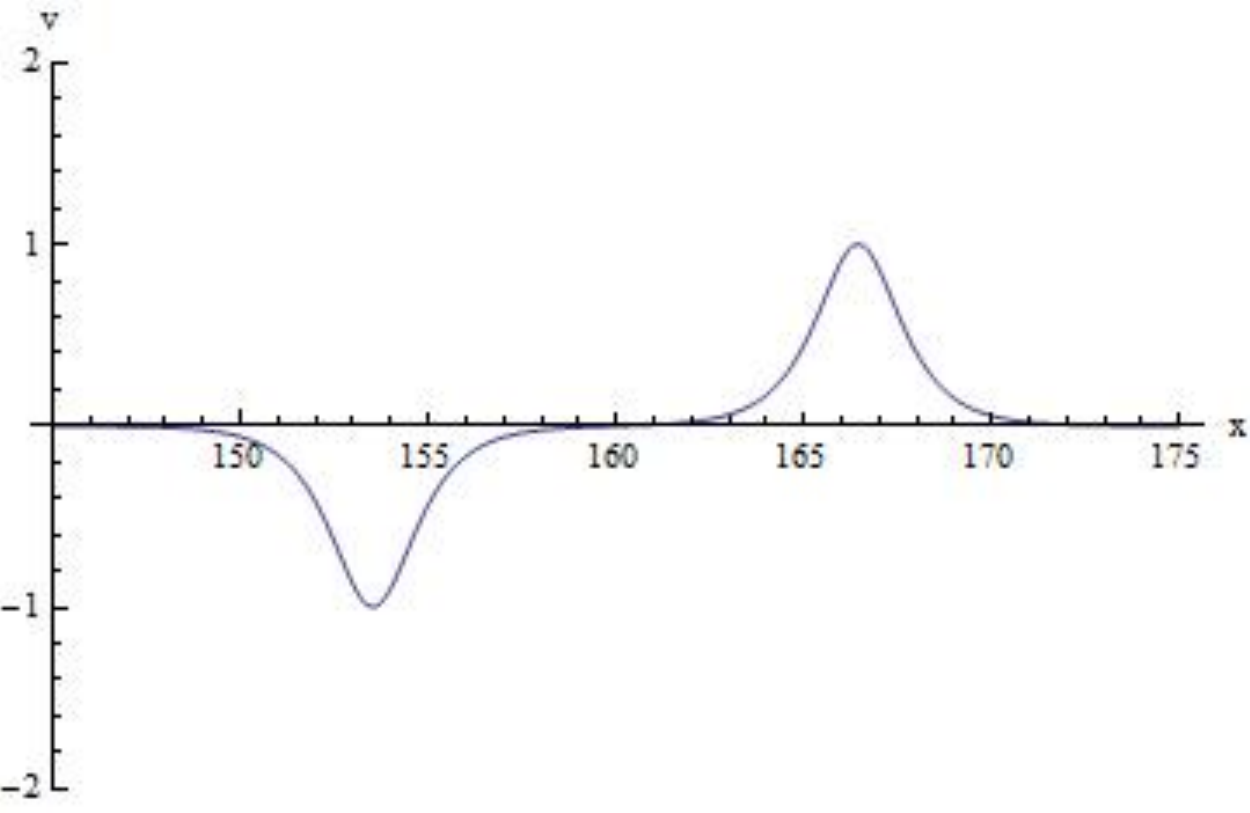}}}
\end{picture}
\end{center}
\vskip 1pt
\begin{minipage}{15cm}{\footnotesize
~~~~~~~~~~~~~~~~~~~~~~~~~~~~~~~~~~~~(a)~~~~~~~~~~~~~~~~~~~~~~~~~~~~~~~~~~~~
~~~~~~~~~~~~~~~~~~~~~~~~(b)}
\end{minipage}
\caption{ Limit solution given by \eqref{2-limit} for
$a_1^+=a_1^-=1$, $k_1=0.5$ and $\xi_{1}^{(0)}=0$. (a). Shape and motion. (b).
Asymmetric wave shape at $t=160$. \label{Fig-3}}
\end{figure}

The solution \eqref{2-limit} is depicted in Fig.\ref{Fig-3}.
We characterize dynamics of the solution by the following two points, which are typically different from the
interaction of two normal solitons that we described in the previous subsection.
These two points are
\begin{itemize}
\item{Soliton-anti-soliton interaction with (asymptotically) asymmetric wave shape,}
\item{Top trace of each soliton is asymptotically governed by logarithm and linear functions.}
\end{itemize}

The first point can be explained as follows. Recall the two-soliton
interaction with $a_2^{\pm}=a_1^{\pm}\neq 0$ and $k_1\cdot k_2>0$.
According to Theorem \ref{T:2s}, this is a soliton-anti-soliton
interaction and the absolute of amplitude of each soliton is
$2|k_j|$. Obviously, the asymmetric shape of limit solution
\eqref{2-limit} coincides well with the limit $k_2 \to k_1$.

To understand the second point, again we
put the limit solution in the coordinate system
\begin{equation}
\big(Y=x-4k_1^2t, t \big), \label{k1-frame-lim}
\end{equation}
and this gives
\begin{equation}
\label{2-lim-Y} v=2\Big(\arctan\,\frac{4a_1^+a_1^-e^{2k_1
Y}k_1(Y-8k_1^2t)}{{a_1^-}^2+{a_1^+}^2e^{4k_1 Y}} \Big)_Y,
\end{equation}
which is described in Fig.\ref{Fig-4}.

\begin{figure}[!h]
\setlength{\unitlength}{1mm}
\begin{center}
\begin{picture}(00,40)
\put(-65,-5){\resizebox{!}{4.0cm}{\includegraphics{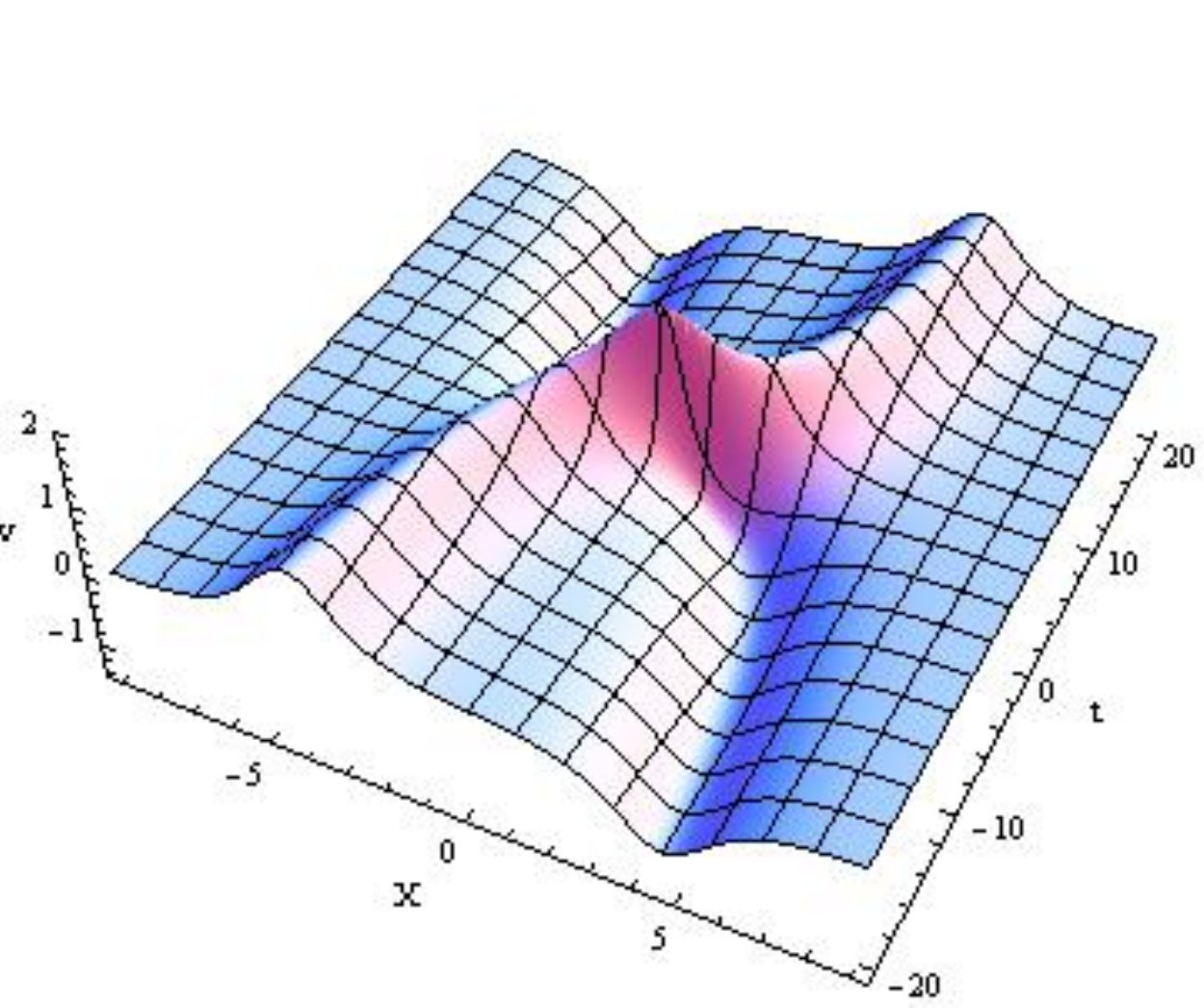}}}
\put(5,-5){\resizebox{!}{3.7cm}{\includegraphics{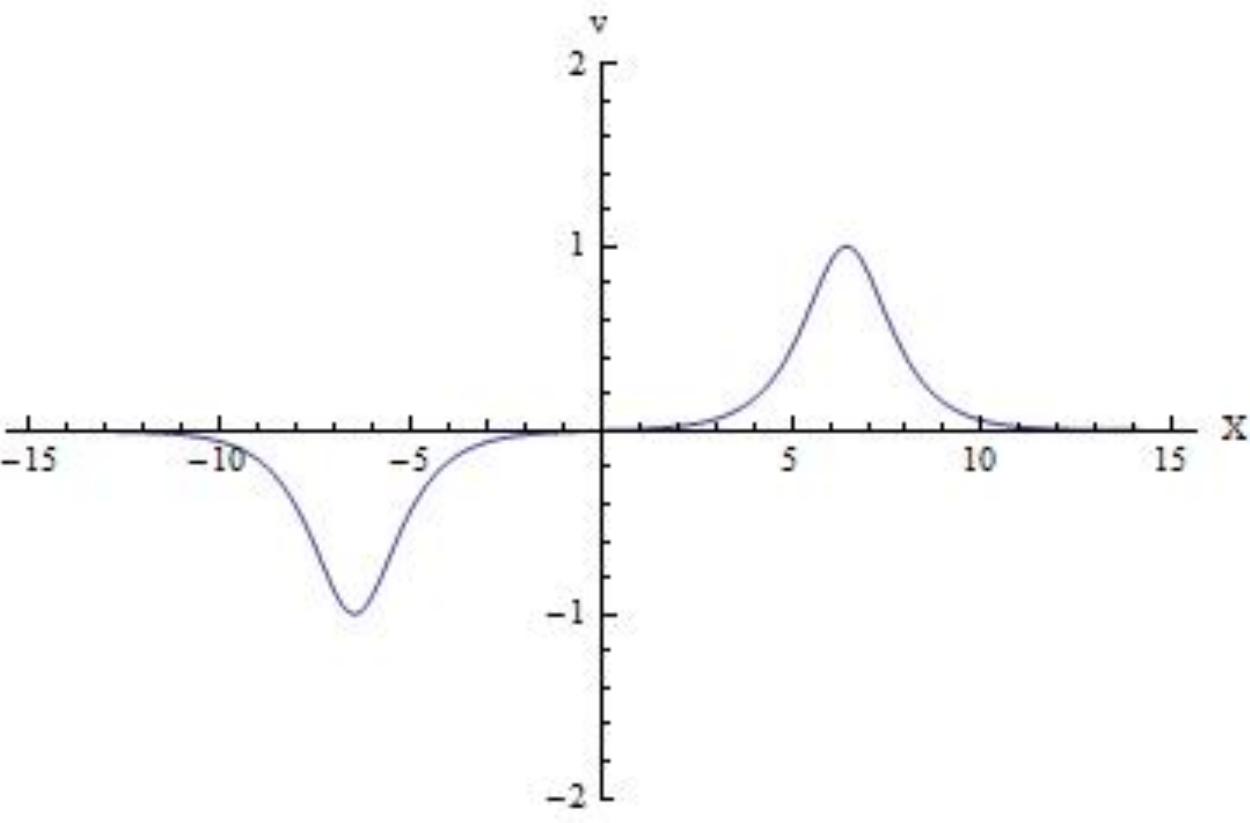}}}
\end{picture}
\end{center}
\vskip 1pt
\begin{minipage}{15cm}{\footnotesize
~~~~~~~~~~~~~~~~~~~~~~~~~~~~~~~~~~~~(a)~~~~~~~~~~~~~~~~~~~~~~~~~~~~~~~~~~~~
~~~~~~~~~~~~~~~~~~~~~~~~(b)}
\end{minipage}
\caption{Limit solution given by \eqref{2-lim-Y} for $a_1^+=a_1^-=1$
and $k_1=0.5$. (a). Shape and motion. (b). Asymmetric wave shape at
$t=160$. \label{Fig-4}}
\end{figure}

For convenience here we suppose that $k_1>0$.
Then by analyzing the leading terms as $t\to \pm\infty$ in the numerator and
denominator in \eqref{2-lim-Y}, we can  conclude the
asymptotic behaviors of the limit solution \eqref{2-lim-Y} as follows.

\begin{theorem}\label{T:2l}
Suppose that $a_1^{\pm}\neq 0$ and $k_1>0$ in \eqref{2-lim-Y}. When $t \to -\infty$ there are two waves moving with
amplitudes $\mp 2 \cdot \mathrm{sgn}\big[\frac{a^-_1}{a^+_1}\big]\cdot k_1$ and top traces
governed by the logarithm functions
\begin{subequations}
\begin{equation}
Y_{\pm}=\frac{1}{2k_1}\Big(\ln
\Big|\frac{a_1^-}{a_1^+}\Big|\pm \ln (-32k_1^3t)\Big),
\end{equation}
where the subscript $\pm$ of $Y$ stands for $Y\to \pm\infty$.
%
When $t \to +\infty$, there are also two waves moving with amplitudes
$\pm 2 \cdot \mathrm{sgn}\big[\frac{a^-_1}{a^+_1}\big]\cdot k_1$
and top traces
\begin{equation}
Y_{\pm}=\frac{1}{2k_1}\Big(\ln
\Big|\frac{a_1^-}{a_1^+}\Big|\pm \ln (32k_1^3t)\Big).
\end{equation}
\end{subequations}
\end{theorem}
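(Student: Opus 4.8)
The plan is to work directly with the single arctangent form \eqref{2-lim-Y} and to track each of the two waves along a coordinate that drifts logarithmically in $t$, rather than along a line of constant velocity as in Theorem \ref{T:2s}; this is forced because both limit waves share the same asymptotic velocity $4k_1^2$ and so both persist in the frame $Y=x-4k_1^2t$. First I would put the argument of $\arctan$ into a transparent form. Writing $r=a_1^-/a_1^+$ and dividing the numerator and denominator of that argument by $a_1^+a_1^-e^{2k_1Y}$, one reaches $v=2(\arctan g)_Y$ with $g=\dfrac{2k_1\,(Y-8k_1^2t)}{\mathrm{sgn}[r]\,\cosh u}$ and $u=2k_1Y-\ln|r|$, using the identity $r\,e^{-2k_1Y}+r^{-1}e^{2k_1Y}=2\,\mathrm{sgn}[r]\cosh u$. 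This form isolates the two competing factors: the slowly (logarithmically) varying linear factor $Y-8k_1^2t$ and the exponentially growing $\cosh u$.

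The heuristic I would then justify is that a wave is centred where $|g|$ is of order one. Fixing $k_1>0$ and letting $t\to-\infty$, on any region where $Y$ grows at most like $\ln|t|$ one has $Y-8k_1^2t\sim -8k_1^2t=8k_1^2|t|$, so $|g|\sim 16k_1^3|t|/\cosh u$. Balancing $\tfrac12 e^{|u|}\sim 16k_1^3|t|$ gives $u=\pm\ln(-32k_1^3t)$, i.e.\ the two centres $Y_\pm=\tfrac{1}{2k_1}\bigl(\ln|r|\pm\ln(-32k_1^3t)\bigr)$, which are exactly the claimed top traces; the $t\to+\infty$ traces follow verbatim with $-32k_1^3t$ replaced by $32k_1^3t$, the only change being that the linear factor then satisfies $Y-8k_1^2t\sim-8k_1^2|t|$.

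To pin down the amplitudes and the localized shape I would substitute $Y=Y_\pm(t)+\zeta$ and take $t\to-\infty$ with $\zeta$ held fixed. Near $Y_+$ the exponent is $u=u_++2k_1\zeta$ with $u_+=\ln(-32k_1^3t)\to+\infty$, so $\cosh u\sim\tfrac12 e^{u}=16k_1^3|t|\,e^{2k_1\zeta}$ while $Y-8k_1^2t\sim 8k_1^2|t|$; hence $g\to\mathrm{sgn}[r]\,e^{-2k_1\zeta}$ and $v=2(\arctan g)_\zeta\to-2\,\mathrm{sgn}[r]\,k_1\,\mathrm{sech}(2k_1\zeta)$. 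Near $Y_-$ the same computation with $u\to-\infty$ gives $g\to\mathrm{sgn}[r]\,e^{2k_1\zeta}$ and $v\to+2\,\mathrm{sgn}[r]\,k_1\,\mathrm{sech}(2k_1\zeta)$. These are precisely the two waves with amplitudes $\mp2\,\mathrm{sgn}[a_1^-/a_1^+]k_1$ and the stated traces; in the $t\to+\infty$ case the linear factor merely reverses sign, which flips both amplitudes to $\pm2\,\mathrm{sgn}[a_1^-/a_1^+]k_1$, and the analysis is otherwise identical.

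The main obstacle is bookkeeping and uniformity rather than anything conceptual. I must check that the neglected contributions genuinely vanish: the logarithmic growth of $Y_\pm$ inside the factor $Y-8k_1^2t$ is negligible against $-8k_1^2t$, and the subdominant half $\tfrac12 e^{-|u|}$ of $\cosh u$ contributes only an $O(1/t)$ correction to $g$ after the substitution. I should also confirm that the limits hold uniformly for $\zeta$ in compact sets, so that differentiating $\arctan g$ in $\zeta$ is legitimate and yields a genuine $\mathrm{sech}$ profile rather than an artefact of a merely pointwise limit. Finally, the one place where a constant is easily lost is the factor $32k_1^3$ inside the logarithm: it arises as the product of $2k_1\cdot 8k_1^2=16k_1^3$ from the linear factor and the $\tfrac12$ from $\cosh u\sim\tfrac12 e^{|u|}$, and tracking it carefully is exactly what fixes the precise location of the top traces.
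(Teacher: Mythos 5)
Your proposal is correct and takes essentially the same route as the paper, which gives no detailed argument but simply asserts that the result follows ``by analyzing the leading terms as $t\to\pm\infty$ in the numerator and denominator'' of \eqref{2-lim-Y} --- exactly the balance-of-leading-terms analysis you carry out. Your $\cosh$ reformulation and the substitution $Y=Y_\pm(t)+\zeta$ supply the details the paper omits, and the constants (the factor $32k_1^3$ inside the logarithm, the $\mathrm{sech}$ profiles, and the signs of the amplitudes in both limits) all check out against the statement.
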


Replacing $Y$ by $x$ by using \eqref{k1-frame-lim} the top traces of the waves in
Fig.\ref{Fig-3} are then asymptotically governed by linear and logarithm functions of $t$.

Let us end up this subsection with the following remark. With regard
to the limit solutions the top traces (or waves trajectories)  are
governed by (linear and) logarithm functions should be a typical
characteristic, which differs from normal soliton interactions with
straight line trajectories. (See
\cite{Zhang-Hietarinta,ZZZ-2009-PLA,ZDJ-TMP,KZS-sg} for more
examples).

\subsection{Breathers}
\label{sec:5.3}

Wronskian entries in Case III provides breather solutions to the mKdV
 equation \eqref{mKdV}. The simplest one corresponds to
\begin{subequations}
\begin{equation}
f=
\Big|\begin{array}{ll}
\varphi_{11}&~~\varphi_{11,x}\\
\varphi_{12}&~~\varphi_{12,x}
\end{array}\Big|=F_1+iF_2,
\label{f-breather}
\end{equation}
where
\begin{align}
\varphi_{11}&= a_{1}  e^{\xi_1}+b_{1} e^{-{\xi}_1}, ~
\varphi_{12}=\bar{a}_{1}  e^{\bar{\xi}_1} -\bar{b}_{1}
e^{-{\bar{\xi}_1}},\\
\xi_1&=k_{1}x- 4 k_1^3 t,~k_1, a_{1}, b_{1} \in \mathbb{C},
\label{xi-1-breather}
\end{align}
\begin{align}
F_1=&4k_{11}(a_{11}b_{11}+a_{12}b_{12})\cos(24k_{11}^2k_{12}t-8k_{12}^3t-2k_{12}x)\nonumber\\
~&+4k_{11}(a_{12}b_{11}-a_{11}b_{12})\sin(24k_{11}^2k_{12}t-8k_{12}^3t-2k_{12}x),
\label{F1-B}\\
F_2=&-2k_{12}e^{-2k_{11}(4k_{11}^2t+12k_{12}^2t+x)}\big[(b_{11}^2+b_{12}^2)e^{16k_{11}^3t} \nonumber\\
&+(a_{11}^2+a_{12}^2)e^{4k_{11}(12k_{12}^2t+x)}\big], \label{F2-B}
\end{align}
and we have written
\[k_1=k_{11}+ik_{12},~a_1=a_{11}+ia_{12},~b_1=b_{11}+ib_{12}.\]
\end{subequations}
Then the breather solution is expressed by \eqref{solu-mkdv} with
the above $F_1, F_2$.
We further assume that
\[\sin
\theta=\frac{a_{11}b_{11}+a_{12}b_{12}}{\alpha},~\alpha=\sqrt{(a_{11}b_{11}+a_{12}b_{12})^2+(a_{12}b_{11}-a_{11}b_{12})^2},
\]
and then rewrite \eqref{solu-mkdv} as\footnote{From \eqref{solu-breather} we can see that $k_{1j}\neq 0$ is necessary for getting nontrivial breather solutions.}
\begin{subequations}\label{solu-breather}
\begin{equation}
v=-2\Big(\arctan
\frac{P}
{Q}\Big)_x,
\label{solu-breather-5.22}
\end{equation}
where
\begin{align}
P&=2k_{11}\alpha\sin\big[2k_{12}(x-4t(3k_{11}^2-k_{12}^2))-\theta\big],\\
Q&=k_{12}
\big[(a_{11}^2+a_{12}^2)e^{2k_{11}(x+4t(3k_{12}^2-k_{11}^2))}
+(b_{11}^2+b_{12}^2)e^{-2k_{11}(x+4t(3k_{12}^2-k_{11}^2))}
\big].
\end{align}
\end{subequations}
Such a breather is described in Fig.\ref{Fig-5-1}.

\begin{figure}[!h]
\setlength{\unitlength}{1mm}
\begin{center}
\begin{picture}(00,40)
\put(-78,-5){\resizebox{!}{3.8cm}{\includegraphics{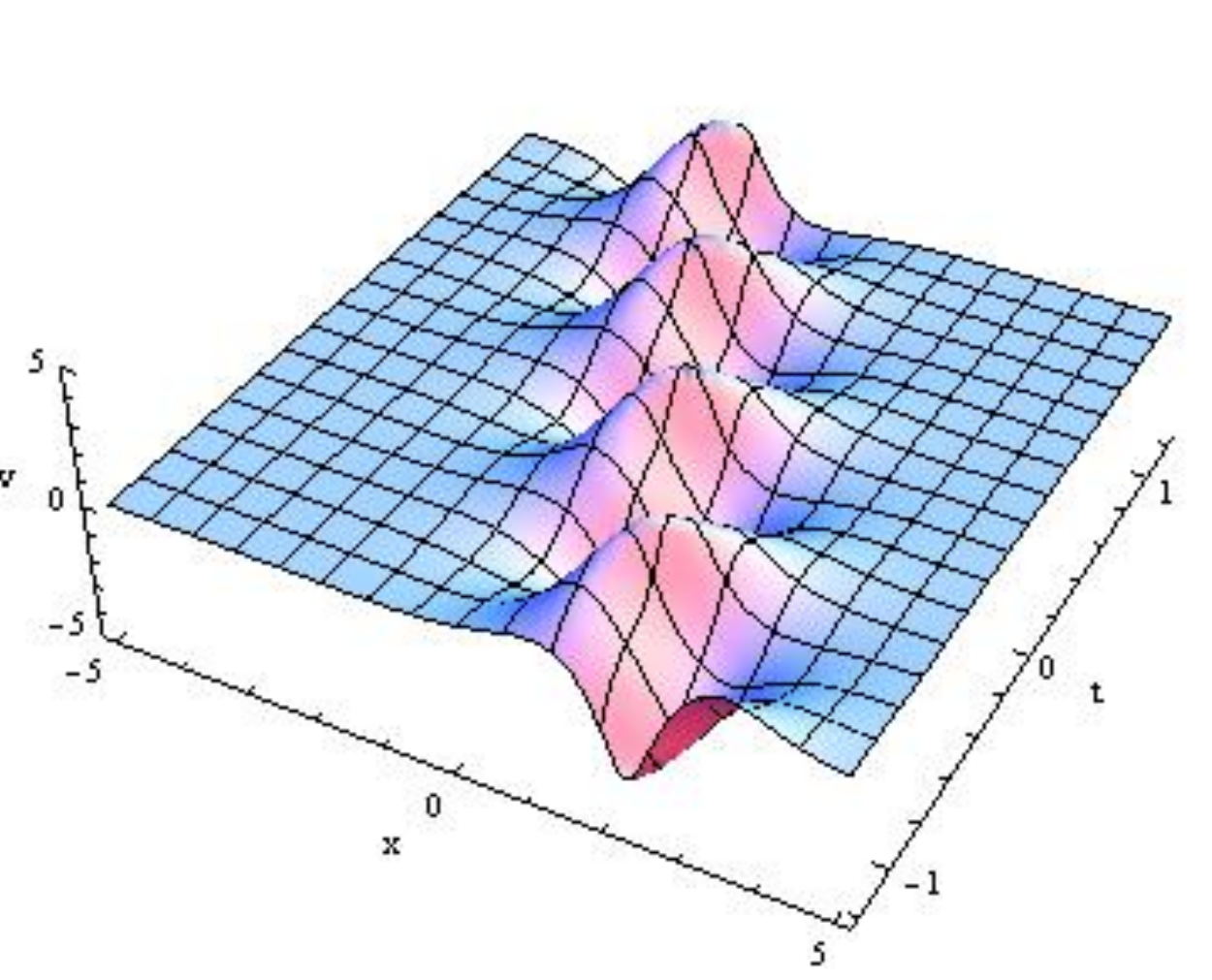}}}
\put(-27,-5){\resizebox{!}{3.8cm}{\includegraphics{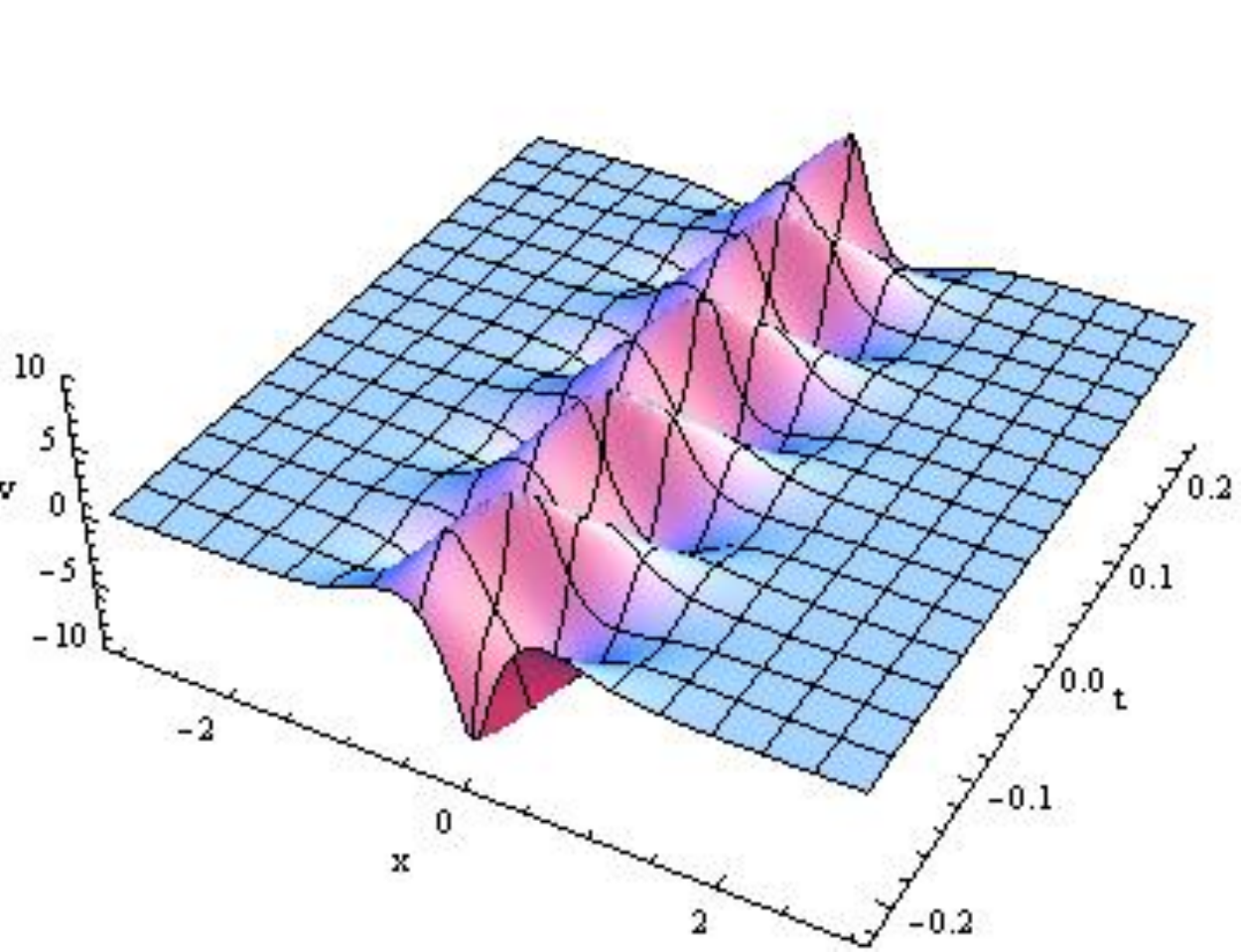}}}
\put(25,-5){\resizebox{!}{3.5cm}{\includegraphics{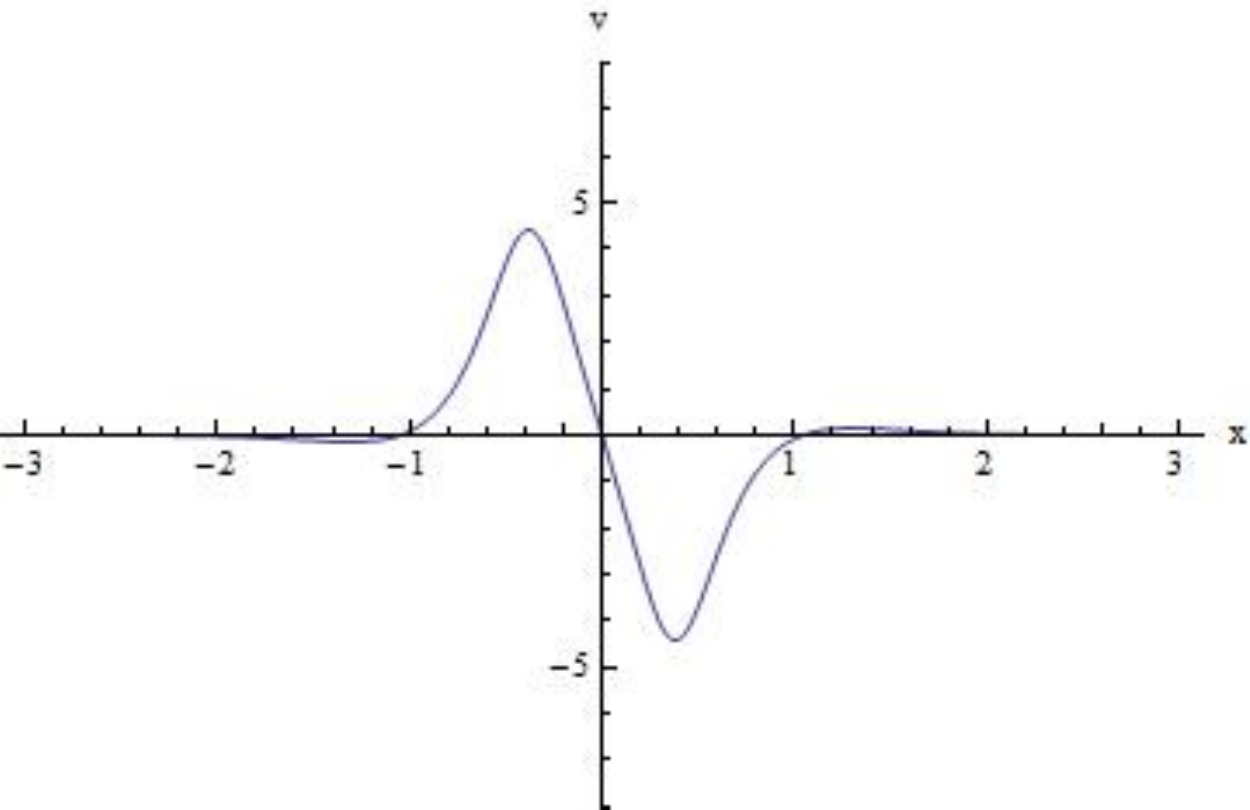}}}
\end{picture}
\end{center}
\vskip 1pt
\begin{minipage}{15cm}{\footnotesize
~~~~~~~~~~~~~~~~~~~~~(a)~~~~~~~~~~~~~~~~~~~~~~~~~~
~~~~~~~~~~~~~~~~~~(b)~~~~~~~~~~~~~~~~~~~~~~~~~~~~~~~~~~~~~~~~~~~~~(c)}
\end{minipage}
\caption{Breathers given by \eqref{solu-breather}.
(a). A moving breather with
$a_{1}=1+i,~b_{1}=2+0.5i$, $k_{1}=0.8-0.6i$.
(b). A stationary breather with
$a_{1}=b_{1}=1$, $k_{1}=\sqrt{3}+i$.
(c). A 2D-plot of (b) at $t=0$.
\label{Fig-5-1}}
\end{figure}

Fig.\ref{Fig-5-1}(a) shows an oscillating wave moving along a straight line.
The oscillation comes from the sine function and the frequency depends on both $x$ and $t$.
To understand more on the wave we use the coordinates
\begin{equation}
(Z=x+4t(3k_{12}^2-k_{11}^2),\, t)
\label{Z-t}
\end{equation}
to rewrite the solution \eqref{solu-breather} as
\begin{equation}
v=-2\Big(\arctan
\frac{2k_{11}\alpha\sin\big(2k_{12}(Z-8t(k_{11}^2+k_{12}^2))-\theta\big)}
{k_{12}
\big((a_{11}^2+a_{12}^2)e^{2k_{11}Z}
+(b_{11}^2+b_{12}^2)e^{-2k_{11}Z}
\big)}\Big)_Z,
\label{solu-breather-Z}
\end{equation}
and then fix $Z=0$, i.e., looking at the wave along the straight line $Z=0$.
Then it is clear that
\begin{itemize}
\item{The breather travels along the straight line $Z=0$,
in other words, the wave speed is $4(3k_{12}^2-k_{11}^2)$,
which means it admits bi-direction traveling.}
\item{A stationary breather appears when $3k_{12}^2=k_{11}^2$, as depicted in Fig.\ref{Fig-5-1}(b).}
\item{Under the coordinate system \eqref{Z-t} the frequency only depends on
$t$ and the period reads
\begin{equation}
T=\frac{\pi}{8k_{12}(k_{11}^2+k_{12}^2)}.
\end{equation}
}
\end{itemize}

We note that in two dimensions (fixing $t$) the breather is in fact a spindle-shape wave.
Let us go back to the solution \eqref{solu-breather}.
If we fix time $t$, then the breather oscillates with frequency
$\frac{|k_{12}|}{\pi}$ and its amplitude decays by the rate $e^{-2|x k_{11}|}$
as $x\to \pm\infty$. That means if $|k_{12}|$ is small and  $|k_{11}|$ is relatively large so that
the amplitude decay becomes the dominating factor,
we get a `normal' breather as shown in
Fig.\ref{Fig-5}(a); while if $|k_{11}|$ is small enough and
$|k_{12}|>>|k_{11}|$ so that the oscillation dominates, we will see a spindle-like wave
shown in Fig.\ref{Fig-5}(b).
In the latter case, the wave will travel with high speed and high oscillating frequency.
This can lead to overlaps of `normal' oscillating waves (like Fig.\ref{Fig-5}(a)) during their traveling,
which makes a spindle shape.

\begin{figure}[!h]
\setlength{\unitlength}{1mm}
\begin{center}
\begin{picture}(00,40)
\put(-65,-5){\resizebox{!}{3.7cm}{\includegraphics{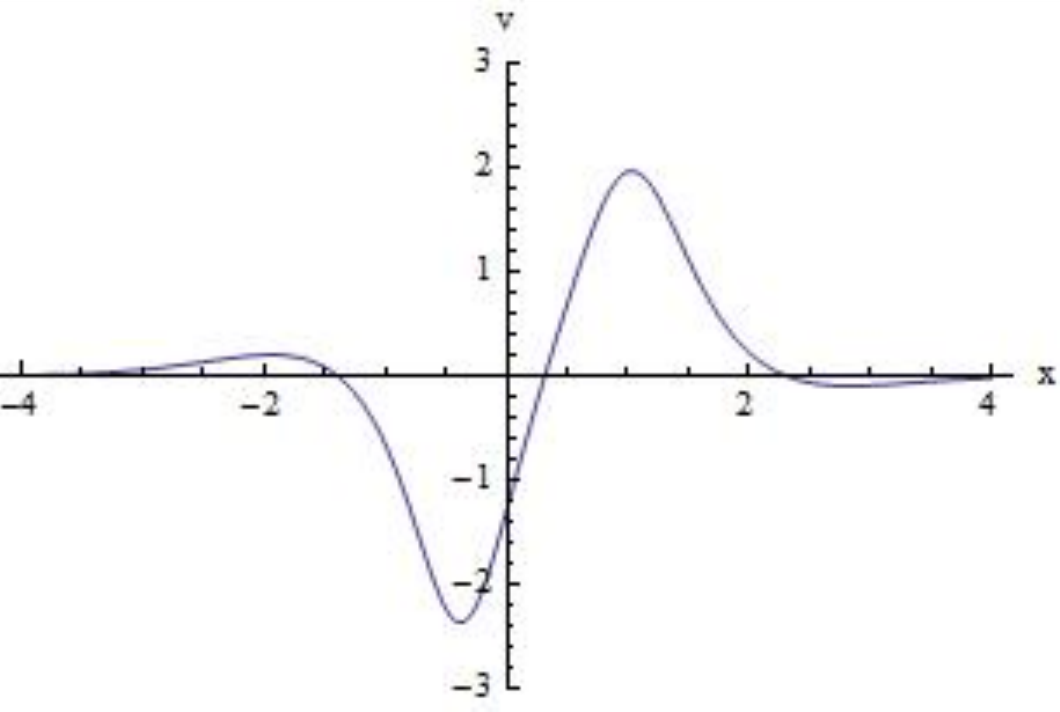}}}
\put(5,-5){\resizebox{!}{3.7cm}{\includegraphics{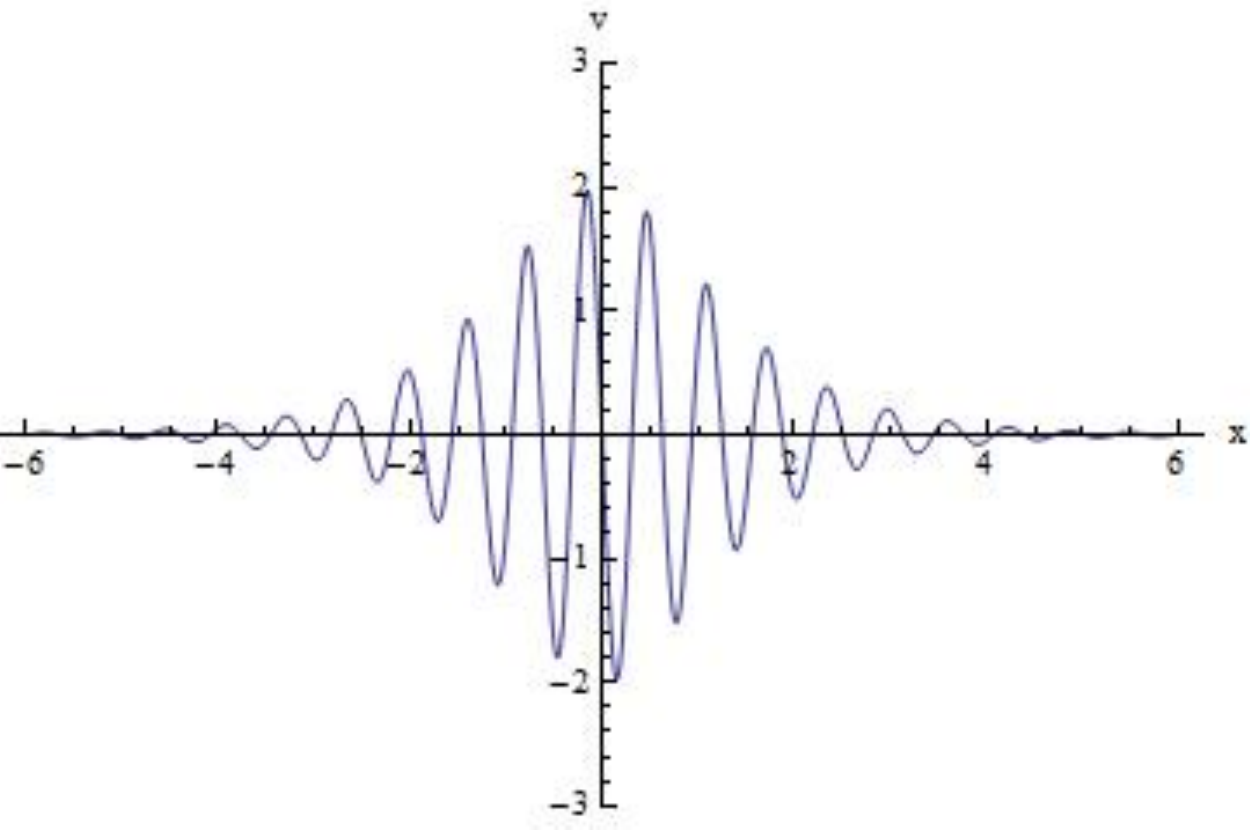}}}
\end{picture}
\end{center}
\vskip 1pt
\begin{minipage}{15cm}{\footnotesize
~~~~~~~~~~~~~~~~~~~~~~~~~~~~~~~~~~~~~(a)~~~~~~~~~~~~~~~~~~~~~~~~~~~~~~~~~~~
~~~~~~~~~~~~~~~~~~~~~~~~~(b)}
\end{minipage}
\caption{Shape of breathers given by \eqref{solu-breather}.
(a).  `Normal' shape at $t=0$ with $a_{1}=1+i,~b_{1}=2+0.5i$, $k_{1}=0.8-0.6i$.
(b). Spindle-shape breather at $t=0$ with
$a_{1}=1+i,~b_{1}=1+i$, $k_{1}=0.5+5i$.
\label{Fig-5}}
\end{figure}

Finally, in this subsection we list and depict two-breather solution and the simplest limit breather solution,
without further asymptotic analysis.
Both solutions can be given by \eqref{solu-mkdv}
with
$f$ being a 4 by 4 Wronskian
\begin{equation}
f=|\varphi,\varphi_{x},\varphi_{xx},\varphi_{xxx}|,
\label{solu-2-breather}
\end{equation}
where for the 2-breather solution
\begin{subequations}
\label{phi-2-breather}
\begin{align}
{\varphi}&=(\varphi_{11}, \varphi_{12}, \varphi_{21}, \varphi_{22})^{T},\\
\varphi_{j1}&= a_{j}  e^{\xi_j}+b_{j} e^{-{\xi}_j}, ~~
\varphi_{j2}=\bar{a}_{j}  e^{\bar{\xi}_j} -\bar{b}_{j}
e^{-{\bar{\xi}_j}}, ~~~
\xi_j=k_{j}x- 4 k_j^3 t+ \xi_{j}^{(0)},~~
\label{xi-2-breather}
\end{align}
\end{subequations}
in which $a_{j}, b_{j},k_j,\xi_{j}^{(0)}  \in \mathbb{C},$ and for the limit breather
\begin{equation}
{\varphi}=(\varphi_{11}, \varphi_{12}, \partial_{k_1}\varphi_{11}, \partial_{\bar{k}_1}\varphi_{12})^{T},
\label{phi-limit-breather}
\end{equation}
in which $\varphi_{11}, \varphi_{12}$ are defined by \eqref{xi-2-breather}.

Fig.\ref{Fig-6-1} shows the two-breather interaction where from the density plot (b) one can clearly see that
the two breathers are traveling along straight lines and a phase shift appears after interaction.
Fig.\ref{Fig-6} shows the shape and motion of a limit breather solution, where
from the density plot (b) one can clearly see that
the breather trajectories are not any longer straight lines,
Here we conjecture that they are governed by logarithm functions.

\begin{figure}[!h]
\setlength{\unitlength}{1mm}
\begin{center}
\begin{picture}(00,40)
\put(-65,-5){\resizebox{!}{4.0cm}{\includegraphics{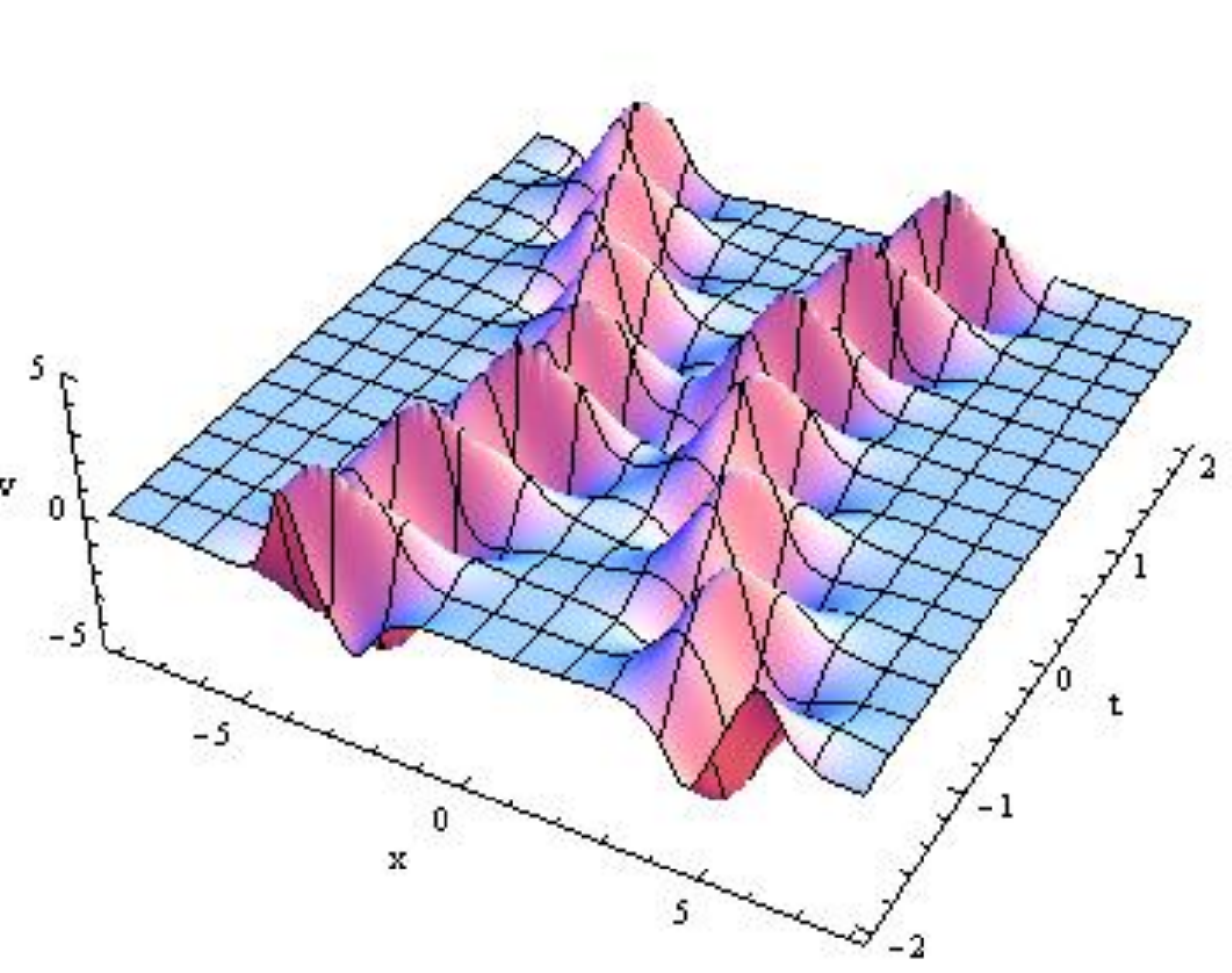}}}
\put(5,-5){\resizebox{!}{3.7cm}{\includegraphics{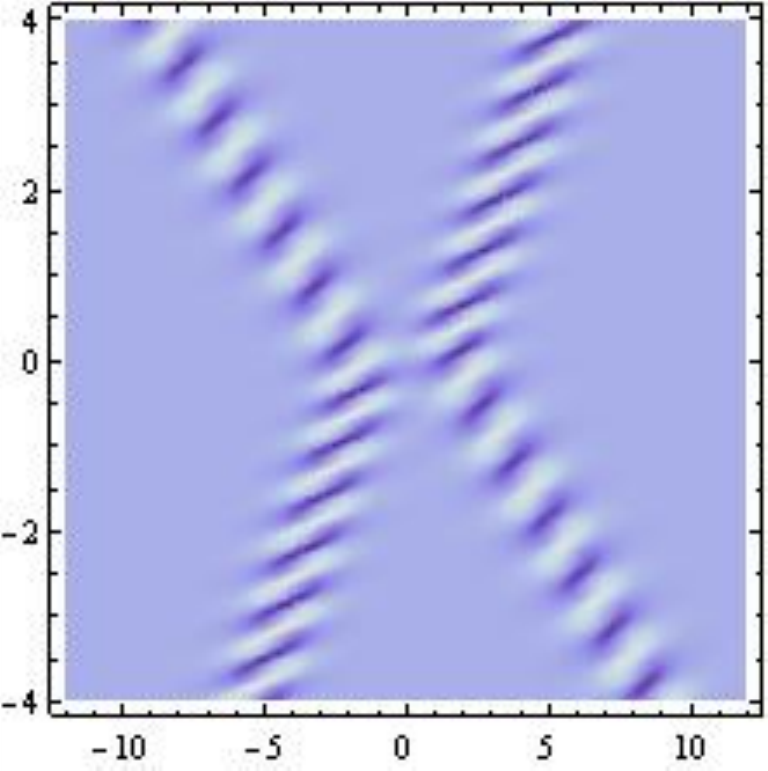}}}
\end{picture}
\end{center}
\vskip 1pt
\begin{minipage}{15cm}{\footnotesize
~~~~~~~~~~~~~~~~~~~~~~~~~~~~~~~~~~~~(a)~~~~~~~~~~~~~~~~~~~~~~~~~~~~~~
~~~~~~~~~~~~~~~~~~~~~~~~(b)}
\end{minipage}
\caption{Shape and motion of two breather solution given by \eqref{solu-mkdv} with \eqref{solu-2-breather} and \eqref{phi-2-breather}.
(a). 3D-plot for $a_1=b_1=a_2=b_2=1,~k_1=1+0.5i,~
k_2=0.8-0.6i$ and $\xi_1^{(0)}=\xi_2^{(0)}=0$.
(b). Density plot of (a)  for $x\in[-12,12],~ t\in [-4,4]$. \label{Fig-6-1}}
\end{figure}

\begin{figure}[!h]
\setlength{\unitlength}{1mm}
\begin{center}
\begin{picture}(00,40)
\put(-65,-5){\resizebox{!}{4.0cm}{\includegraphics{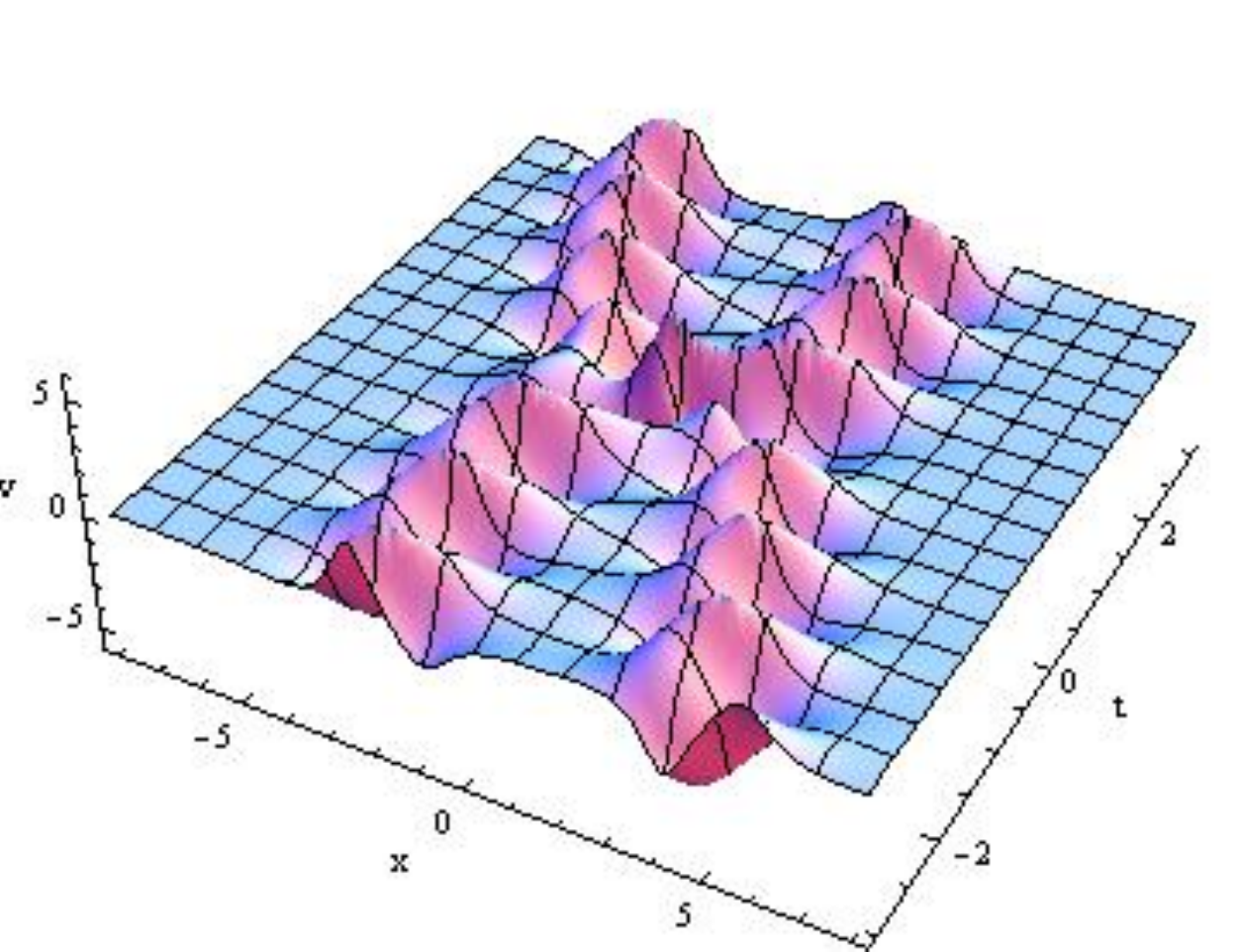}}}
\put(5,-5){\resizebox{!}{3.7cm}{\includegraphics{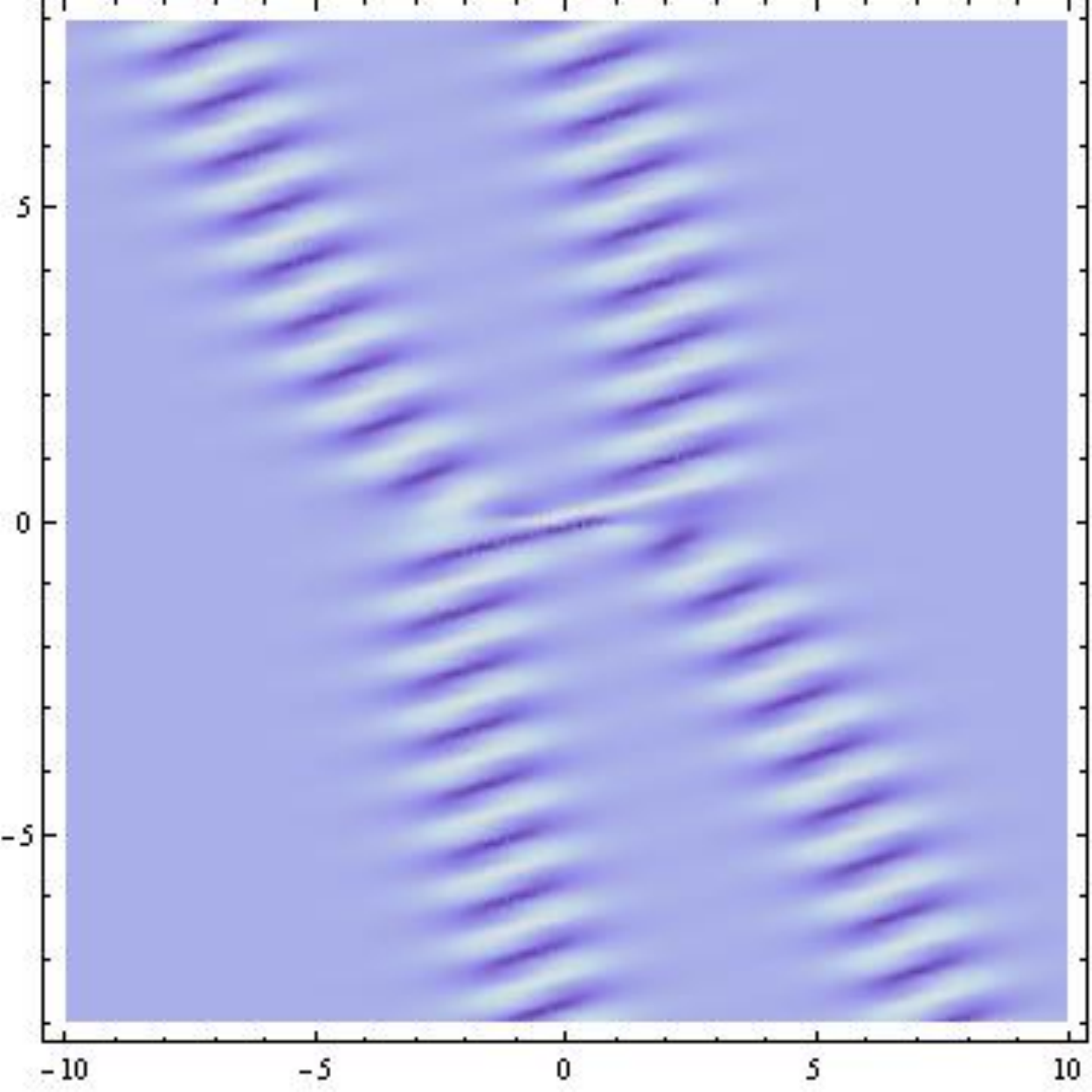}}}
\end{picture}
\end{center}
\vskip 1pt
\begin{minipage}{15cm}{\footnotesize
~~~~~~~~~~~~~~~~~~~~~~~~~~~~~~~~~~(a)~~~~~~~~~~~~~~~~~~~~~~
~~~~~~~~~~~~~~~~~~~~~~~~~~~~~~~~~~(b)}
\end{minipage}
\caption{Shape and motion of the limit breather solution given by \eqref{solu-mkdv} with \eqref{solu-2-breather} and \eqref{phi-limit-breather}.
(a). 3D-plot for  $a_1=b_1=1,~k_1=0.8+0.5i$  and $\xi_1^{(0)}=0$.
(b). Density plot of (a)  for $x\in[-10,10],~ t\in [-8,8]$. \label{Fig-6}}
\end{figure}

\subsection{Dynamics of rational solutions}

The first non-trivial rational solution  to the mKdV equation is \eqref{1rs}, i.e.,
\begin{equation}
v =v_{0}-\frac{4v_{0}}{4v_{0}^{2}X^{2}+1},~~X=x-6v_{0}^{2}t.
\label{1rs-dy}
\end{equation}
This is a non-singular traveling wave moving  with the constant
speed $6v_{0}^{2}$, constant amplitude $-3v_0$ and asymptotic line
$v=v_0$. It is depicted in Fig.\ref{Fig-8}.

\begin{figure}[!h]
\setlength{\unitlength}{1mm}
\begin{center}
\begin{picture}(00,40)
\put(-76,-5){\resizebox{!}{4.0cm}{\includegraphics{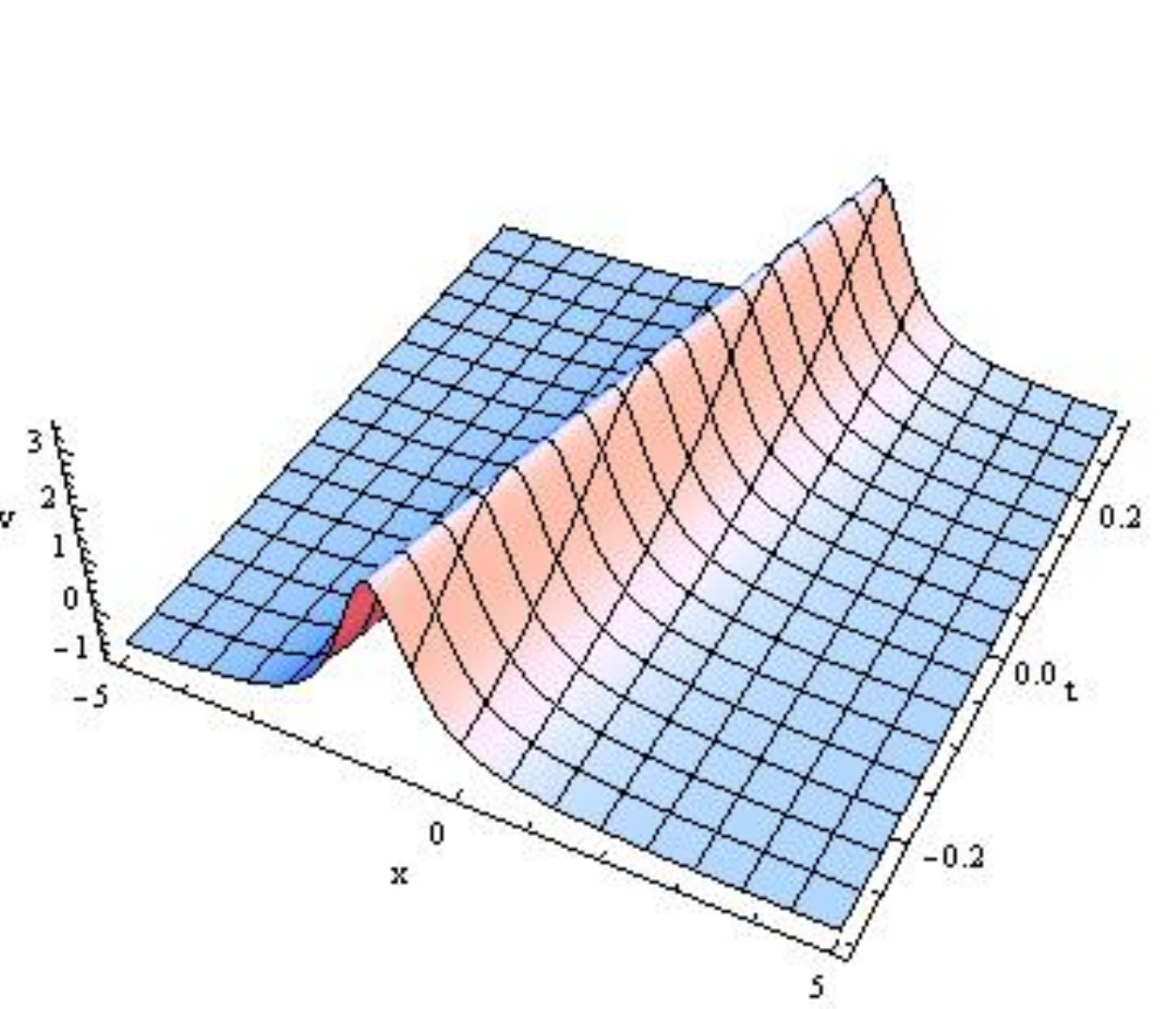}}}
\put(-29,-5){\resizebox{!}{3.5cm}{\includegraphics{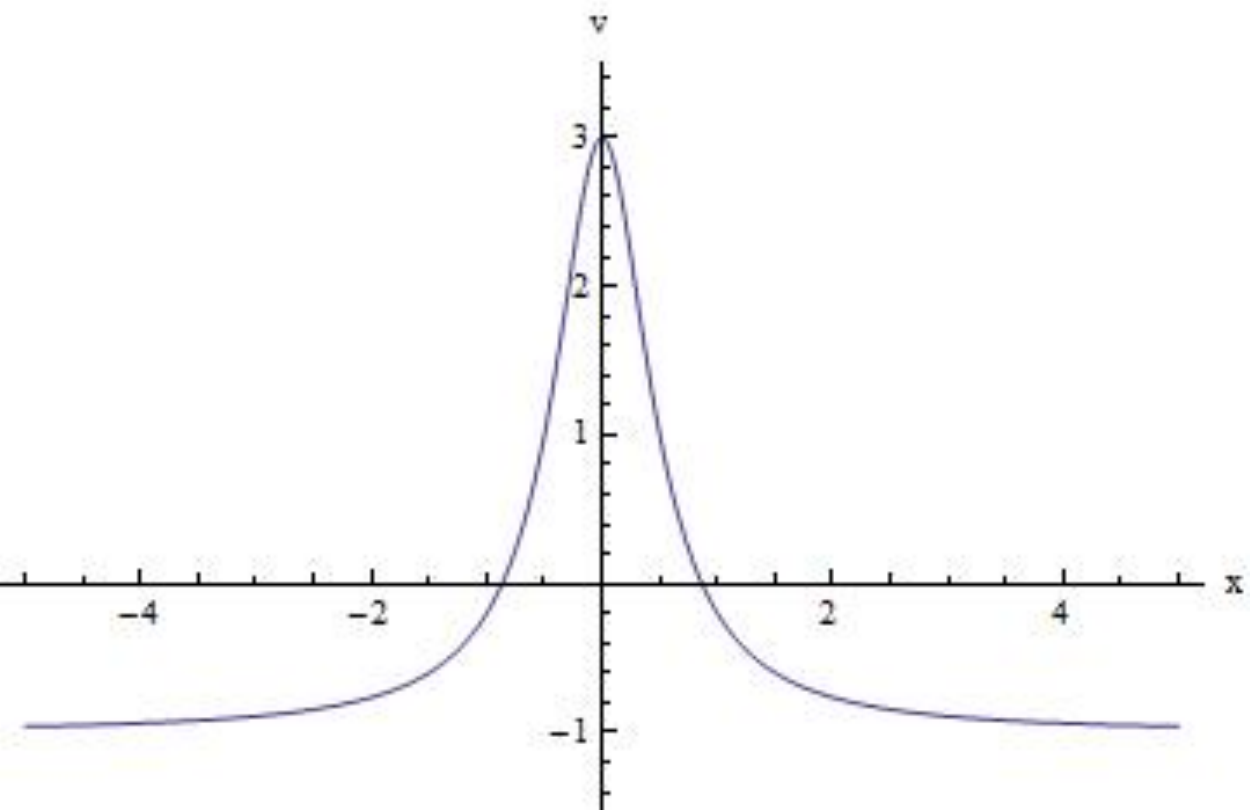}}}
\put(25,-5){\resizebox{!}{3.5cm}{\includegraphics{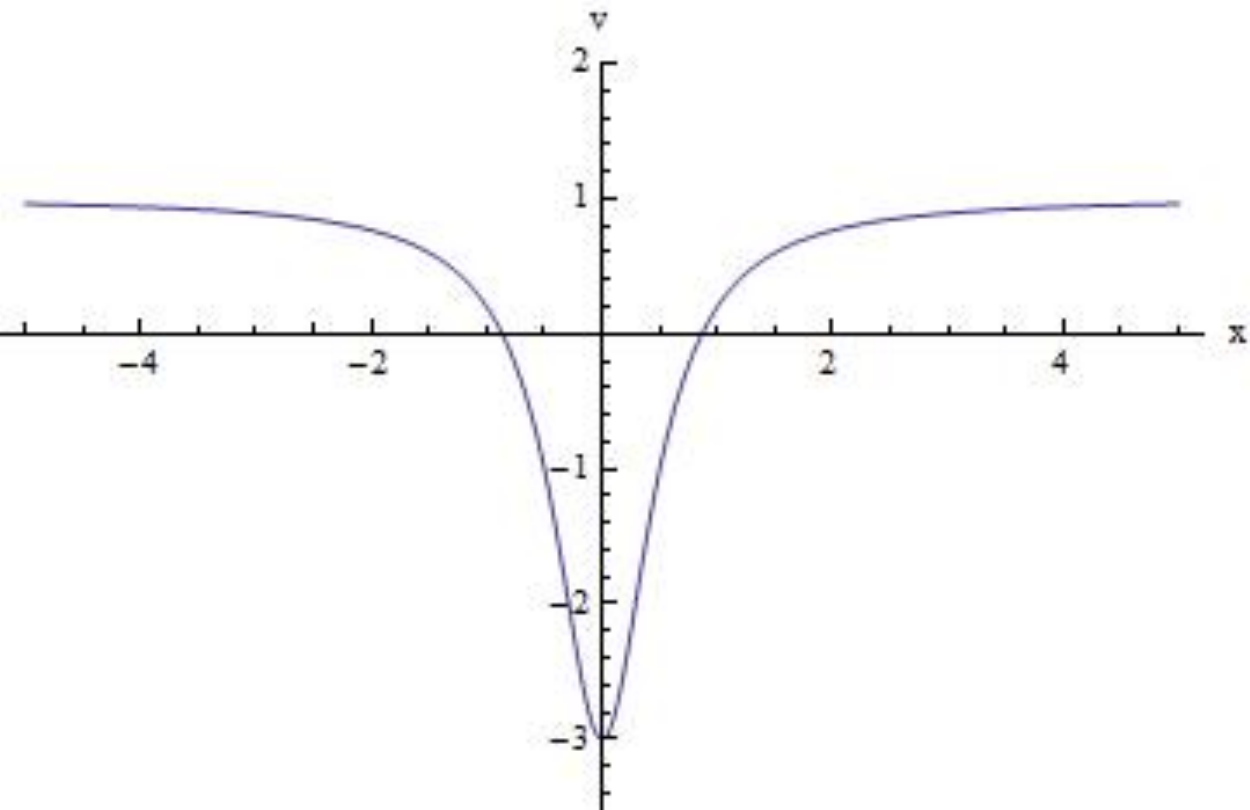}}}
\end{picture}
\end{center}
\vskip 1pt
\begin{minipage}{15cm}{\footnotesize
~~~~~~~~~~~~~~~~~~~~(a)~~~~~~~~~~~~~~~~~~~~~~~
~~~~~~~~~~~~~~~~~~~~~(b)~~~~~~~~~~~~~~~~~~
~~~~~~~~~~~~~~~~~~~~~~~~~~~~(c)}
\end{minipage}
\caption{Shape and motion of the rational solution given by
\eqref{1rs-dy} for $v_0=-0.8$ in (a), $v_0=-1$ in (b) and $v_0=1$ in
(c).} \label{Fig-8}
\end{figure}

The next rational solution is given by \eqref{2rs}, i.e.,
\begin{equation}
v
=v_{0}-\frac{12v_{0}(X^{4}+\frac{3}{2v_{0}^{2}}X^{2}-\frac{3}{16v_{0}^{4}}
-24Xt)}{4v_{0}^{2}(X^{3}+12t-\frac{3X}{4v_{0}^{2}})^{2}+9(X^{2}+
\frac{1}{4v_{0}^{2}})^{2}}, ~~X=x-6v_{0}^{2}t.
\label{2rs-dy}
\end{equation}
It can be  viewed as a double-traveling wave solution
\begin{equation}
v =v_{0}- \frac{12v_{0}(X^{4}+\frac{3}{2v_{0}^{2}}
XY-\frac{3}{16v_{0}^{4}})}
{4v_{0}^{2}(X^{3}-\frac{3Y}{4v_{0}^{2}})^{2}+9(X^{2}+\frac{1}{4v_{0}^{2}})^{2}},\label{2rs-XY}
\end{equation}
with
\begin{equation}
X=x-6v_{0}^{2}t, ~~Y=x-22v_{0}^{2}t.
\end{equation}
However, it does not show interactions of two single rational solutions.
Only one wave is left for large $x,t$ (see Fig.\ref{Fig-9}).
We re-depict Fig.\ref{Fig-9}(a) in Fig.\ref{Fig-10} by a density plot
so that we can see the wave top trace clearer.

\begin{figure}[!h]
\setlength{\unitlength}{1mm}
\begin{center}
\begin{picture}(00,40)
\put(-65,-5){\resizebox{!}{4.0cm}{\includegraphics{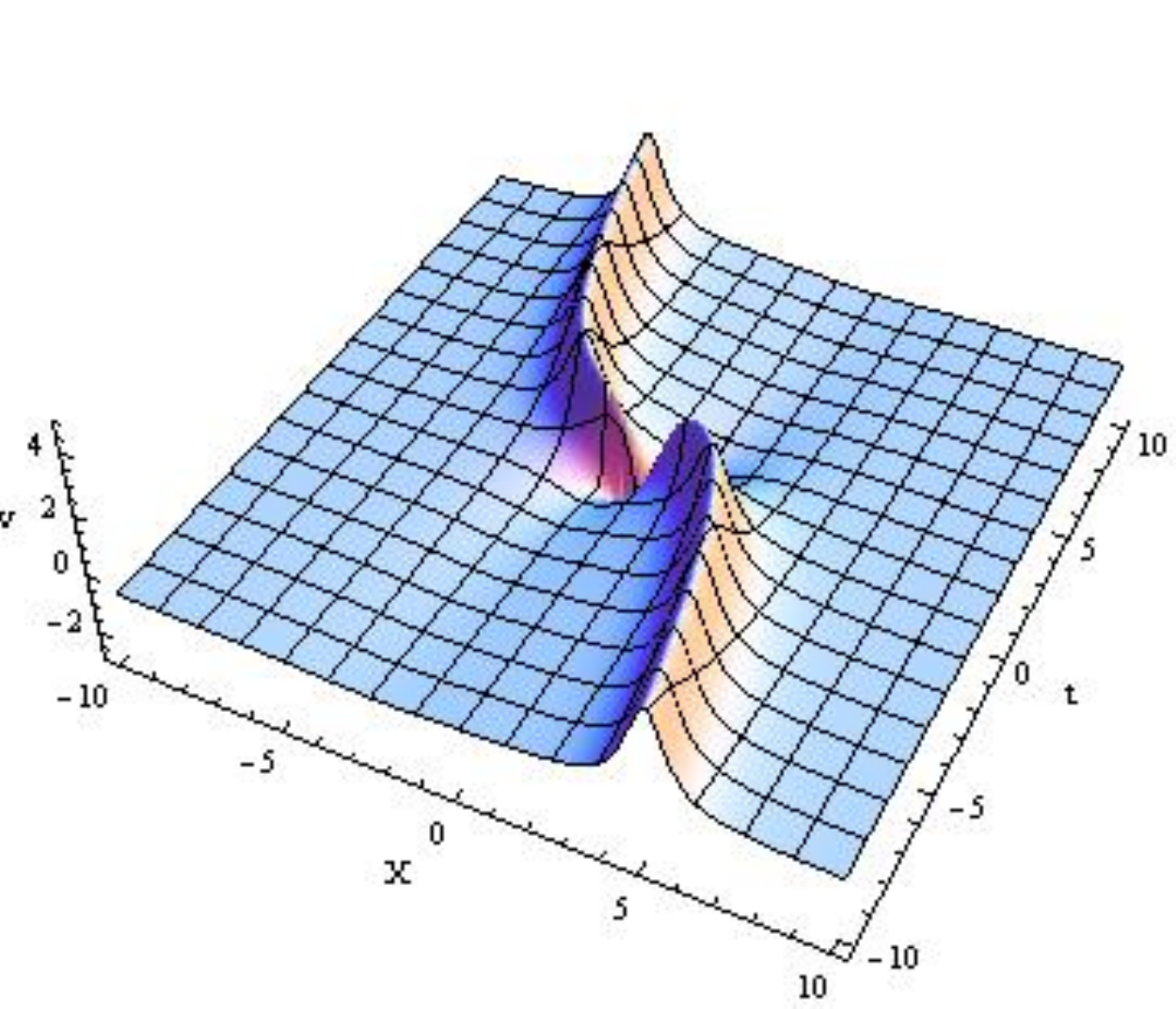}}}
\put(5,-5){\resizebox{!}{4.0cm}{\includegraphics{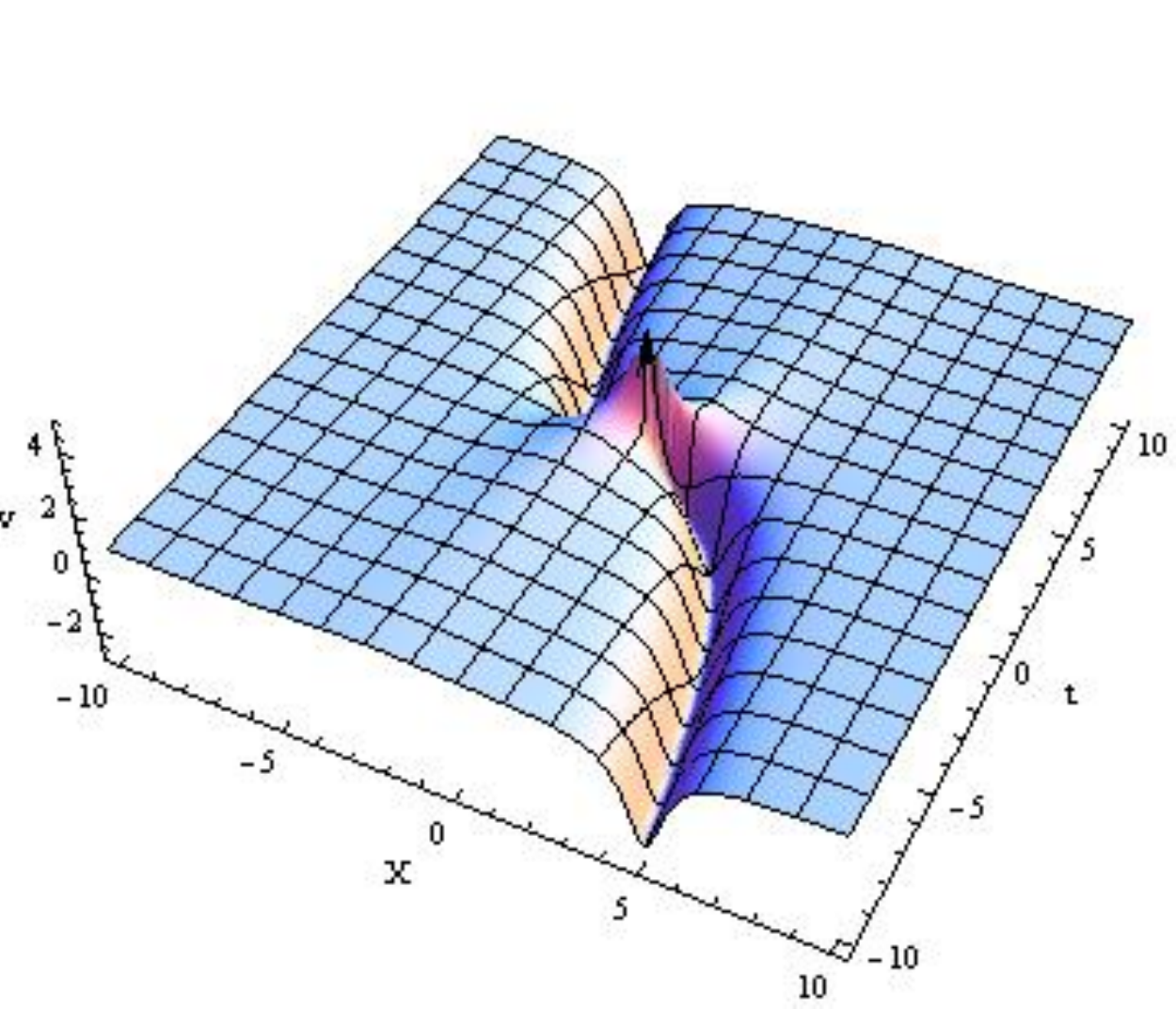}}}
\end{picture}
\end{center}
\vskip 1pt
\begin{minipage}{15cm}{\footnotesize
~~~~~~~~~~~~~~~~~~~~~~~~~~~~~~~~~~(a)~~~~~~~~~~~~~~~~~~~~~~~~~~~~~~~~~~~~~
~~~~~~~~~~~~~~~~~~~~~~~~(b)}
\end{minipage}
\caption{Shape and motion of the rational solution given by \eqref{2rs-dy} for
$v_0=-0.8$ in (a) and $v_0=0.8$ in (b).} \label{Fig-9}
\end{figure}

\begin{figure}[!h]
\setlength{\unitlength}{1mm}
\begin{center}
\begin{picture}(00,40)
\put(-64,-5){\resizebox{!}{4.0cm}{\includegraphics{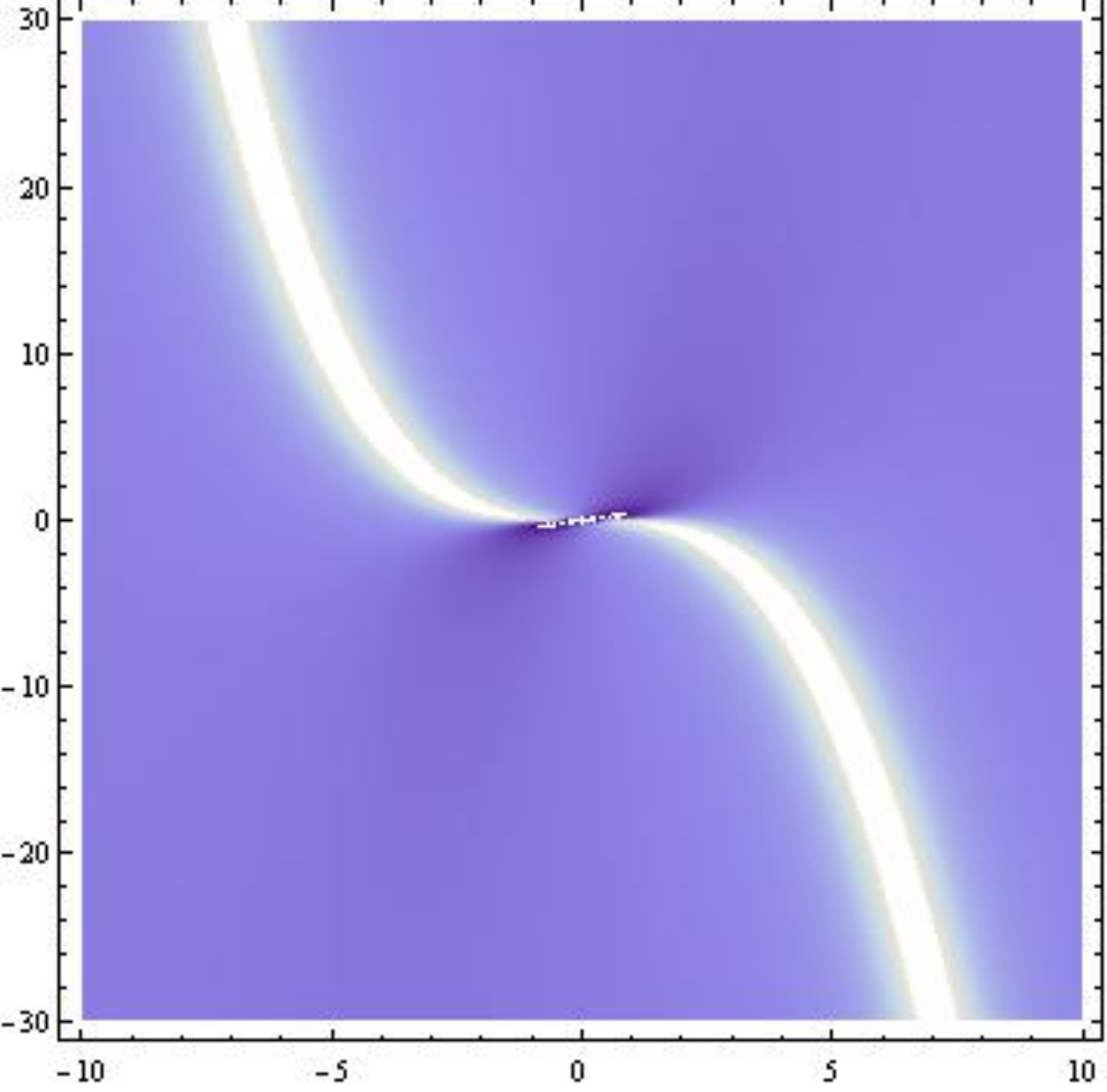}}}
\put(6,-5){\resizebox{!}{4.0cm}{\includegraphics{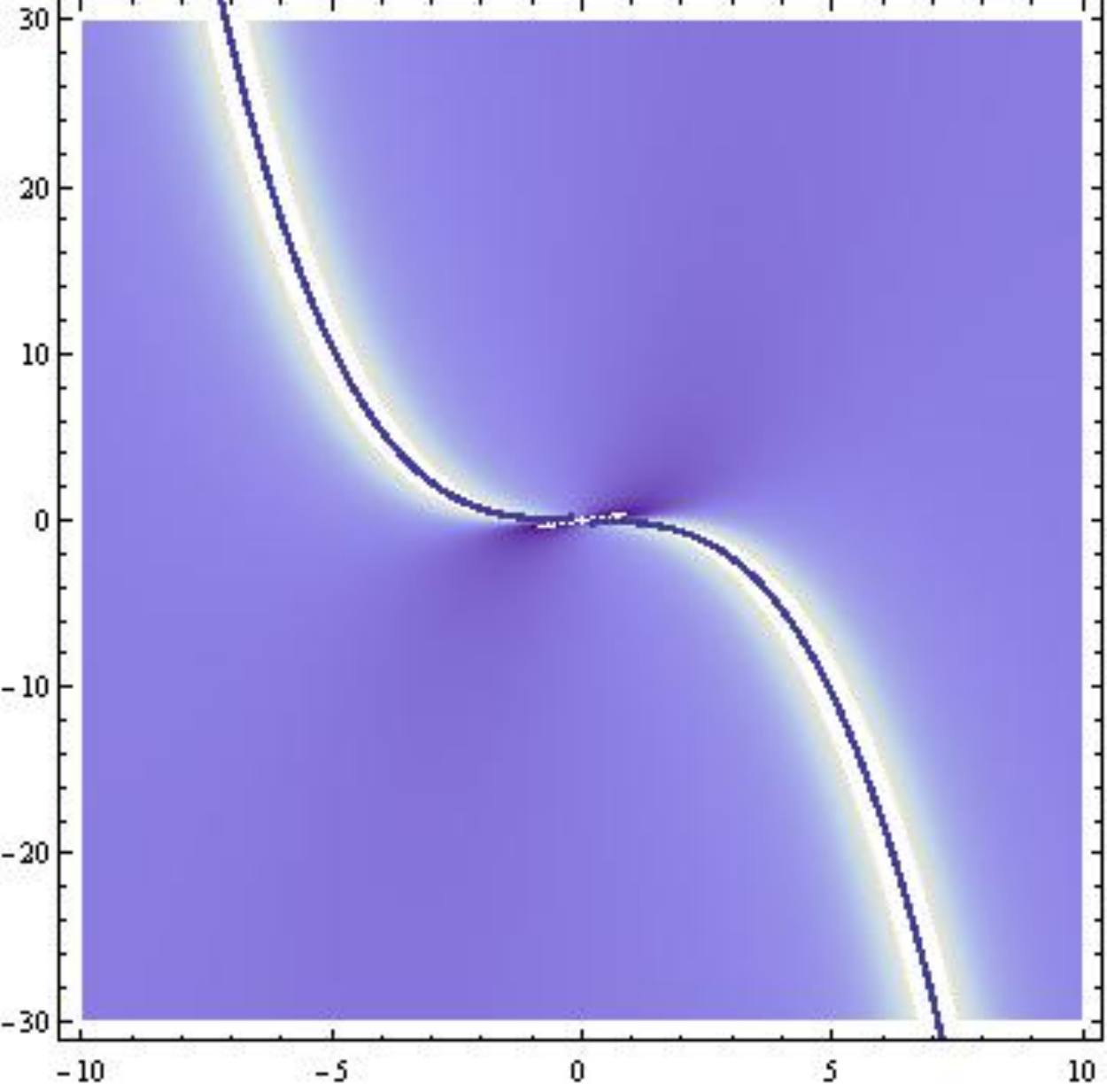}}}
\end{picture}
\end{center}
\vskip 1pt
\begin{minipage}{15cm}{\footnotesize
~~~~~~~~~~~~~~~~~~~~~~~~~~~~~~~~(a)~~~~~~~~~~~~~~~~~~~~~~~~~~~~~~~~~~~~
~~~~~~~~~~~~~~~~~~~~~~~~(b)}
\end{minipage}
\caption{(a): Density plot of Fig.\ref{Fig-9} (a) with $X\in[-10,10],~ t\in [-30,30]$.
(b): (a) overlapped by the trajectory curve given by \eqref{tract}.}
\label{Fig-10}
\end{figure}

To realize the asymptotic behavior analytically, we rewrite the solution \eqref{2rs-dy}
in the following  coordinates system
\begin{equation}
\big(X,~T=X^3+12t+\frac{3}{16v_{0}^{4}X}\big)
\end{equation}
and this gives
\begin{equation}
v =v_{0}
-\frac{12v_{0}(3X^{4}+\frac{3}{2v_{0}^{2}}X^{2}-2TX+\frac{3}{16v_{0}^{4}})}
{4v_{0}^{2}(T-\frac{3}{16v_{0}^{4}X}-\frac{3X}{4v_{0}^{2}})^{2}+9(X^{2}+\frac{1}{4v_{0}^{2}})^{2}}.
\label{2rs-XT}
\end{equation}
Then, by calculation it can be found that for given large $X$ the
wave \eqref{2rs-XT} has a single stationary point at $T=0$ where $v$
gets a local extreme value
\begin{equation}
v=\frac{v_0(1-12v_0^2 X^2)}{1+4 v_0^2 X^2},
\end{equation}
which  goes to $-3v_0$ as $X\to \pm\infty$.
Thus we can conclude that for large $X,t$ the wave asymptotically travels along the curve
\begin{equation}
T=X^3+12t+\frac{3}{16v_{0}^{4}X}=0
\label{tract}
\end{equation}
with amplitude $-3v_0$.
Fig.\ref{Fig-10}(b) displays a density plot overlapped by the above wave trajectory curve.

More details on the rational solutions to the mKdV equation can be
found in \cite{ZH-S-rational} where the rational solutions are
derived via bilinear B\"acklund transformation.

\section{Conclusions}

\subsection{Summation}
In the paper we reviewed the Wronskian solutions to the mKdV equation \eqref{mKdV} in terms of Wronskians.
When a solution is expressed through the Wronskian
\begin{equation}
f=f(\varphi)=|\widehat{N-1}|,
\label{nss-w-1}
\end{equation}
one needs to solve the finalized CES \eqref{mKdV-condition} together with \eqref{cc},
i.e.,
\begin{subequations}
\label{mKdV-condition-1}
\begin{align}
\varphi_{xx}=&\mathbb{A}\varphi, \label{mKdV-condition-a-1}\\
\varphi_{x}=&\mathbb{B}\bar{\varphi}, \label{mKdV-condition-b-1}\\
\varphi_{t}=&-4\varphi_{xxx},\label{mKdV-condition-c-1}
\end{align}
\end{subequations}
and
\begin{equation}
\mathbb{A}=\mathbb{B}\bar{\mathbb{B}}. \label{cc-1}
\end{equation}
$\mathbb{A}$ is the auxiliary matrix that we introduced to deal with
the complex operation in \eqref{mKdV-condition-b-1} and it works in
practice. As a result, with the help of $\mathbb{A}$ we solved the
above CES and then categorized the solutions to the mKdV equation
in terms of the canonical form of $\mathbb{A}$ (rather than canonical
form of $\mathbb{B}$). Solutions are categorized by solitons
(together with their limit case) and breathers (together with their
limit case). There are no rational solutions arising from
\eqref{mKdV-condition-1} because no rational solutions correspond to
zero eigenvalues of $\mathbb{A}$ while we need $|\mathbb{B}|\neq 0$
to finish Wronskian verification. To derive rational solutions for
the mKdV equation \eqref{mKdV}, we employed the Galilean transformed
equation, i.e., the KdV-mKdV equation \eqref{kdv-mkdv} which admits
rational solutions in Wronskian form. Then the rational solutions to
the mKdV equation can be recovered through the inverse
transformation. Dynamics of some obtained solutions was analyzed
and illustrated. Here, particularly, we would like to sum up a
typical characteristic of limit solitons: the wave trajectories
asymptotically follow  logarithm curves (combined with  linear
functions). This point is based on several examples we have
examined\cite{ZDJ-lim-rev}.

Obviously, through the Galilean transformation \eqref{GT},
all these obtained solutions of the  mKdV equation \eqref{mKdV} can easily be used for
the KdV-mKdV equation \eqref{kdv-mkdv} which often appears in physics contexts.
In fact, in the paper we do not differ them from each other.
In addition to the  KdV-mKdV equation,
our treatment to the complex operation in \eqref{mKdV-condition-1}
can also be applied to the sine-Gordon equation.

There are Miura transformations between the KdV equation and the mKdV equation (for both $\varepsilon=\pm1$).
For the mKdV($\varepsilon=-1$) equation the Miura transformation
provides a real map between solutions of the KdV equation and the mKdV equation($\varepsilon=-1$), c.f.\cite{Gesztesy-Tams-1991}.
However, when $\varepsilon=1$, i.e., for the mKdV equation \eqref{mKdV}, the Miura transformation
has to be complex. In more detail,
it maps  the real mKdV equation \eqref{mKdV} to a complex KdV equation.
That means we have had a nice determinant expression  for the complex KdV equation.
Further investigation about this will be considered separately.

\subsection{List of solutions}
Let us list out the obtained solutions and their corresponding basic Wronskian vectors.
Solutions to the mKdV equation \eqref{mKdV} can be given by
\begin{subequations}
\label{nss}
\begin{equation}
v=2\Big(\mathrm{arctan}\frac{F_2}{F_1}\Big)_{x}
=\frac{-2(F_{1,x}F_2-F_1F_{2,x})}{F_2^2+F_1^2},
\label{nss-s}
\end{equation}
 where
\begin{equation}
f=f(\varphi)=|\widehat{N-1}|=F_1+iF_2,~~F_1= \mathrm{Re}[f],~~F_2=\mathrm{Im}[f].
\label{nss-w}
\end{equation}
\end{subequations}
The available    Wronskian vectors are the following.
\begin{itemize}
\item{
\textbf{For soliton solutions:}
\begin{subequations}
\label{nss-phi-soliton}
\begin{equation}
\varphi=\varphi^{[s]}_{N}=(\varphi_{1},  \varphi_{2}, \cdots,  \varphi_{N})^{T},
\end{equation}
with
\begin{equation}
\varphi^{}_{j}= a_{j}^+  e^{\xi_{j}}+ i a_{j}^-
 e^{-\xi_{j}}, ~\xi_{j}=k_{j}x-4k_{j}^{3}t+\xi_{j}^{(0)},~
 a_{j}^+ , a_{j}^-, k_j, \xi_{j}^{(0)} \in \mathbb{R}.
\label{nss-phi-soliton-j}
\end{equation}
\end{subequations}
}
\item{\textbf{For limit solutions of solitons:}
\begin{subequations}
\label{nss-phi-soliton-lim}
\begin{equation}
{\varphi}=\varphi^{[ls]}_{N}(k_1)=\mathcal{A^+} \mathcal{Q}_{0}^{+}+ i \mathcal{A^-}
\mathcal{Q}_{0}^{-},~~\mathcal{A}^{\pm} \in \widetilde{G}_N(\mathbb{R}),
\end{equation}
with
\begin{equation}
 \mathcal{Q}^{\pm}_0=(\mathcal{Q}^{\pm}_{0, 0},
\mathcal{Q}^{\pm}_{0,1}, \cdots, \mathcal{Q}^{\pm}_{0, N-1})^T,~~
\mathcal{Q}^{\pm}_{0, s}=\frac{1}{s!}\partial^{s}_{k_1}e^{\pm
\xi_1},
\end{equation}
\end{subequations}
where $\xi_1$ is defined in \eqref{nss-phi-soliton-j}.
}
\item{
\textbf{For breather solutions:}
\begin{subequations}
\label{nss-phi-breather}
\begin{equation}
{\varphi}=\varphi^{[b]}_{2N}=(\varphi_{11}, \varphi_{12}, \varphi_{21}, \varphi_{22},
\cdots, \varphi_{N1}, \varphi_{N2})^{T},
\end{equation}
with
\begin{eqnarray}
\varphi_{j1}=&& a_{j}  e^{\xi_j}+b_{j} e^{-{\xi}_j}, ~~
\varphi_{j2}=\bar{a}_{j}  e^{\bar{\xi}_j} -\bar{b}_{j}
e^{-{\bar{\xi}_j}}, \\
\xi_j=&& k_{j}x- 4 k_j^3 t+ \xi_{j}^{(0)},~~a_{j}, b_{j},\xi_{j}^{(0)}  \in \mathbb{C}.
\label{nss-phi-breather-j}
\end{eqnarray}
\end{subequations}
}
\item{
\textbf{For limit solutions of breathers:}
\begin{subequations}
\label{nss-phi-breather-lim}
\begin{equation}
\varphi=\varphi^{[lb]}_{2N}(k_1)=(\varphi^+_{1, 1}, \varphi^-_{1,
2}, \varphi^+_{2, 1}, \varphi^-_{2, 2}, \cdots, \varphi^+_{N, 1},
\varphi^-_{N, 2})^{T},
\end{equation}
and the elements are given through
\begin{align}
\varphi^{+}&=(\varphi^+_{1, 1}, \varphi^+_{2, 1}, \cdots,
\varphi^+_{N,1})^{T}
=\mathcal {A}\mathcal{Q}^{+}_{0}+\mathcal{B}\mathcal {Q}^{-}_{0},\\
\varphi^{-}&=(\varphi^-_{1, 2}, \varphi^-_{2, 2}, \cdots,
\varphi^-_{N, 2})^{T} =\bar{\mathcal
{A}}\bar{\mathcal{Q}}^{+}_{0}-\bar{\mathcal{B}}\bar{\mathcal{Q}}^{-}_{0},
\end{align}
where $\mathcal{A}, \mathcal{B} \in \widetilde{G}_N(\mathbb{C})$,
\begin{equation}
 \mathcal{Q}^{\pm}_0=(\mathcal{Q}^{\pm}_{0, 0},
\mathcal{Q}^{\pm}_{0,1}, \cdots, \mathcal{Q}^{\pm}_{0, N-1})^T,~~
\mathcal{Q}^{\pm}_{0, s}=\frac{1}{s!}\partial^{s}_{k_1}e^{\pm
\xi_1},
\end{equation}
\end{subequations}
and $\xi_1$ is defined in \eqref{nss-phi-breather-j}.
}
\end{itemize}

We note that, thanks to the linear property of the CES \eqref{mKdV-condition},
one may also get mixed solutions by arbitrarily combining the above  vectors to be a new
Wronskian vector. For example, take
\begin{equation}
\varphi=\left(\begin{array}{c}
\varphi^{[s]}_{N_1}\\
\varphi^{[ls]}_{N_2}(k_{N_1+1})
\end{array}
\right).
\end{equation}
The related solution corresponds to the interaction
between $N_1$-soliton and a $(N_2-1)$-order limit-soliton  solutions.

Finally, for the  rational solution,
it is given by
\begin{subequations}
\label{rss}
\begin{equation}
v(x,t)=v_{0}-\frac{2(F_{1,X}F_2-F_1 F_{2,X})}{F_2^2+F_1^2},~~
X=x-6v_{0}^{2}t,~~v_0\neq 0\in \mathbb{R}, \label{rss-s}
\end{equation}
where still
\begin{equation}
f=f(\psi)=|\widehat{N-1}|=F_1+iF_2,~~F_1= \mathrm{Re}[f],~~F_2=\mathrm{Im}[f],
\label{rss-w}
\end{equation}
\end{subequations}
and the Wronskian is composed by
\begin{subequations}
\begin{equation}
\psi =(\psi_1,\psi_2,\cdots,\psi_N)^T,
\end{equation}
with
\begin{equation}
\psi_{j+1}= \frac{1}{(2j)!}\frac{\partial ^{2j} }{{\partial
k_1}^{2j}}\varphi_{1}\,\bigg|_{k_1=0},~~(j=0,1,\cdots,N-1),
\end{equation}
and
\begin{equation}
\varphi_{1}=\sqrt{2v_{0}+2ik_{1}}\,e^{\eta_1}+\sqrt{2v_{0}-2ik_{1}}\,
e^{-\eta_1},~ \eta_1=k_{1}X-4k_{1}^{3}t,~k_1\in \mathbb{R} .
\end{equation}
\end{subequations}

\section*{Acknowledgments}

The authors sincerely thank Prof. Gesztesy for kindly providing Refs.\cite{Gesztesy-Tams-1991,Gesztesy-Rmp-1991}.
This project is supported by the NSF of China (No. 11071157),
Specialized Research Fund for the Doctoral Program of Higher Education of China
(No. 20113108110002),
Shanghai Leading Academic Discipline Project (No. J50101) and
Postgraduate Innovation Foundation of Shanghai University (No.
SHUCX111027).


\appendix

\section{Proof of Theorem \ref{Th 2.1}}
\label{A:sec-1}
\begin{proof}
The compatibility of \eqref{cond-a} and \eqref{cond-b}, i.e., $\phi_{xt}=\phi_{tx}$, yields \eqref{W_T}.
Using \eqref{cond-a} one gets the complex conjugate form of $f$ as
\begin{equation}
\bar{f}=|\bar{B}(t)||-1, \widehat{N-2}|.\label{eqA1}
\end{equation}
Then the necessary derivatives of $f$ and $\bar{f}$  are presented as the following,
\begin{subequations}
\begin{align}
f_x=&|\widehat{N-2}, N|,\\
f_{xx}=&|\widehat{N-3}, N-1, N|+|\widehat{N-2}, N+1|,  \\
f_{xxx}=&|\widehat{N-4}, N-2, N-1, N|+|\widehat{N-2}, N+2|\nonumber\\
&+2|\widehat{N-3}, N-1, N+1|,\\
f_t=&-4(|\widehat{N-4}, N-2, N-1, N|-|\widehat{N-3}, N-1,
N+1|\nonumber\\
&+|\widehat{N-2}, N+2|)+\mbox{tr}(C(t))|\widehat{N-1}|,\label{f-t}
\end{align}
\label{fde}
\end{subequations}
and
\begin{subequations}
\begin{align}
\bar{f}_{x}=&|\bar{B}(t)||-1,\widehat{N-3}, N-1|,   \\
\bar{f}_{xx}=&|\bar{B}(t)|(|-1,\widehat{N-4}, N-2, N-1|+|-1,\widehat{N-3}, N|),  \\
\bar{f}_{xxx}=&|\bar{B}(t)|(|-1,\widehat{N-3}, N+1|+2|-1,\widehat{N-4}, N-2,N|\nonumber\\
&+|-1,\widehat{N-5},N-3,N-2, N-1|),\\
\bar{f}_t=&-4|\bar{B}(t)|(|-1,\widehat{N-5},,N-3,N-2, N-1|\nonumber\\
&-|-1,\widehat{N-4}, N-2,N|+|-1,\widehat{N-3},
N+1|)\nonumber\\
&+|\bar{B}(t)|_t|-1,\widehat{N-2}|+\mbox{tr}(C(t))|\bar{B}(t)||-1,\widehat{N-2}|.\label{f-bar-t}
\end{align}
\label{barfde}
\end{subequations}
Using the condition \eqref{cond-a} the complex conjugate of \eqref{f-t} is
\begin{align*}
\bar{f}_t=&-4|\bar{B}(t)|(|-1,\widehat{N-5},N-3,N-2, N-1|-|-1,\widehat{N-4}, N-2,N|\nonumber\\
&+|-1,\widehat{N-3},
N+1|)+\mbox{tr}(\bar{C}(t))|\bar{B}(t)||-1,\widehat{N-2}|,\label{f-bar-t}
\end{align*}
which should be same as \eqref{f-bar-t}.
This requires $|B(t)|_t=0$ and $\mathrm{tr}(C(t))\in \mathbb{R}(t)$, i.e., the condition \eqref{B-qiudao}.

Noting that $\phi_{xx}=B(t)\bar{B}(t)\phi$ and using Proposition \ref{Prop 2.1} with
$\Omega_{j,s}=\partial_x^2$, we find
\begin{align}
& \mbox{tr}(B(t)\bar{B}(t))|-1,\widehat{N-3},
N-1|=-|-1,\widehat{N-5},N-3, N-2, N-1|\nonumber\\
&~~~~~~~~~~~~~~~~~~~~~~~~~~~~~~~~~~~~~~~~~~~~~
+|-1,\widehat{N-3}, N+1|,\\
& \mbox{tr}(B(t)\bar{B}(t))|\widehat{N-2},N|=-|\widehat{N-4},
N-2,N-1,N|+|\widehat{N-2}, N+2|,\\
& \mbox{tr}(B(t)\bar{B}(t))|-1,\widehat{N-2}|=-|-1,\widehat{N-4}, N-2, N-1|+|-1,\widehat{N-3}, N|,\\
& \mbox{tr}(B(t)\bar{B}(t))|\widehat{N-1}|=-|\widehat{N-3},
N-1,N|+|\widehat{N-2}, N+1|.
\label{equalities}
\end{align}
Then, substituting \eqref{fde} and \eqref{barfde} into \eqref{blinear-mKdV1} and making use of \eqref{equalities}, we have
\begin{eqnarray*}
&& ~~~\bar{f_{t}}f-\bar{f}f_t+\bar{f}_{xxx}f-3\bar{f}_{xx}f_x+3\bar{f_x}f_{xx}-\bar{f}f_{xxx}\nonumber\\
&&=6|\bar{B}(t)|\big(-|-1,\widehat{N-3}, N+1||\widehat{N-1}|-|-1,\widehat{N-4}, N-2, N-1||\widehat{N-2}, N|\nonumber\\
&&~~
+|\widehat{N-4}, N-2,  N-1,N||-1,\widehat{N-2}|-|\widehat{N-3}, N-1, N+1||-1,\widehat{N-2}|\nonumber\\
&&~~+|\widehat{N-2}, N+1||-1,\widehat{N-3}, N-1|-|-1,\widehat{N-4}, N-2, N||\widehat{N-1}|\big),
\end{eqnarray*}
which is zero in the light of Proposition \ref{Prop 2.2}. Similarly, one can prove \eqref{blinear-mKdV2}. Thus the proof is completed.
\end{proof}

\section{Eigen-polynomial  of $\mathbb{A}=\mathbb{B}\bar{\mathbb{B}}$}
\label{A:sec-2}

We prove the Proposition \ref{Prop 3.1} through the following 2
Lemmas.

\begin{lemma}\label{lem:A-1}
For two arbitrary $N$th-order complex matrices $A$ and $B$,
\begin{eqnarray}
\mathrm{det}(\lambda I_N-AB)=\mathrm{det}(\lambda I_N-BA),
\label{lem}
\end{eqnarray}
where $I_N$ is the $N$th-order unit matrix.
\end{lemma}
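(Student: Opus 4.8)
The plan is to prove the equivalent statement that $AB$ and $BA$ have the same characteristic polynomial. I would first settle the generic case in which $A$ is non-singular. There one has $BA=A^{-1}(AB)A$, so $AB$ and $BA$ are similar; similar matrices share a characteristic polynomial, and this is exactly the assertion \eqref{lem}. Hence the whole substance of the lemma is to remove the invertibility hypothesis on $A$.

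To extend the identity to an arbitrary complex matrix $A$, I would argue by continuity. For each fixed $\lambda$ and fixed $B$, both sides of \eqref{lem} are polynomial (hence continuous) functions of the $N^2$ entries of $A$. Consider the perturbation $A_{\varepsilon}=A+\varepsilon I_N$. Since $\det(A_{\varepsilon})$ is a nonzero polynomial in $\varepsilon$, it vanishes for only finitely many $\varepsilon$, so $A_{\varepsilon}$ is invertible for all sufficiently small $\varepsilon\neq 0$. For every such $\varepsilon$ the first step gives
\begin{equation}
\det(\lambda I_N-A_{\varepsilon}B)=\det(\lambda I_N-BA_{\varepsilon}).
\end{equation}
Letting $\varepsilon\to 0$ and using continuity of the determinant then yields \eqref{lem} for the original $A$.

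A second, limit-free route fits the determinant-based style of the paper: compute the $2N\times 2N$ determinant
$\det\!\left(\begin{smallmatrix}\lambda I_N & A\\ B & I_N\end{smallmatrix}\right)$
in two ways via Schur complements. Eliminating the $(2,2)$ block $I_N$ gives $\det(\lambda I_N-AB)$, while for $\lambda\neq 0$ eliminating the invertible $(1,1)$ block $\lambda I_N$ gives $\lambda^N\det(I_N-\lambda^{-1}BA)=\det(\lambda I_N-BA)$; since the two sides are polynomials in $\lambda$ agreeing on the cofinite set $\lambda\neq 0$, they coincide identically. Either way, the only subtle point—the main obstacle—is precisely the singular case, where the clean similarity argument breaks down; it is disposed of by the density/continuity step or, equivalently, by the $\lambda\neq 0$ Schur reduction followed by a polynomial-identity argument.
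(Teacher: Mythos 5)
Your proposal is correct, but it proceeds along a genuinely different route from the paper. The paper's proof is purely algebraic and limit-free: it writes $A=P\left(\begin{smallmatrix}I_r&0\\0&0\end{smallmatrix}\right)Q$ with $P,Q$ nonsingular (rank normal form), conjugates to move $P$ across, and then observes that $\left(\begin{smallmatrix}I_r&0\\0&0\end{smallmatrix}\right)QBP$ and $QBP\left(\begin{smallmatrix}I_r&0\\0&0\end{smallmatrix}\right)$ have the same upper-left $r\times r$ block $B_{11}$, so both characteristic polynomials equal $\lambda^{N-r}\det(\lambda I_r-B_{11})$; this handles singular $A$ head-on and works verbatim over any field. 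Your first argument (similarity when $A$ is invertible, then perturbation $A_\varepsilon=A+\varepsilon I_N$ and continuity) is sound over $\mathbb{C}$ -- the key facts that $\det(A_\varepsilon)$ is a nonzero polynomial in $\varepsilon$ and that both sides of \eqref{lem} depend polynomially on the entries of $A$ are correctly invoked -- but it trades the paper's explicit block computation for a density/limit step, so it is tied to working over a field where such topological (or Zariski-density) reasoning is available. Your second argument, evaluating $\det\!\left(\begin{smallmatrix}\lambda I_N & A\\ B & I_N\end{smallmatrix}\right)$ by the two Schur complements and removing the restriction $\lambda\neq 0$ by a polynomial-identity argument, is also correct and is arguably the cleanest of the three: it is limit-free like the paper's proof but avoids the rank decomposition entirely. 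All three proofs correctly identify and dispose of the only genuine obstacle, the singular case; they simply do so by different mechanisms.
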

\begin{proof}
Assuming that $\mbox{rank}(A)=r$, then there exist $N$th-order non-singular matrices $P$ and $Q$ such that
\begin{equation}
A =P\left ( \begin{array}{ll}
                I_r & 0\\
                0   & 0
                \end{array}
        \right )Q.
\label{A-P-Q}
\end{equation}
Thus
\begin{eqnarray*}
\mbox{det}(\lambda I_N-AB)&=&\mbox{det}(\lambda I_N-P\left (
\begin{array}{ll}
                I_r & 0\\
                0   & 0
                \end{array}
        \right )QB)\nonumber\\
&=&\mbox{det}(P^{-1}(\lambda I_N-P\left (
\begin{array}{ll}
                I_r & 0\\
                0   & 0
                \end{array}
 \right )QB)P)\nonumber\\
&=&\mbox{det}(\lambda I_N-P^{-1}(P\left (
\begin{array}{ll}
                I_r & 0\\
                0   & 0
                \end{array}
 \right )QB)P)\nonumber\\
&=&\mbox{det}(\lambda I_N-\left (
\begin{array}{ll}
                I_r & 0\\
                0   & 0
                \end{array}
 \right )QBP)
        .\label{p-1}
\end{eqnarray*}
In a similar way we have
\begin{eqnarray}
&& \mbox{det}(\lambda I_N-BA)=\mbox{det}(\lambda I_N-QBP
\left (
\begin{array}{ll}
                I_r & 0\\
                0   & 0
                \end{array}
 \right )).
\label{p-2}
\end{eqnarray}
If we rewrite the matrix $QBP$ into the following block matrix form with same structure as
$\Big(
\begin{array}{ll}
                I_r & 0\\
                0   & 0
                \end{array}
 \Big),
$
\begin{eqnarray}
&& QBP=\left (
\begin{array}{ll}
                B_{11} & B_{12}\\
                B_{21}   & B_{22}
                \end{array}
 \right ),\label{QBP}
\end{eqnarray}
then we have
\begin{eqnarray}
\left (
\begin{array}{ll}
                I_{r} & 0\\
                0  & 0
                \end{array}
 \right )QBP=\left (
\begin{array}{ll}
                B_{11} & B_{12}\\
                0  & 0
                \end{array}
 \right ),~~QBP\left (
\begin{array}{ll}
                I_{r} & 0\\
                0  & 0
                \end{array}
 \right )=\left (
\begin{array}{ll}
                B_{11} & 0\\
               B_{21} & 0
                \end{array}
 \right ),\label{QBP-Ir}
\end{eqnarray}
which further means
\begin{eqnarray}
&& \mbox{det}(\lambda I_N-AB)=\lambda^{N-r}\mbox{det}(\lambda
I_r-B_{11})=\mbox{det}(\lambda I_N-BA).\label{equality}
\end{eqnarray}
We complete the proof.
\end{proof}

\begin{lemma}
\label{lem:A-2} Assuming that $B$ is an arbitrary $N$th-order complex matrix and
$\bar{B}$ is its complex conjugate, then $\mbox{det}(\lambda
I-\bar{B}B)$ is a polynomial of $\lambda$ with real coefficients.
\end{lemma}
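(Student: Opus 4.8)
The plan is to exploit the elementary characterization that a polynomial $p(\lambda)$ has real coefficients if and only if $\overline{p(\bar\lambda)}=p(\lambda)$ identically. Writing $p(\lambda)=\det(\lambda I_N-\bar{B}B)$, I would first compute $\overline{p(\bar\lambda)}$ and show that it coincides with $p(\lambda)$, by combining two facts: the behaviour of the determinant under complex conjugation, and the characteristic-polynomial identity already established in Lemma \ref{lem:A-1}.

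First I would record that for any complex matrix $M$ one has $\overline{\det M}=\det\bar{M}$, which is immediate because $\det M$ is a polynomial in the entries of $M$ with integer (hence real) coefficients, so conjugating the entries conjugates the value. Applying this with $M=\bar\lambda I_N-\bar{B}B$ gives
\[
\overline{p(\bar\lambda)}=\overline{\det(\bar\lambda I_N-\bar{B}B)}=\det(\lambda I_N-B\bar{B}),
\]
where I have used $\overline{\bar\lambda I_N}=\lambda I_N$ and $\overline{\bar{B}B}=B\bar{B}$.

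The key step is then to invoke Lemma \ref{lem:A-1} with the pair of matrices $B$ and $\bar{B}$ (playing the roles of $A$ and $B$ there, so that $AB=B\bar{B}$ and $BA=\bar{B}B$), which yields $\det(\lambda I_N-B\bar{B})=\det(\lambda I_N-\bar{B}B)=p(\lambda)$. Chaining the equalities produces $\overline{p(\bar\lambda)}=p(\lambda)$ for all $\lambda$; writing $p(\lambda)=\sum_k c_k\lambda^k$ and comparing coefficients term by term then forces $\bar{c}_k=c_k$, i.e.\ every coefficient of $p$ is real, as claimed.

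I do not expect a genuine obstacle here: the argument is a short formal manipulation once Lemma \ref{lem:A-1} is available, and the only point needing a line of justification is the interchange of conjugation with the determinant, which follows from the polynomial nature of $\det$. The reason Lemma \ref{lem:A-1} is indispensable is precisely that $\bar{B}B$ and $B\bar{B}$ need not coincide as matrices, yet the lemma guarantees they share the same characteristic polynomial, which is exactly what closes the loop.
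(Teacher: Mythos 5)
Your proof is correct and follows essentially the same route as the paper: both establish $\overline{p(\bar\lambda)}=\det(\lambda I_N-B\bar{B})$ by conjugating the determinant entrywise and then invoke Lemma \ref{lem:A-1} with $A=\bar{B}$ to identify this with $p(\lambda)$, forcing all coefficients to be real. No gaps.
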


\begin{proof}
Write
\begin{eqnarray}
f(\lambda)=\mbox{det}(\lambda
I_N-\bar{B}B)=a_N\lambda^N+\cdots+a_1\lambda+a_0.
\label{poly}
\end{eqnarray}
Then using Lemma \ref{lem:A-1} with  $A=\bar{B}$  we have
\begin{eqnarray}
f(\lambda)=\mbox{det}(\lambda I_N-\bar{B}B)=\mbox{det}(\lambda
I_N-B\bar{B})=\overline{\mbox{det}(\bar{\lambda}
I_N-\bar{B}B)}=\overline{f(\bar{\lambda})},\label{poly-real}
\end{eqnarray}
which means all the coefficients $\{a_j\}$ are real.
\end{proof}

\section{Discussions on the trivial solutions to the CES \eqref{mKdV-condition}}

Let the square matrices $\mathbb{A}, \mathbb{B}$ follow the relation
\begin{equation}
\mathbb{A}=\mathbb{B}\bar{\mathbb{B}}.
\label{A-B}
\end{equation}

We start from the $2\times 2$ case. Noting that
the product of all the eigenvalues of $\mathbb{A}$ is non-negative,
in the following we first look at
\begin{equation}
\mathbb{A}=\left ( \begin{array}{ll}
                -k^2_1 & 0\\
                0  & -k^2_2
                \end{array}
        \right ),~~~ k_1\neq k_2\neq 0,~~ k_1,k_2\in \mathbb{R},
\end{equation}
and suppose
\begin{equation}
\mathbb{B}=\left ( \begin{array}{ll}
                a & b\\
                c & d
                \end{array}
        \right )
\end{equation}
with undetermined $a,b,c,d\in \mathbb{C}$.
However, in this case it can be found that the matrix relation \eqref{A-B} does not have any solutions unless $k_1^2=k_2^2$.
So next we turn to consider
\begin{eqnarray}
\mathbb{A}=\left ( \begin{array}{ll}
                -k^2 & 0\\
                0  & -k^2
                \end{array}
        \right ),~~k\in \mathbb{R}.
\label{A-1}
\end{eqnarray}
In this case, 
the equation \eqref{A-B} admits a non-diagonal matrix solution
$\mathbb{B}$ as
\begin{eqnarray}
\mathbb{B}=\left ( \begin{array}{cc}
                -d & -\frac{d^2+k^2}{c}\\
                c & d
                \end{array}
        \right )e^{i\theta}, ~~c \ne 0, ~ ~c,d,\theta\in \mathbb{R}. \label{B-1}
\end{eqnarray}
We note that such a $\mathbb{B}$ does not lead to any nontrivial
solutions to the mKdV equation. In fact, in the CES
\eqref{mKdV-condition}, the general solution to the equation set
\eqref{mKdV-condition-a} and \eqref{mKdV-condition-c} is
\begin{subequations}\label{C-6}
\begin{equation}
\varphi=\left(\begin{array}{c}
\varphi_1\\
\varphi_2
\end{array}
\right)=H \left(\begin{array}{c}
e^{i\xi}\\
e^{-i\xi}
\end{array}
\right)
\end{equation}
with arbitrary matrix $H \in \mathbb{C}_{2\times 2}$ and
\begin{equation}
\xi=kx+4k^3t+\xi^{(0)},~~ k,\xi^{(0)}\in \mathbb{R}.
\end{equation}
\end{subequations}
However, no matter what condition the matrix $H$ should satisfy under the equation \eqref{mKdV-condition-b},
the Wronskian
\[f(\varphi)=|H|f((e^{i\xi},e^{-i\xi})^T)\]
is always a constant,
which leads to a trivial solution to the mKdV equation.
In the case of the $N\times N$ matrix ($N$ is even)
\[A=\mathrm{diag}(-k^2,-k^2,\cdots,-k^2),~~k\in \mathbb{R},\]
similar to \eqref{C-6}, the general solution to \eqref{mKdV-condition-a} and \eqref{mKdV-condition-c} is
\[\phi=H\times (e^{i\xi},e^{-i\xi},0,0,\cdots,0)^T,~~H\in \mathbb{C}_{N\times N},\]
which leads to a zero Wronskian $f(\varphi)$.

With these discussions we can conclude that
for the CES \eqref{mKdV-condition} with an $N\times N$ matrix
\begin{equation}
\mathbb{A}=\mathrm{diag}(\alpha,\alpha,\cdots,\alpha),~~\alpha\in \mathbb{R},
\end{equation}
the possible solution $\varphi$ to \eqref{mKdV-condition}
composes a trivial Wronskian $f(\varphi)$.

\section{Proof for Theorem \ref{Th 4.1}}\label{A:sec-4}
\begin{proof}
The parameter $v_0$ will lead to complicated expressions for $\bar{f}$ and its derivatives.
For simplification let us introduce the notation $|\,\cdot\,|_j$ where the subscript $j$ indicates
the absence of the $\phi^{(j)}$ column\cite{Nimmo-Freeman-JPA}, for example,
\[|\h{N}| _{j}=|\phi ^{(0)},\cdots, \phi ^{(j-1)},\phi ^{(j+1)},\cdots,\phi ^{(N)}|.\]
Derivatives of $f$ have already given in \eqref{fde}.
For $\bar{f}$,  using the CES \eqref{cond} we can reach
\begin{subequations}
\begin{align*}
\bar{f}=&|B^{-1}(t)|\sum\limits_{j = 0}^{N} {( -{v_0}} {)^j}{i^{N-j}}{| {\widehat N }|_j},\\
\bar{f}_{X}=&|B^{-1}(t)|\big(\sum\limits_{j = 0}^{N - 1}{( - {v_0}})^{j+1}{i^{N-j-1}}{|{\widehat N} |_j}
+ \sum\limits_{j = 0}^{N - 1} {( - {v_0}} {)^j}{i^{N - j}}{| {\widehat {N - 1} ,N + 1} |_j}\big),\\
\bar{f}_{XX}=&|B^{-1}(t)|\big(( - {v_0})^N| {\widehat {N - 2} ,N + 1}|+ \sum\limits_{j = 0}^{N - 1} {( - {v_0}}
)^j{i^{N - j}}{| {\widehat {N - 1},N + 2}|_j}   \\
&+ \sum\limits_{j = 0}^{N - 2} {( - {v_0}} {)^j}{i^{N - j}}{|
{\widehat {N - 2} ,N,N + 1}|_j} +  \sum\limits_{j = 0}^{N - 2} {( - {v_0}} {)^{j + 2}}{i^{N - j - 2}}{|{\widehat N}|_j} \nonumber \\
&+2\sum\limits_{j =
0}^{N - 2} {( - {v_0}} {)^{j + 1}}{i^{N - j - 1}}{|{\widehat{N-1} ,N
+ 1}|_j}\big),
\end{align*}
\begin{align*}
\bar{f}_{XXX}=&|B^{-1}(t)|\big(3\sum\limits_{j = 0}^{N -
3} {( - {v_0}})^{j + 2}{i^{N - j - 2}}{| {\widehat {N - 1} ,N +
1} |_j}
+ ( - {v_0})^N| {\widehat {N - 2} ,N + 2}| \nonumber \\
&+3\sum\limits_{j = 0}^{N - 2} {( - {v_0}} {)^{j + 1}}{i^{N - j - 1}}{| {\widehat {N - 1} ,N + 2} |_j}
+2{( - {v_0})^N}| {\widehat {N - 3} ,N - 1,N + 1} | \nonumber \\
& +\sum\limits_{j = 0}^{N - 1} {( - {v_0} )^j}{i^{N - j}}{| {\widehat {N - 1} ,N + 3} |_j}
+2\sum\limits_{j = 0}^{N - 2} {( - {v_0}} {)^j}{i^{N - j}}{| {\widehat {N - 2} ,N,N + 2} |_j} \nonumber \\
&+3\sum\limits_{j = 0}^{N - 3} {( - {v_0}} {)^{j + 1}}{i^{N -
j-1}}{| {\widehat {N - 2} ,N,N + 1} |_j} +2{( - {v_0})^{N - 1}}i| {\widehat {N - 3} ,N,N + 1} | \nonumber \\
&+\sum\limits_{j = 0}^{N - 3} {( - {v_0}} {)^j}{i^{N - j}}{| {\widehat {N - 3} ,N - 1,N,N + 1} |_j}
+\sum\limits_{j = 0}^{N - 3} {( - {v_0}} {)^{j + 3}}{i^{N - j - 3}}{| {\widehat N } |_j} \big),
\end{align*}
\begin{align*}
\bar{f}_t=&-4|B^{-1}(t)|{\big(\sum\limits_{j = 0}^{N -
1}{(-{v_0}})^j}{i^{N - j}}{| {\widehat {N - 1} ,N + 3} |_j}
+\sum\limits_{j = 0}^{N - 3} {( - {v_0}} {)^{j + 3}}{i^{N - j - 3}}{|
{\widehat N } |_j} \nonumber\\
&+ \sum\limits_{j = 0}^{N - 3} {( - {v_0}} {)^j}{i^{N - j}}{| {\widehat {N - 3} ,N - 1,N,N + 1} |_j}
+( - {v_0})^N| {\widehat {N - 2} ,N + 2} | \\
& - \sum\limits_{j = 0}^{N - 2} {( - {v_0}} {)^j}{i^{N - j}}{| {\widehat {N - 2} ,N,N + 2} |_j}
  - {( - {v_0})^{N -
1}}i| {\widehat {N - 3} ,N,N + 1} | \nonumber \\
  & - {( - {v_0})^N}| {\widehat {N - 3} ,N - 1,N + 1} |\big)+
\mathrm{tr}({C}(t))|{{B^{-1}}(t)}|\sum\limits_{j = 0}^N {( - {v_0}}{)^j}{i^{N - j}}{| {\widehat N } |_j}.
\end{align*}
\end{subequations}
Using the condition \eqref{cond} the complex conjugate of \eqref{f-t} should be same as the above $\bar{f}_t$.
This requires $\mathrm{tr}(C(t))\in \mathbb{R}(t)$, i.e., the condition \eqref{tr-Ct}.

Besides, noting that $\phi_{XX}=(B(t)\bar{B}(t)-v_{0}^2I_N)\phi$ and
using Proposition \ref{Prop 2.1} with $\Omega_{j,s}=\partial_X^2$,
we can have the following identities:
\begin{align*}
\mbox{tr}(B(t)\bar{B}(t)-v_0^2I_N){| {\widehat N } |_j}=&{| {\widehat {N - 1} ,N + 2} |_j} - {| {\widehat {N - 2} ,N,N + 1} |_j}\\
&- {| {\widehat N } |_{j - 2}},~~~~~~~~~~~~~~~~(j = 0,1, \cdots ,N - 2), \\
\mbox{tr}(B(t)\bar{B}(t)-v_0^2I_N){| {\widehat {N - 1} ,N + 1} |_j} = &{| {\widehat {N - 1} ,N + 3} |_j} - {| {\widehat {N - 3} ,N - 1,N,N + 1} |_j} \nonumber\\
&- {| {\widehat {N - 1} ,N + 1} |_{j - 2}},~~(j = 0,1, \cdots ,N - 3), \\
\mbox{tr}(B(t)\bar{B}(t)-v_0^2I_N)| {\widehat {N - 1} } | =  &- | {\widehat {N - 3} ,N - 1,N} | + | {\widehat {N - 2} ,N + 1} |, \\
\mbox{tr}(B(t)\bar{B}(t)-v_0^2I_N)| {\widehat {N - 2} ,N} | = & - |
{\widehat {N - 4} ,N - 2,N - 1,N} | + | {\widehat {N - 2} ,N + 2} |.
\end{align*}
With these results and \eqref{fde} in hand, for \eqref{eq4a} we have
\begin{align*}
&\bar{f_{t}}f-\bar{f}f_t+\bar{f}_{XXX}f-3\bar{f}_{XX}f_X+3\bar{f_X}f_{XX}-\bar{f}f_{XXX}\\
=&6{( - {v_0})^{N - 1}}i(| {\widehat {N - 3} ,N,N + 1} || {\widehat
{N - 1} } | - | {\widehat {N - 2} ,N} || {\widehat {N - 3} ,N - 1,N
+ 1} |\\
&+ | {\widehat {N - 2} ,N + 1} || {\widehat {N - 3} ,N - 1,N} |)\label{v0-WP-2} \\
=& 0 .
\end{align*}
Similarly, one can prove \eqref{eq4b}.
\end{proof}

\end{document}